\title{Fundamental performance limits for ideal decoders in high-dimensional linear inverse problems}
\author{Anthony Bourrier \and  Mike E. Davies \and Tomer Peleg \and Patrick P\'erez \and R\'emi Gribonval}
\newtheorem{theorem}{Theorem}
\newtheorem{lemma}{Lemma}
\newtheorem{proposition}{Proposition}
\newtheorem{remark}{Remark}
\begin{document}

\def\d{\Delta}
\def\dd{\Delta_{\delta}}
\def\deltap{\Delta^{\prime}}
\def\dde{\Delta_{\delta,\eps}}
\def\ds{d_{\Sigma}}
\def\e{\mathbb{E}}
\def\n{\mathbb{N}}
\def\r{\mathbb{R}}
\def\rd{\r^d}
\def\rn{\r^n}
\def\rm{\r^m}
\def\rk{\r^k}
\def\lo{L^1(\rn)}
\def\ba{\mbf{a}}
\def\be{\mbf{e}}
\def\bf{\mbf{f}}
\def\bg{\mbf{g}}
\def\bh{\mbf{h}}
\def\balpha{\boldsymbol\alpha}
\def\bbeta{\boldsymbol\beta}
\def\bgamma{\boldsymbol\gamma}
\def\bxi{\boldsymbol\xi}
\def\bm{\boldsymbol\mu}
\def\bn{\boldsymbol\nu}
\def\bo{\boldsymbol\omega}
\def\bop{\bo^{\prime}}
\def\bp{\mbf{p}}
\def\bq{\mbf{q}}
\def\bu{\mbf{u}}
\def\bv{\mbf{v}}
\def\bw{\mbf{w}}
\def\bx{\mbf{x}}
\def\bxp{\mbf{x^{\prime}}}
\def\by{\mbf{y}}
\def\bz{\mbf{z}}
\def\bxo{\bx_{opt}}
\def\bxs{\bx_{\Sigma}}
\def\bA{\mbf{A}}
\def\bB{\mbf{B}}
\def\bC{\mbf{C}}
\def\bD{\mbf{D}}
\def\bE{\mbf{E}}
\def\bI{\mbf{I}}
\def\Bt{\mathcal{B}_2}
\def\bL{\mbf{L}}
\def\bS{\mbf{S}}
\def\bM{\mbf{M}}
\def\bN{\mbf{N}}
\def\bU{\mbf{U}}
\def\bV{\mbf{V}}
\def\bX{\mbf{X}}
\def\bMt{\wtilde{\bM}}
\def\bP{\mbf{P}}
\def\pn{p_{\N}}
\def\pnt{p_{\Nt}}
\def\pno{p_{\No}}
\def\B{\mathcal{B}}
\def\H{\mathcal{H}}
\def\N{\mathcal{N}}
\def\Nt{\wtilde{\N}}
\def\X{\mathcal{X}}
\def\No{\mathcal{N}^{\perp}}
\def\P{\mathcal{P}}
\def\S{\mathcal{S}}
\def\Sigmap{\Sigma^{\prime}}
\def\sms{\Sigma-\Sigma}
\def\eps{\epsilon}
\def\im{\mathrm{im}}
\def\cm{\mathbb{C}^m}
\def\om{\omega}
\def\ie{\emph{i.e.}}
\def\iid{\emph{i.i.d.}}
\def\wrt{\emph{w.r.t. }}

\newcommand{\mbf}[1]{\ensuremath{\mathbf{#1}}}
\newcommand{\dy}[1]{\ensuremath{d_E(#1)}}
\newcommand{\dz}[1]{\ensuremath{d_F(#1)}}
\newcommand{\dM}[1]{\ensuremath{d_{M}(#1)}}
\newcommand{\norm}[1]{\ensuremath{\|#1\|}}
\newcommand{\enorm}[1]{\ensuremath{\|#1\|_E}}
\newcommand{\fnorm}[1]{\ensuremath{\|#1\|_F}}
\newcommand{\gnorm}[1]{\ensuremath{\|#1\|_G}}
\newcommand{\spnorm}[1]{\ensuremath{\|#1\|_{\Sigmap}}}
\newcommand{\sknorm}[1]{\ensuremath{\|#1\|_{\Sigma_{k}}}}
\newcommand{\snorm}[1]{\ensuremath{\|#1\|_{\Sigma}}}
\newcommand{\tracenorm}[1]{\ensuremath{\|#1\|_*}}
\newcommand{\nx}[1]{\ensuremath{\|#1\|_G}}
\newcommand{\ny}[1]{\ensuremath{\|#1\|_E}}
\newcommand{\nz}[1]{\ensuremath{\|#1\|_F}}
\newcommand{\nfrob}[1]{\ensuremath{\|#1\|_F}}
\newcommand{\nzero}[1]{\ensuremath{\|#1\|_0}}
\newcommand{\ntwo}[1]{\ensuremath{\|#1\|_2}}
\newcommand{\none}[1]{\ensuremath{\|#1\|_1}}
\newcommand{\ninf}[1]{\ensuremath{\|#1\|_{\infty}}}
\newcommand{\nM}[1]{\ensuremath{\|#1\|_{M}}}
\newcommand{\dec}[1]{\ensuremath{\Delta(#1)}}
\newcommand{\ps}[2]{\ensuremath{\langle #1,#2\rangle}}
\renewcommand{\inf}[2]{\ensuremath{\underset{#1}{\mathrm{inf}}\quad #2}}
\renewcommand{\sup}[2]{\ensuremath{\underset{#1}{\mathrm{sup}}\quad #2}}
\newcommand{\argmin}[2]{\ensuremath{\underset{#1}{\mathrm{argmin}}\quad #2}}
\newcommand{\wtilde}[1]{\ensuremath{\widetilde{#1}}}

\newcommand{\revision}[1]{#1}

\newcommand\blfootnote[1]{%
  \begingroup
  \renewcommand\thefootnote{}\footnote{#1}%
  \addtocounter{footnote}{-1}%
  \endgroup
}

\maketitle

\blfootnote{A. Bourrier is with Gipsa-Lab. M.E. Davies is with University of Edimburgh. T. Peleg is with Israel Institute of Technology. P. P\'erez is with Technicolor. R. Gribonval is with INRIA.}

\begin{abstract} 
The primary challenge in linear inverse problems is to design stable and robust ``decoders'' to reconstruct high-dimensional vectors from a low-dimensional observation through a linear operator. 
Sparsity, low-rank, and related assumptions are typically exploited to design decoders which performance is then bounded based on some measure of deviation from the idealized model, typically using a norm. 
 
This paper focuses on characterizing the fundamental performance limits that can be expected from an ideal decoder given a general model, \ie, a general subset of ``simple'' vectors of interest. 
First, we extend the so-called notion of instance optimality of a decoder to settings where one only wishes to reconstruct some part of the original high dimensional vector from a low-dimensional observation. 
This covers practical settings such as medical imaging of a region of interest, or audio source separation when one is only interested in estimating the contribution of a specific instrument to a musical recording. 
We define instance optimality relatively to a model much beyond the traditional framework of sparse recovery, and characterize the existence of an instance optimal decoder in terms of joint properties of the model and the considered linear operator. 
Noiseless and noise-robust settings are both considered. We show somewhat surprisingly that the existence of {\em noise-aware} instance optimal decoders for all noise levels implies the existence of a {\em noise-blind} decoder. 
 
A consequence of our results is that for models that are rich enough to contain an orthonormal basis, the existence of an $\ell^{2}/\ell^{2}$ instance optimal decoder is only possible when the linear operator is not substantially dimension-reducing.
This covers well-known cases (sparse vectors, low-rank matrices) as well as a number of seemingly new situations (structured sparsity and sparse inverse covariance matrices for instance).

We exhibit an operator-dependent norm which, under a model-specific generalization of the Restricted Isometry Property (RIP), always yields a feasible instance optimality property. \revision{This norm can be upper bounded by an atomic norm relative to the considered model.}

\end{abstract}

\section{Introduction}

In linear inverse problems, one considers a linear measurement operator $\bM$ mapping the signal space $\rn$ to a measurement space $\rm$, where typically $\bM$ is either ill-conditioned or dimensionality reducing. The reconstruction of $\bx$ from $\bM \bx$ is thus a hopeless task unless one can exploit prior knowledge on $\bx$ to complete the incomplete observation $\bM \bx$.   

Sparsity is a well-known enabling model for this type of inverse problems: it has been proven that \revision{for certain such operators $\bM$,} one can expect to recover the signal $\bx$ from its measure $\bM\bx$ provided that $\bx$ is sufficiently sparse, \ie, it has few nonzero components \cite{CandesRT06}. If the set of $k$-sparse signals is denoted $\Sigma_k=\{\bx\in\rn,\nzero{\bx}\leq k\}$, where $\nzero{.}$ is the pseudo-norm counting the number of nonzero components, then this recovery property can be interpreted as the existence of a decoder $\Delta: \rm\rightarrow\rn$ such that $\forall\bx\in\Sigma_k,\  \Delta(\bM\bx)=\bx$, thus making $\bM$ a linear encoder associated to the (typically nonlinear) decoder $\Delta$.

Further, the body of theoretical work around sparse recovery in linear inverse problems has given rise to the notion of compressive sensing (CS) \cite{Donoho06}, where the focus is on {\em choosing} \revision{-- among a more or less constrained set of operators --} a dimensionality reducing $\bM$ to which a decoder can be associated\footnote{\revision{By contrast, linear inverse problems usually refer to a setup where one aims at reconstructing a signal from its measurements by a given operator (\emph{e.g.} imposed by the underlying physics), which may be dimensionality-reducing.}}. It is now well-established that this can be achieved 
in scenarii where $m\ll n$, showing that a whole class of seemingly high-dimensional signals can thus be reconstructed from far lower dimensional linear measurements than their apparent dimension.

\subsection{Instance optimal sparse decoders}
A good decoder $\Delta$ is certainly expected to have nicer properties than simply reconstructing $\Sigma_k$. Indeed, the signal $\bx$ to be reconstructed may not belong exactly in $\Sigma_k$ but ``live near'' $\Sigma_k$ under a distance $d$, meaning that $d(\bx,\Sigma_k)$ is ``small'' in a certain sense. In this case, one wants to be able to build a sufficiently precise estimate of $\bx$ from $\bM\bx$, that is a quantity $\Delta(\bM\bx)$ such that $\|\bx-\Delta(\bM\bx)\|$ is ``small'' for a certain norm $\|.\|$. This stability to the model has been formalized into the so-called \emph{Instance Optimality} assumption on $\Delta$. Decoder $\Delta$ is said to be instance optimal if:
\begin{equation}
\label{schematic_io}
\forall\bx\in\rn, \|\bx-\Delta(\bM\bx)\|\leq Cd(\bx,\Sigma_k),
\end{equation}
for a certain choice of norm $\|.\|$ and distance $d$. For this property to be meaningful, the constant $C$ must not scale with $n$ and typically ``good'' instance optimal decoders are decoders which involve a constant which is the same for all $n$ (note that this implicitly relies on the fact that a sparse set $\Sigma_{k} \subset \rn$ can be defined for any $n$). When the norm is $\ell^2$ or $\ell^1$ and the distance is $\ell^1$, such good instance optimal decoders exist and can be implemented as the minimization of a convex objective \cite{Donoho06,CandesT06,Candes08} under assumptions on $\bM$ such as the Restricted Isometry Property (RIP). Note that instance optimality is a uniform upper bound on the reconstruction error, and that other types of bounds on decoders can be studied, particularly from a probabilistic point of view \cite{ChandrasekaranRPW12}. Other early work include upper bounds on the reconstruction error from noisy measurements with a regularizing function when the signal belongs exactly to the model \cite{EnglHN96}.

In \cite{Cohen09compressedsensing}, the authors considered the following question: \textit{Given the encoder $\bM$, is there a simple characterization of the existence of an instance optimal decoder?} Their goal was not to find implementable decoders that would have this property, but rather to identify conditions on $\bM$ and $\Sigma_k$ under which the reconstruction problem is ill-posed if one aims at finding an instance optimal decoder with small constant. The existence of a decoder $\Delta$ which satisfies \eqref{schematic_io} will be called the \emph{Instance Optimality Property} (IOP). The authors proved that this IOP is closely related to a property of the kernel of $\bM$ with respect to $\Sigma_{2k}$, called the \emph{Null Space Property} (NSP). This relation allowed them to study the existence of stable decoders under several choices of norm $\|.\|$ and distance $d(.,.)$. 

A related question addressed in \cite{Cohen09compressedsensing} is that of the fundamental limits of dimension reduction: \textit{Given the target dimension $m$ and desired constant $C$, is there an encoder $\bM$ with an associated instance optimal decoder?} They particularly showed that there is a fundamental trade-off between the size of the constant $C$ in \eqref{schematic_io} (with $\ell^2$ norm and $\ell^2$ distance) and the dimension reduction ratio $m/n$.

\subsection{Low-dimensional models beyond sparsity}
Beyond the sparse model, many other low-dimensional models have been considered in the context of linear inverse problems and CS \cite{BaraniukCW10}. In these generalized models, the signals of interest typically live in or close to a subset $\Sigma$ of the space, which typically contains far fewer vectors than the whole space. Such models encompass sets of elements as various as block-sparse signals \cite{EldarKB10}, unions of subspaces, whether finite \cite{BlumensathD09} or possibly infinite \cite{Blumensath11}, signals sparse in a redundant dictionary \cite{RauhutSV08}, cosparse signals \cite{Nam2013}, approximately low-rank matrices \cite{RechtFP10,Candes2011}, low-rank and sparse matrices \cite{ZhouLWCM10,CandesLMW11}, symmetric matrices with sparse inverse \cite{Yuan2007,Yuan2010} or manifolds \cite{Baraniuk06randomprojections,EftekhariW13}. An old result which can also be interpreted as generalized CS is the low-dimensional embedding of a point cloud \cite{johnson84extensionslipschitz,Ach01}. Some of these models are pictured in Figure \ref{fig:model_illust}.

\begin{figure}
\centering
\includegraphics[scale=.25]{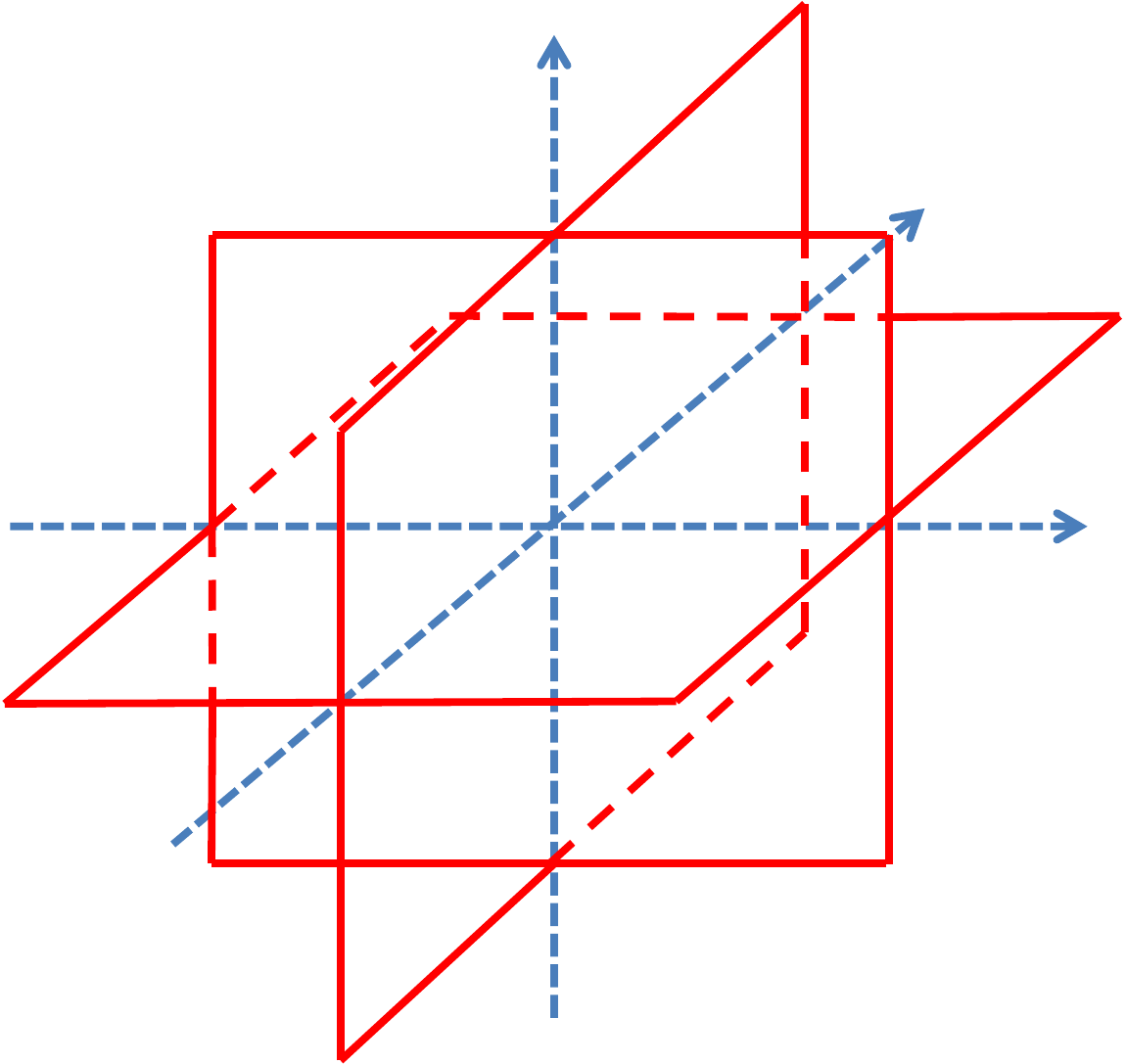}
\hspace{2em}
\includegraphics[scale=.25]{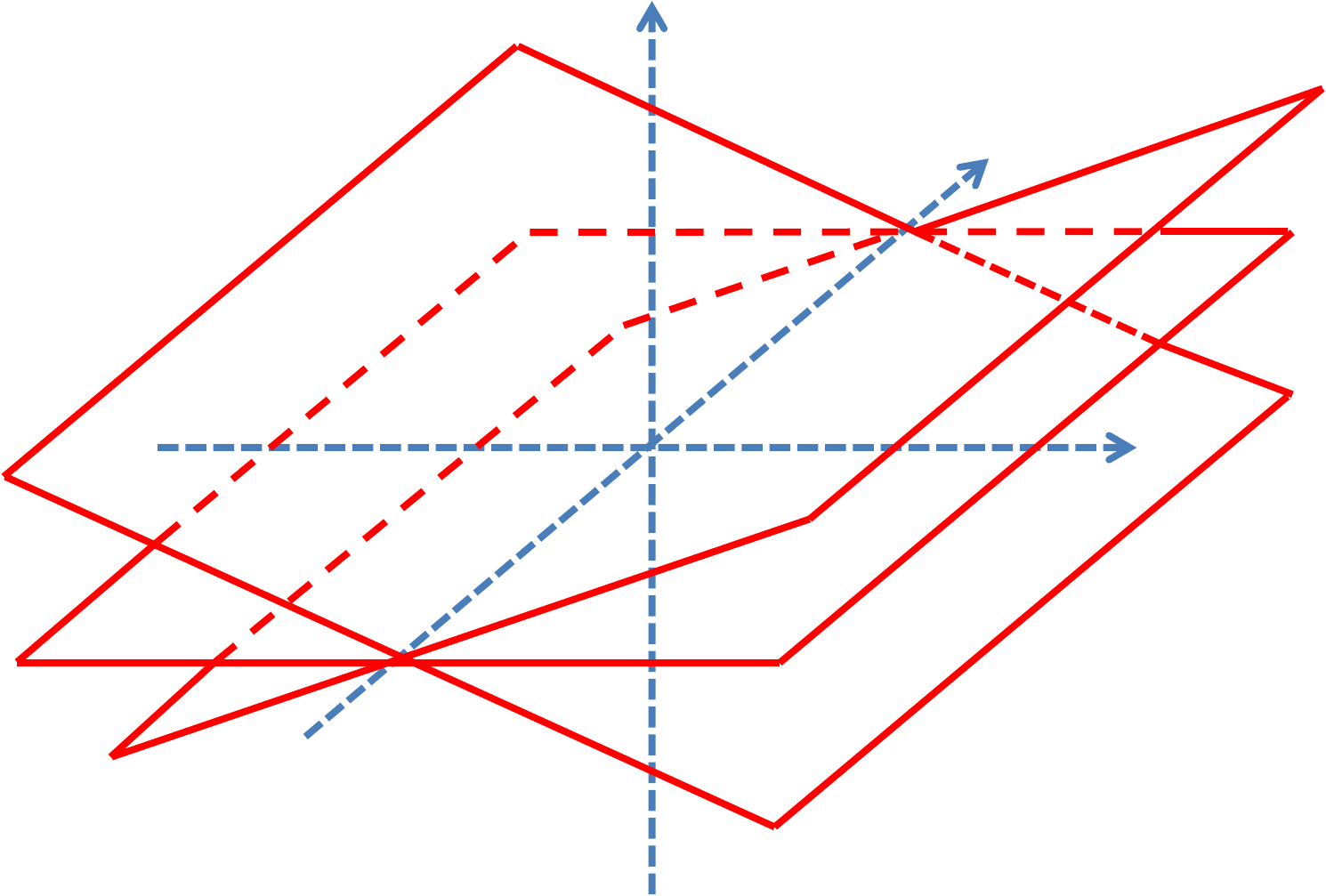}
\hspace{2em}
\includegraphics[scale=.25]{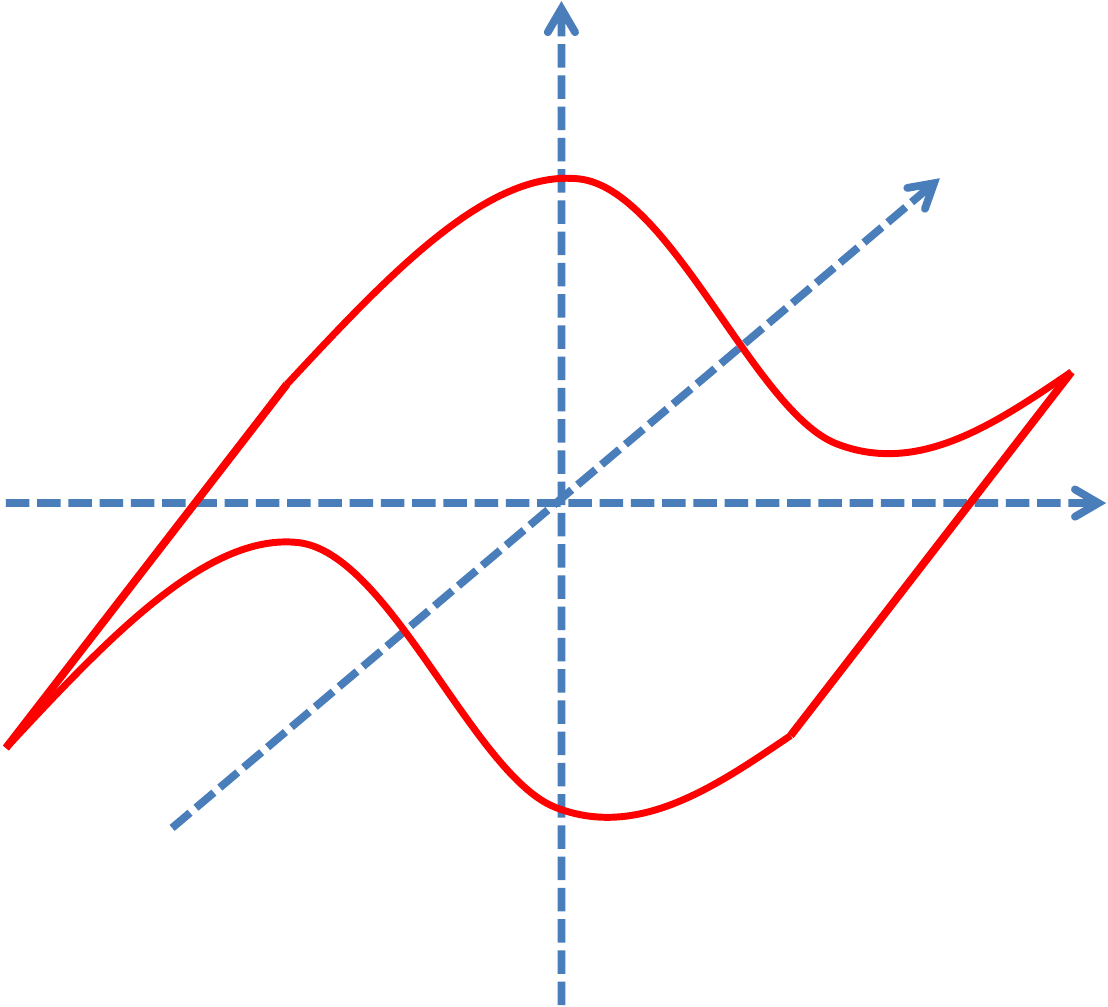}
\hspace{2em}
\includegraphics[scale=.25]{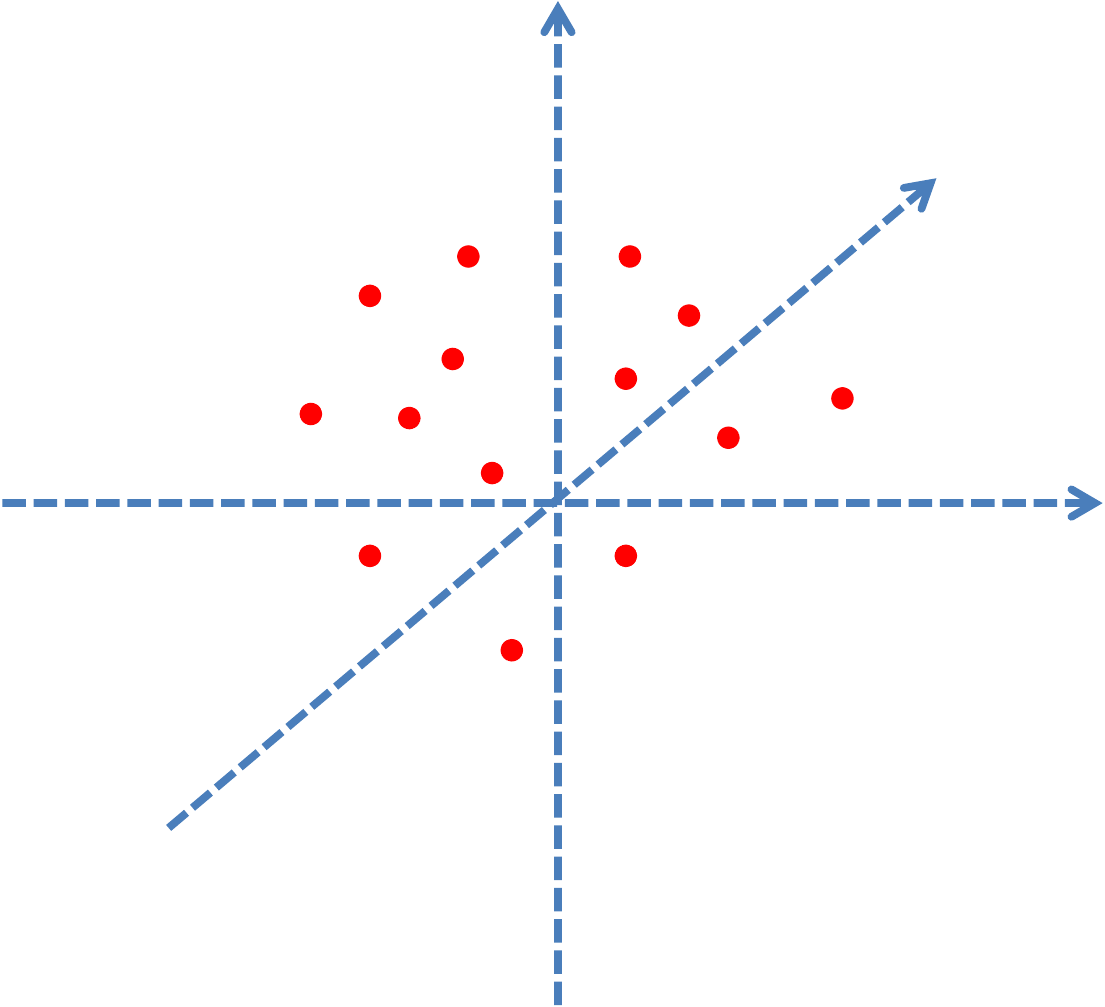}
\caption{Illustration of several CS models. \emph{From left to right:} $k$-sparse vectors, union of subspaces, smooth manifold and point cloud.}
\label{fig:model_illust}
\end{figure}

Since these models generalize the sparse model, the following question arises: can they be considered under a general framework, sharing common reconstruction properties? In this work, we are particularly interested in the extension of the results of \cite{Cohen09compressedsensing} to these general models, allowing to further investigate the well-posedness of such problems.

In \cite{PelegGD13}, the theoretical results of \cite{Cohen09compressedsensing} are generalized in the case where one aims at stably decoding a vector living near a finite union of subspaces (UoS). They also show in this case the impossibility of getting a good $\ell^2/\ell^2$ instance optimal decoder with substantial dimensionality reduction. Their extension also covers the case where the quantity one wants to decode is not the signal itself but a linear measure of the signal.

In this work, we further extend the study of the IOP to general models of signals: we consider signals of interest living in or near a subset $\Sigma$ of a vector space $E$, without further restriction, and show that instance optimality can be generalized for such models. In fact, we consider the following generalizations of instance optimality as considered in \cite{Cohen09compressedsensing}:
\begin{itemize}
\item \textbf{Robustness to noise:} noise-robust instance optimality is characterized, showing somewhat surprisingly the equivalence between the existence of two flavors of noise-robust decoders ({\em noise-aware} and {\em noise-blind});
\item \textbf{Infinite dimension:} signal spaces $E$ that may be infinite dimensional are considered. For example $E$ may be a Banach space such as an $L^p$ space or a space a signed measures. This is motivated by recent work on infinite dimensional compressed sensing \cite{BenAdcock:2013va} or compressive density estimation \cite{Bourrier:2013wl};
\item \textbf{Task-oriented decoders:} the decoder is not constrained to approximate the signal $\bx$ itself but rather a linear feature derived from the signal, $\bA\bx$, as in \cite{PelegGD13}; in the usual inverse problem framework, $\bA$ is the identity. Examples of problems where $\bA\ne\bI$ include:
\begin{itemize}
\item Medical imaging of a particular region of the body: as in Magnetic Resonance Imaging, one may acquire Fourier coefficients of a function defined on the body, but only want to reconstruct properly a particular region. In this case, $\bA$ would be the orthogonal projection on this region.
\item Partial source separation: given an audio signal mixed from several sources whose positions are known, as well as the microphone filters, the task of isolating one of the sources from the mixed signal is a reconstruction task where $E$ is the space of concatenated sources, and $\bA$ orthogonally projects such a signal in a single source signal space.
\end{itemize}
\item \revision{\textbf{Pseudo-norms:} Instead of considering instance optimality involving norms, we use pseudo-norms with fewer constraints, allowing us to characterize a wider range of instance optimality properties. As we will see in Section \ref{sec:task_io}, this flexibility on the pseudo-norms has a relationship with the previous point: it essentially allows one to suppose $\bA=\bI$ in every case, up to a change in the pseudo-norm considered for the approximation error.}
\end{itemize}

\subsection{\revision{Contributions of this work}}

\revision{We summarize below our main contributions.}

\subsubsection{Instance optimality for inverse problems with general models}

In the noiseless case, we express a concept of instance optimality which does not necessarily involve homogeneous norms and distances but some pseudo-norms instead. 
Such a generalized instance optimality can be expressed as follows:
\begin{equation}
\label{io_generalized}
\forall\bx\in E, \nx{\bA\bx-\Delta(\bM\bx)}\leq C d_E(\bx,\Sigma),
\end{equation}
where $\nx{.}$ is a pseudo-norm and $d_E$ is a distance the properties of which will be specified in due time, and $\bA$ is a linear operator representing the feature one wants to estimate from $\bM\bx$. 
Our first contribution is to prove that the existence of a decoder $\Delta$ satisfying \eqref{io_generalized}, which is a generalized IOP, can be linked with a generalized NSP, similarly to the sparse case. This generalized NSP can be stated as:
\begin{equation}
\label{nsp_generalized}
\forall \bh\in\ker(\bM), \nx{\bA\bh}\leq D d_E(\bh,\sms),
\end{equation}
where the set $\sms$ is comprised of all differences of elements in $\Sigma$, that is $\sms=\{\bz_1-\bz_2:\bz_1,\bz_2\in\Sigma\}$. The constants $C$ and $D$ are related by a factor no more than 2, as will be stated in Theorems \ref{thm:io_nsp} and \ref{thm:nsp_io} characterizing the relationships between these two properties.
In particular, all previously mentioned low-dimensional models can fit in this generalized framework. 

\subsubsection{Noise-robust instance optimality}

Our second contribution (Theorems \ref{thm:io_nsp_noise} and \ref{thm:nsp_io_noise}) is to link a noise-robust extension of instance optimality to a property called the Robust NSP. Section \ref{sec:gen_io_nsp} regroups these noiseless and noise-robust results after a review of the initial IOP/NSP results of \cite{Cohen09compressedsensing}. 
We show somewhat surprisingly that the existence of {\em noise-aware} instance optimal decoders for all noise levels implies the existence of a {\em noise-blind} decoder (Theorem~\ref{thm:io_nsp_noise_aware}).

If a Robust NSP is satisfied, an instance optimal decoder can be defined as:
\begin{equation}
\label{eq:robust_decoder}
\Delta(\by)=\argmin{\bu\in E}{D_1 d_E(\bu,\Sigma)+ D_2 d_F(\bM\bu,\by)},
\end{equation}
where the constants $D_1,D_2$ and distances $d_E,d_F$ are those which appear in the Robust NSP. The objective function is the sum of two terms: a distance to the model and a distance to the measurements. Also note that by fixing $D_2$ to infinity, one defines an instance optimal noise-free decoder provided the corresponding NSP is satisfied.

\subsubsection{Limits of dimensionality reduction with generalized models}
\label{subsubsec:limits_dimred}
The reformulation of IOP as an NSP allows us to consider the $\ell^2/\ell^2$ instance optimality for general models in Section \ref{sec:l2l2}. In this case, the NSP can be interpreted in terms of scalar product and we precise the necessity of the NSP for the existence of an instance optimal decoder. This leads to the proof of Theorem \ref{thm:l2l2_io} stating that, just as in the sparse case, {\em one cannot expect to build an $\ell^2/\ell^2$ instance optimal decoder if $\bM$ reduces substantially the dimension and the model is ``too large''} in a precise sense.  In particular, we will see that the model is ``too large'' when the set $\Sigma-\Sigma$ contains an orthonormal basis. This encompasses a wide range of standard models where a consequence of our results is that $\ell^2/\ell^2$ IOP with dimensionality reduction is impossible:
\begin{itemize}
\item \textbf{$k$-sparse vectors}. In the case where $\Sigma=\Sigma_k$ is the set of $k$-sparse vectors, $\Sigma$ contains the null vector and the canonical basis, so that $\Sigma-\Sigma$ contains the canonical basis. Note that the impossibility of good $\ell^2/\ell^2$ IOP has been proved in \cite{Cohen09compressedsensing}.
\item \textbf{Block-sparse vectors} \cite{EldarKB10}. The same argument as above applies in this case as well, implying that imposing a block structure on sparsity does not improve $\ell^2/\ell^2$ feasibility.
\item \textbf{Low-rank matrices} \cite{RechtFP10,Candes2011}. In the case where $E=\mathcal{M}_n(\r)$ and $\Sigma$ is the set of matrices of rank $\leq k$, $\Sigma$ also contains the null matrix and the canonical basis.
\item \textbf{Low-rank + sparse matrices} \cite{ZhouLWCM10,CandesLMW11}. The same argument applies to the case where the model contains all matrices that are jointly low-rank \emph{and} sparse, which appear in phase retrieval \cite{OymakJFEH12,OhlssonYDS11,CandesSV11}.
\item \textbf{Low-rank matrices with non-sparsity constraints}. In order to reduce the ambivalence of the low-rank + sparse decomposition of a matrix, \cite{CandesLMW11} introduced non-sparsity constraints on the low-rank matrix in order to enforce its entries to have approximately the same magnitude. However, as shown in Lemma \ref{lem:lr_nonsparse_mat}, an orthonormal Fourier basis of the matrix space can be written as differences of matrices which belong to this model.
\item \textbf{Reduced union of subspace models} \cite{BaraniukCW10} obtained by pruning out the combinatorial collection of $k$-dimensional subspaces associated to $k$-sparse vectors. This covers block-sparse vectors \cite{EldarKB10}, tree-structured sparse vectors, and more. Despite the fact that these unions of subspaces may contain much fewer $k$-dimensional subspaces than the combinatorial number of subspaces of the standard $k$-sparse model,  the same argument as in the $k$-sparse model applies to these signal models, provided they contain the basis collection of $1$-sparse signals. This contradicts the naive intuition that $\ell^{2}/\ell^{2}$ IOP could be achievable at the price of  substantially reducing the richness of the model through a drastic pruning of its subspaces.
\item \textbf{$k$-sparse expansions in a dictionary model} \cite{RauhutSV08}. More generally, if the model is the set of vectors which a linear combination of at most $k$ elements of a dictionary $\bD$ which contains an orthogonal family or a tight frame, then Theorem \ref{thm:l2l2_io} applies.
\item \textbf{Cosparse vectors with respect to the finite difference operator} \cite{Nam2013,PelegGD13}. As shown in \cite{PelegGD13}, the canonical basis is highly cosparse with respect to the finite difference operator, hence it is contained in the corresponding union of subspaces. 
\item As shown in Lemma \ref{lem:sdp_mat}, this is also the case for \textbf{symmetric definite positive square matrices with $k$-sparse inverse}. The covariance matrix of high-dimensional Gaussian graphical models is of this type: the numerous pairwise conditional independences that characterize the structure of such models, and make them tractable, translate into zeros entries of the inverse covariance matrix (the concentration matrix). Combining sparsity prior on the concentration matrix with maximum likelihood estimation of covariance from data, permits to learn jointly the structure and the parameters of Gaussian graphical models (so called ``covariance selection'' problem) \cite{Yuan2007,Yuan2010}. In very high-dimensional cases, compressive solutions to this problem would be appealing.
\item Johnson-Lindenstrauss embedding of \textbf{point clouds} \cite{Ach01}. Given a set $\mathcal{X}$ of $L$ vectors in $\rn$ and $\epsilon>0$, there exists a linear mapping $f:\rn\rightarrow \rm$, with $m=\mathcal{O}(\ln(L)/\eps^2)$ and 
\begin{equation}
(1-\eps)\ntwo{\bx-\by}\leq\ntwo{f(\bx)-f(\by)}\leq (1+\eps)\ntwo{\bx-\by}
\end{equation}
holds for all $\bx,\by\in \mathcal{X}$. The fact that the point cloud contains a tight frame is satisfied if it ``spreads'' in a number of directions which span the space. In this case, \emph{one cannot guarantee precise out-of-sample reconstruction} of the points in $\rn$ in the $\ell^2$-sense, except for a very limited neighborhood of the point cloud. This is further discussed in Section \ref{sec:discussion}.
\end{itemize}

\subsubsection{Generalized Restricted Isometry Property}
\label{sec:gen_rip}

Our last contribution, in Section \ref{sec:nsp_rip}, is to study the relations between the NSP and a generalized version of the Restricted Isometry Property (RIP). This generalized RIP bounds $\nz{\bM\bx}$ from below and/or above on a certain set $V$, and can be decomposed in:
\begin{align}
\label{eq:lower_rip}
\mathrm{Lower-RIP:} &\; \forall\bx\in V, \alpha\nx{\bx}\leq \nz{\bM\bx}\\
\label{eq:upper_rip}
\mathrm{Upper-RIP:} &\; \forall\bx\in V, \nz{\bM\bx}\leq \beta\nx{\bx},
\end{align}
where $\nx{.}$ and $\nz{.}$ are pseudo-norms defined respectively on the signal space and on the measure space, and $0<\alpha \leq \beta < +\infty$.
We prove particularly in Theorem \ref{thm:nsp_mnorm} that a generalized lower-RIP on $\sms$ implies the existence of instance optimal decoders in the noiseless and the noisy cases for a certain norm $\ny{\cdot}$ we call the ``$M$-norm''\footnote{The prefix ``M-'' should be thought as ``Measurement-related norm'' since in other works the measurement matrix may be denoted by other letters.}.

Furthermore, we prove that under an upper-RIP assumption on $\Sigma$, this $M$-norm can be upper bounded by an atomic norm \cite{ChandrasekaranRPW12} defined using $\Sigma$ and denoted $\snorm{.}$. This norm is easier to interpret than the $M$-norm: it can in particular be upper bounded by usual norms for the $k$-sparse vectors and low-rank matrices models. We have the following general result relating generalized RIP and IOP (Theorem \ref{thm:rip_nsp_full}): if $\bM$ satisfies a lower-RIP \eqref{eq:lower_rip} for $V=\sms$ and an upper-RIP \eqref{eq:upper_rip} for $V=\Sigma$, then for all $\delta>0$, there exists a decoder $\dd$ satisfying $\forall\bx\in E,\forall\be\in F$,
\begin{equation}
\nx{\bx-\dd(\bM\bx+\be)}\leq 2\left(1+\frac{\beta}{\alpha}\right) d_{\Sigma}(\bx,\Sigma)+\frac{2}{\alpha}\ny{\be}+\delta,
\end{equation}
which is a particular case of Robust instance optimality, as described in Section \ref{sec:gen_io_nsp}.

In particular, this generalized RIP encompasses classical or recent RIP formulations, such as
\begin{itemize}
\item The \textbf{standard RIP} \cite{Candes08} with $V$ as the set of $k$-sparse vectors, $\nx{\cdot}$ and $\nz{\cdot}$ being $\ell^2$ norms.
\item The \textbf{Union of Subspaces RIP} \cite{Blumensath11} with $V$ as a union of subspaces, $\nx{\cdot}$ and $\nz{\cdot}$ being $\ell^2$ norms.
\item The \textbf{RIP for low-rank matrices} \cite{Candes2011} with $V$ as the set of matrices of rank $\leq r$, $\nx{\cdot}$ as the Frobenius norm and $\nz{\cdot}$ as the $\ell^2$ norm;
\item The \textbf{$\bD$-RIP} \cite{candes2011csdico} for the dictionary model with $V$ as the set of vectors spanned by $k$ columns of a dictionary matrix, $\nx{\cdot}$ and $\nz{\cdot}$ being $\ell^2$ norms;
\item The \textbf{$\boldsymbol\Omega$-RIP} \cite{Giryes2012} for the cosparse model with $V$ as the set of vectors $\bx$ such that $\boldsymbol\Omega\bx$ is $k$-sparse, where $\Omega$ is the cosparse operator, $\nx{\cdot}$ and $\nz{\cdot}$ being $\ell^2$ norms;
\item Similarly, the \textbf{task-RIP} can be defined given a linear operator $\bA$ such that one aims at reconstructing the quantity $\bA\bx$ (instead of $\bx$) to perform a particular task. As we will see in Section \ref{sec:task_io}, in terms of IOP, this is essentially equivalent to reconstructing $\bx$ in terms of the norm $\nx{\bA\cdot}$. In this case, the corresponding lower task-RIP reads:
\begin{equation}
\alpha \nx{\bA\bx}\leq \nz{\bM\bx}.
\end{equation}
\end{itemize}

\revision{
\subsubsection{Infinite-dimensional inverse problems}

The generalization of the relationship between the IOP, the NSP and the RIP to arbitrary vector spaces allows us to consider recovery results for infinite dimensional inverse problems. Such problems have mainly been considered in separable Hilbert spaces \cite{Adcock12a,BenAdcock:2013va}, where the signals of interest are sparse with respect to a Hilbert basis and the measurement operators subsample along another Hilbert basis. In the theory of generalized sampling \cite{Adcock12a}, even when the signal model $\Sigma$ is simply a finite dimensional subspace, it can be necessary to oversample by some factor in order to guarantee stable recovery. In fact Theorem 4.1 of \cite{Adcock12b} can be read as a statement of $\ell^2/\ell^2$ instance optimality for a specific (linear) decoder given in terms of the NSP constant of the measurement operator. The results presented here therefore provide an extension of generalized sampling for linear models beyond $\ell^2$.

However, as mentioned in Section \ref{sec:sparse_hilbert}, \emph{one cannot hope to get uniform instance optimality in this setting for a standard sparsity model}. This is mentioned in \cite{BenAdcock:2013va} when the authors state that no RIP can be satisfied in this case. In section \ref{sec:topo_rip}, we nevertheless discuss the possibility of uniform instance optimality results in infinite dimensions \emph{with a proper choice of model $\Sigma$ and pseudo-norms}. In particular, a non-constructive topological result ensures that a generalized RIP is satisfied for a model of finite box-counting dimension \cite{Robinson10}. This generalized RIP leads to an IOP, according to Theorem \ref{thm:nsp_mnorm}.

Hopefully, our results will therefore help characterizing conditions under which infinite-dimensional uniform IOP is possible.}



\subsection{Structure of the paper}
We will now describe the layout of the paper. Section \ref{sec:gen_io_nsp} first contains a quick review of the relationship between IOP and NSP in the usual sparse case, then exposes the more general setting considered in this paper, for which these properties and their relationship are extended, both in noiseless and noisy settings. Section \ref{sec:l2l2} then focuses on the particular case of $\ell^2/\ell^2$ IOP, proving the impossibility for a certain class of models to achieve such IOP with decent precision in dimension reducing scenarii. In particular, we show that this encompasses a wide range of usual models. Finally, in Section \ref{sec:nsp_rip}, we get back to the problem of IO with general norms and prove that a generalized version of the lower-RIP implies the existence of an instance optimal decoder for a certain norm we call the ``$M$-norm''. \revision{Using a topological result, we illustrate that this implication may be exploited for certain models and norms, even in infinite dimensions.} We propose an upper-bound on this norm under a generalized upper-RIP assumption to get an IOP with simpler norms, illustrating the result in standard cases.

\section{Generalized IOP and NSP equivalences}
\label{sec:gen_io_nsp}

In this section, we review the initial IOP/NSP relationship before extending it in several ways.

\subsection{The initial result}

In \cite{Cohen09compressedsensing}, the authors consider two norms $\nx{.}$ and $\ny{.}$ defined on the signal space $\rn$. The distance derived from $\ny{.}$ will be denoted $d_E$. Given a vector $\bx\in\rn$ and a subset $A\subset\rn$, the distance from $\bx$ to $A$ is defined as $d_E(\bx,A)=\inf{\by\in A}{\ny{\bx-\by}}$. These two norms allow the definition of instance optimality: a decoder $\Delta:\rm\rightarrow\rn$ is said to be instance optimal for $k$-sparse signals if
\begin{equation}
\label{io_initial}
\forall\bx\in\rn, \nx{\bx-\Delta(\bM\bx)}\leq Cd_E(\bx,\Sigma_k),
\end{equation}
for some constant $C>0$.

This property on $\Delta$ upper bounds the reconstruction error of a vector, measured by $\nx{.}$, by the distance from the vector to the model, measured by $d_E$. The authors prove that the existence of an instance optimal decoder, called IOP, is closely related to the NSP of $\bM$ with respect to the set $\Sigma_{2k}$ of $2k$-sparse vectors. Noting $\N=\ker(\bM)$, this NSP states
\begin{equation}
\label{nsp_initial}
\forall\bh\in\N, \nx{\bh}\leq Dd_E(\bh,\Sigma_{2k})
\end{equation}
for some constant $D$.

The relationship between the IOP and the NSP is the following: if there exists an instance optimal decoder $\Delta$ satisfying \eqref{io_initial}, then \eqref{nsp_initial} holds with $D=C$. Conversely, if \eqref{nsp_initial} holds, then there exists a decoder $\Delta$ such that \eqref{io_initial} holds with $C=2D$. Such a decoder can be defined as follows, supposing $\bM$ is onto:
\begin{equation}
\label{decoder_initial}
\Delta(\bM\bx)=\argmin{\bz\in(\bx+\N)}{d_E(\bz,\Sigma_k)},
\end{equation}
$\bx+\N$ denoting the set $\{\bx+\bh,\bh\in\N\}$. The well-posedness of this definition is discussed in Appendix \ref{sec:sparse_decoder}, in the more general setting where the model is a finite union of subspaces in finite dimension. Note that for generalized models, such a decoder may not necessarily exist since the infimum of $d_E(\bz,\Sigma)$ may not be achieved, as we will discuss in the next section.

This result can be seen as an ``equivalence'' between the IOP and the NSP, with similar constants.

\subsection{Proposed extensions}

The framework we consider is more general. The signal space is a vector space $E$, possibly infinite-dimensional. In particular, $E$ may be a Banach space such as an $L^p$ space \revision{or a space of signed measures}. On this space is defined a linear operator $\bM: E\rightarrow F$, where $F$ is the measurement space, which will most likely be finite-dimensional in practice. We assume that $\bM$ is onto. We further define a signal model $\Sigma\subset E$ comprising the signals which we want to be able to ``reconstruct'' from their images by $\bM$.
In the framework we consider, this ``reconstruction'' is not necessarily an inverse problem where we want to recover $\bx$ from $\bM\bx$. More precisely, as in \cite{PelegGD13}, we consider a case where we want to recover from $\bM\bx$ a quantity $\bA\bx$, where $\bA$ is a linear operator mapping $E$ into a space $G$. When $G=E$ and $\bA=\bI$, we are brought back to the usual case where we want to reconstruct $\bx$. This generalized framework is illustrated in Figure \ref{fig:generalized_model}.

\begin{figure}
\centering
\begin{tikzpicture}
\draw (0,0) node{\includegraphics[scale=.5]{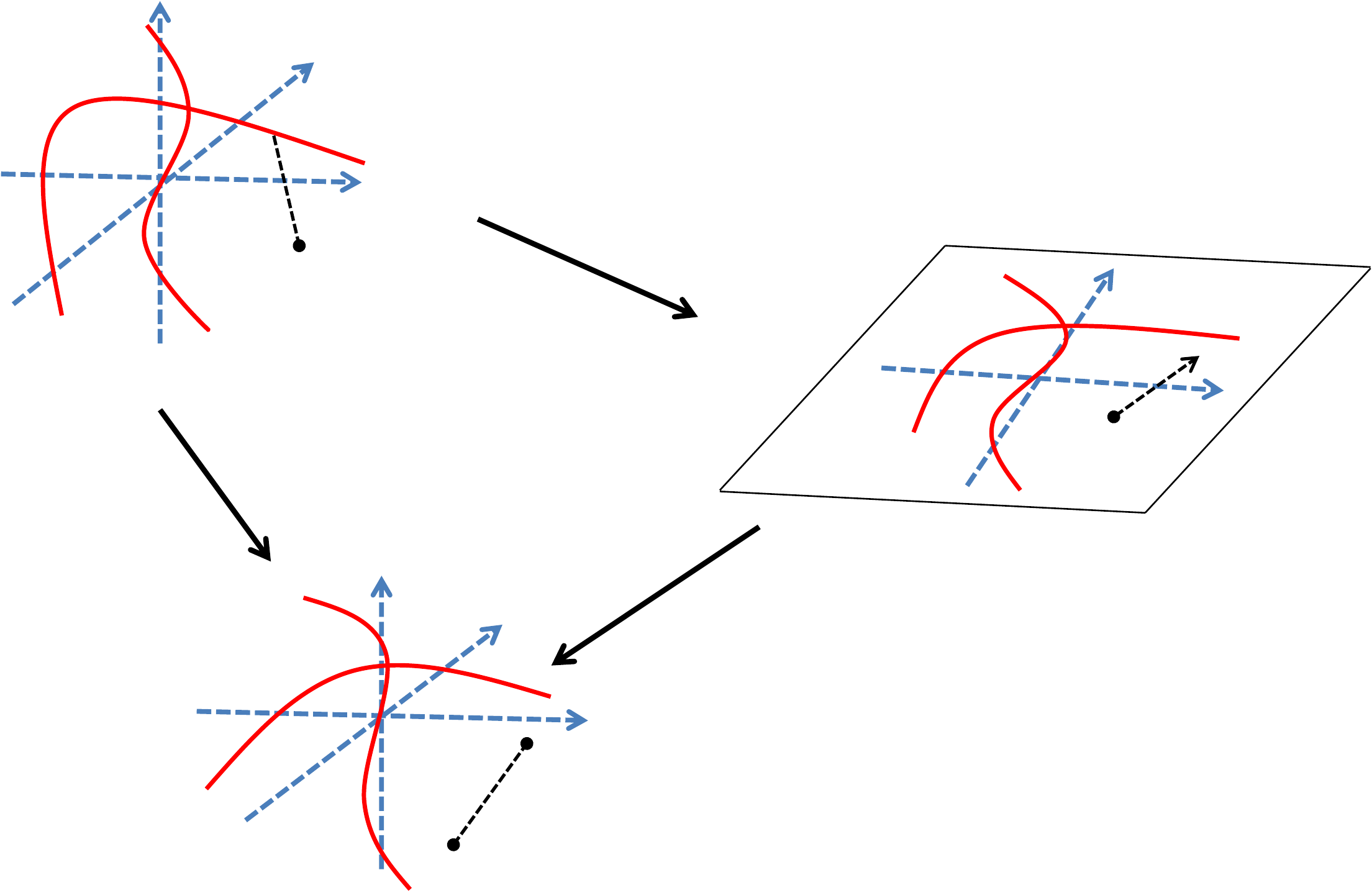}};
\draw (-7,3.5) node{Signal space $E$};
\draw (-7,3) node{Pseudo-norm $\ny{\cdot}$};
\draw (-7,2.5) node{(pseudo-distance $d_E$)};
\draw (3.5,2.5) node{Measure space $F$};
\draw (3.5,2) node{Pseudo-norm $\nz{\cdot}$};
\draw (-3,2.7) node{{\color{red} $\Sigma$}};
\draw (-3.1,1.3) node{$\bx$};
\draw (-2.4,1.9) node{$d_E(\bx,\Sigma)$};
\draw (3.5,0) node{$\bM\bx$};
\draw (5,.9) node{$\bM\bx+\be$};
\draw (2.25,1.2) node{{\color{red} $\bM\Sigma$}};
\draw (-1.75,-3.5) node{$\bA\bx$};
\draw (-.25,-2.5) node{$\Delta(\bM\bx+\be)$};
\draw (.5,-3) node{$\nx{\bA\bx-\Delta(\bM\bx+\be)}$};
\draw (-1.8,-1.2) node{{\color{red} $\bA\Sigma$}};
\draw (-0.25,1.75) node{$\bM$};
\draw (.4,-1.65) node{$\Delta$};
\draw (-4,-.75) node{$\bA$};
\draw (-5.5,-2) node{Feature space $G$};
\draw (-5.5,-2.5) node{Pseudo-norm $\nx{\cdot}$};
\end{tikzpicture}
\caption{Illustration of the proposed generalized setting. The signals belong to the space $E$, supplied with a pseudo-norm $\ny{.}$ used to measure the distance from a vector to the model $\Sigma$ containing the signals of interest. $E$ is mapped in the measure space $F$ by the operator $\bM$ and the measure is perturbed by an additive noise $\be$. The space $F$ is supplied with a pseudo-norm $\nz{.}$. The feature space $G$, supplied with a norm $\nx{.}$, is composed of vectors obtained by applying a linear operator $\bA$ to the signals in $E$. These feature vectors are the vectors one wants to reconstruct from the measures in $\bM$ by applying a decoder $\Delta$. The reconstruction error for the vector $\bx$ and noise $\be$ is therefore $\nx{\bA\bx-\Delta(\bM\bx+\be)}$. Note that in the case where $E=G$ and $\bA=\bI$, the decoder is aimed at reconstructing exactly the signals.}
\label{fig:generalized_model}
\end{figure}

In this generalized framework, we are now interested in the concepts of IOP and NSP, as well as their relationship. A decoder $\Delta: F\rightarrow G$ will aim at approximating $\bA\bx$ from $\bM\bx$.

The approximation error will be measured by a function $\nx{.}: G\rightarrow \r_+$. This function needs not be a norm in order to state the following results. It still must satisfy the following properties:
\begin{align}
\mathbf{Symmetry:}\; & \nx{\bx}=\nx{-\bx} \label{nx_symmetry}\\
\mathbf{Triangle\; inequality:}\; & \nx{\bx+\by}\leq \nx{\bx}+\nx{\by}. \label{nx_triangle}
\end{align}
The differences with a regular norm is that neither definiteness nor homogeneity is required: $\nx{\bx}=0$ needs not imply $\bx=0$ and $\nx{\lambda\bx}$ needs not equal $|\lambda|\nx{\bx}$. We provide two examples of such pseudo-norms in the case where $G=\rn$:
\begin{itemize}
\item $\nx{.}$ can be defined as a ``non-normalized'' $\ell^p$-quasinorm for $0\leq p\leq 1$, that is $\nx{\bx}=\sum_{i=1}^n |x_i|^p$. In this case, $\nx{\lambda\bx}=|\lambda|^p\nx{\bx}$.
\item More generally, if $f:\r_+\rightarrow\r_+$ is a concave function such that $f(x)=0\Leftrightarrow x=0$, then $\nx{.}$ can be defined as the $f$-(pseudo-)norm $\|\bx\|_f=\sum_{i=1}^n f(|x_i|)$, see \cite{gribonval07:_highl}.
\end{itemize}
In order to measure the \emph{distance from a vector to the model}, we also endow $E$ with a pseudo-norm $\ny{.}: E\rightarrow \r_+$ which satisfies the same properties as $\nx{.}$ with the additional requirement that $\ny{0}=0$. The pseudo-distance $d_E$ is defined on $E^2$ by $d(\bx,\by)=\ny{\bx-\by}$. Yet again, $\ny{.}$ can be defined as a non-normalized $\ell^p$-norm or an $f$-norm.

We will also consider a noisy framework where the measure $\bM\bx$ is perturbed by an additive noise term $\be$. To consider IOP and NSP in this context, we \emph{measure the amount of noise} with a pseudo-norm in the measurement space $F$, which we will denote by $\nz{.}$. The assumptions we make on $\nz{.}$ are the same as the assumptions on $\ny{.}$.

To sum up, here are the extensions we propose compared to the framework of \cite{Cohen09compressedsensing,PelegGD13} :
\begin{itemize}
\item The measure $\bM\bx$ can be perturbed by an additive noise $\be$.
\item The model set $\Sigma$ can be any subset of $E$.
\item $E$ is not necessarily $\rn$ but can be any vector space, possibly infinite-dimensional.
\item The reconstruction of $\bA\bx$ is targeted rather than that of $\bx$.
\item The functions $\ny{.}$, $\nz{.}$ and $\nx{.}$ need not be norms but can be pseudo-norms with relaxed hypotheses. In particular, Table \ref{tab:pseudo_norms} summarizes the requirements on these functions.
\end{itemize}

\begin{table}
\centering
\begin{tabular}{c|c|c|c|c|c}
 & Triangle Inequality & Symmetry & $\|0\|=0$ & Definiteness & Homogeneity\\
\hline
$\ny{.}$ & X & X & X & - & - \\
\hline
$\nz{.}$ & X & X & X & - & - \\
\hline
$\nx{.}$ & X & X & - & - & - 
\end{tabular}
\caption{Summary of the hypotheses on the pseudo-norms $\ny{.}$, $\nz{.}$ and $\nx{.}$. A cross means the property is required, a horizontal bar means it is not.}
\label{tab:pseudo_norms}
\end{table}

\revision{Once we have derived the generalized IOP and NSP equivalences, we see that one can essentially be brought back to the case where $\bA=\bI$ with a proper choice of $\gnorm{\cdot}$. This will be discussed in Section \ref{sec:task_io}.
}

\revision{
Let's note that even though \cite{Cohen09compressedsensing} does not consider the noisy case, some other works have studied noisy instance optimality for the standard sparse model and with $\ell^p$-norms (\cite{Wojtaszczyk10}, Chapter 11 of \cite{FoucartR13}). They mainly study conditions under which standard $\ell^1$ decoders are instance optimal. Here, we adopt a more conceptual approach by considering conditions for the \emph{existence} of an instance optimal decoder, without restriction on its practical tractability. This has the advantage of providing fairly simple equivalences and also to identify fundamental performance limits in a certain framework.

In these works, the underlined relationships between $\ell^1$ instance optimality and NSP are somewhat different than ours since they usually take advantage of the particular geometry of the sparse problem with $\ell^1$ decoder. An interesting open question is to what extent we can bridge the gap between this particular setup and a more general setup.
}

\subsubsection{The noiseless case}

We first consider the same framework as \cite{Cohen09compressedsensing,PelegGD13}, where one measures $\bM\bx$ with infinite precision. In our generalized framework, instance optimality for a decoder $\Delta$ reads:
\[
\forall\bx\in E,\nx{\bA\bx-\Delta(\bM\bx)}\leq C d_E(\bx,\Sigma).
\]
We will prove that if IOP holds, \ie, if the above holds for a certain decoder $\Delta$, then a generalized NSP is satisfied, that is:
\[
\forall\bh\in\N, \nx{\bA\bh}\leq Dd_E(\bh,\sms),
\]
with $D=C$. Note that the set $\Sigma_{2k}$ has been replaced by $\sms=\{\bx-\by,\bx\in\Sigma,\by\in\Sigma\}$. When $\Sigma=\Sigma_k$, we have indeed $\sms=\Sigma_{2k}$.

The construction of an instance optimal decoder from the NSP is more complicated and the form of the instance optimality we get depends on additional assumptions on $\Sigma$ and $\bM$. Let's first suppose that for all $\bx\in E$, there exists $\bz\in (\bx+\N)$ such that $d_E(\bz,\Sigma)=d_E(\bx+\N,\Sigma)$. Then the NSP \eqref{nsp_generalized} implies the existence of an instance optimal decoder satisfying \eqref{io_generalized} with $C=2D$. If this assumption is not true anymore, then the NSP implies a slightly modified IOP, which states, for any $\delta>0$, the existence of a decoder $\dd$ such that:
\begin{equation}
\label{io_generalized_nonexact}
\forall\bx\in E,\nx{\bA\bx-\dd(\bM\bx)}\leq C d_E(\bx,\Sigma)+\delta,
\end{equation}
reflecting the fact that one cannot necessarily consider the exact quantity 
\[
\argmin{\bz\in(\bx+\N)}{d_E(\bz,\Sigma)}
\]
but rather a certain vector $\bz\in(\bx+\N)$ satisfying $d_E(\bz,\Sigma)\leq d_E(\bx+\N,\Sigma) +\delta$. A similar positive ``projection error'' appears in \cite{Blumensath11}. 

\begin{remark}
To understand the necessity of such an additive error term when $\Sigma$ is a general set, we can consider the following toy example depicted in Figure \ref{fig:delta} where $E=\r^2$, $\N=\r\times\{0\}$,  $\Sigma=\{(x_1,x_2)\in(\r_+)^2:x_2=\frac{1}{x_1}\}$ and $\nx{.}/\ny{.}$ are the $\ell^2$ norm. In this case, the minimal distance between $\bx+\N$ and $\Sigma$ is not reached at any point, making it necessary to add the $\delta$ term for the decoder to be well-defined.
\end{remark}

\begin{figure}
\centering
\begin{tikzpicture}
\draw (0,0) node{\includegraphics[scale=.5]{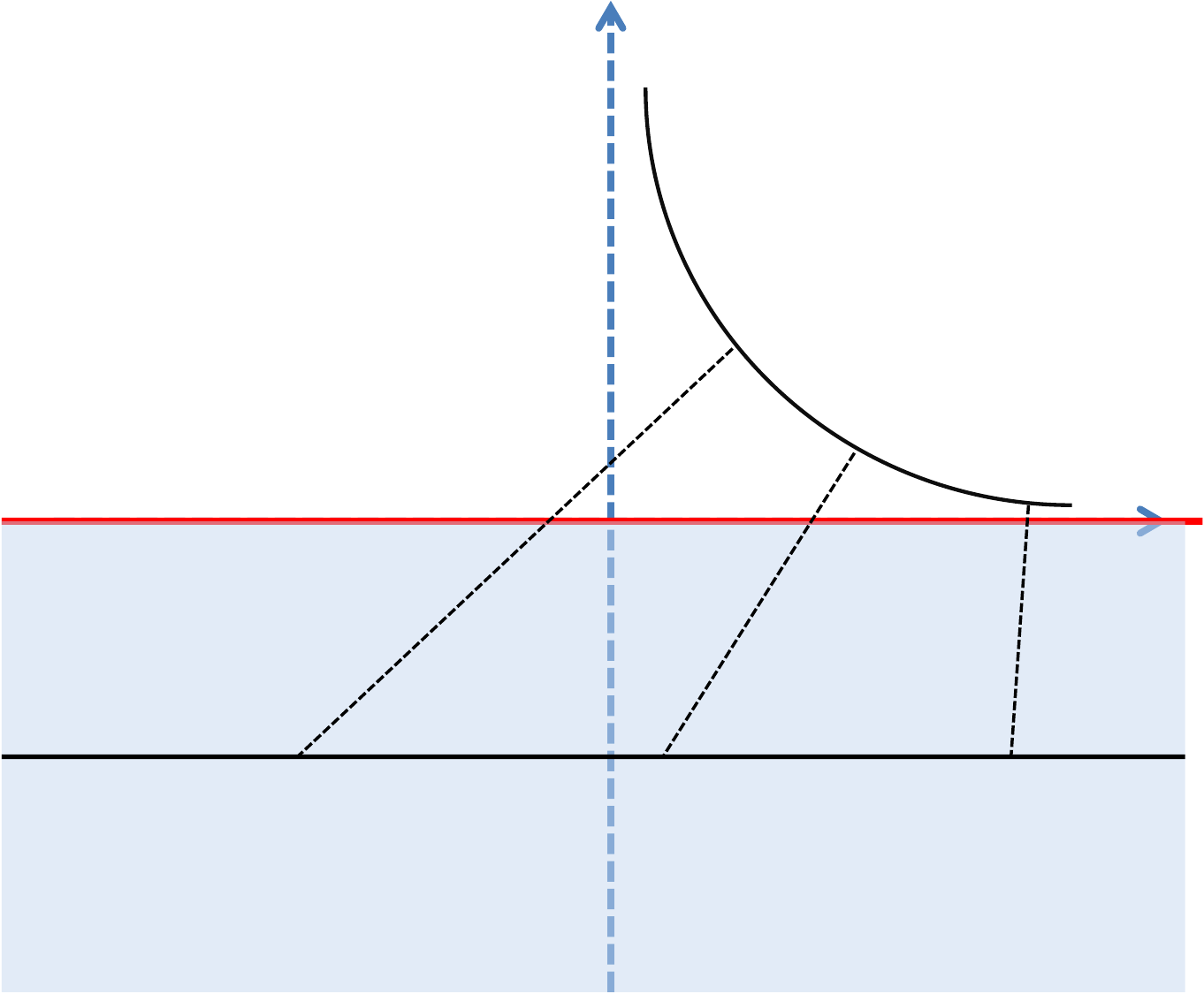}};
\draw (-2,.25) node{{\color{red} $\N$}};
\draw (.75,2) node{$\Sigma$};
\draw (3.5,-1.75) node{$\bx+\N$};
\draw (-.5,-1.52) node{\textbullet};
\draw (-.5,-1.75) node{$\bx$};
\draw (-1.75,-1.52) node{\textbullet};
\draw (-1.75,-1.75) node{$\bx_1$};
\draw (-2,-.75) node{$d(x_1,\Sigma)$};
\draw (.37,-1.52) node{\textbullet};
\draw (.37,-1.75) node{$\bx_2$};
\draw (1.5,-.75) node{$d(x_2,\Sigma)$};
\draw (2.38,-1.52) node{\textbullet};
\draw (2.38,-1.75) node{$\bx_3$};
\draw (3.1,-.75) node{$d(x_3,\Sigma)$};
\end{tikzpicture}

\caption{Necessity of the additive term $\delta$ in a simple case. For each $\bx$ in the blue half-plane, the distance $d_E(x+\N,\Sigma)$ is never reached at a particular point of $\bx+\N$ : the distance strictly decreases as one goes right along the affine plane $\bx+\N$ $(d(\bx_1,\Sigma)<d(\bx_2,\Sigma)<d(\bx_3,\Sigma))$, so that the minimal distance is reached ``at infinity''.}
\label{fig:delta}
\end{figure}

In this setting, the NSP \eqref{nsp_generalized} implies the existence of instance optimal decoders in the sense of \eqref{io_generalized_nonexact} for all $\delta>0$. Moreover, this weak IOP formulation still implies the regular NSP with $D=C$. This is summarized in Theorems \ref{thm:io_nsp} and \ref{thm:nsp_io}.

\begin{theorem}
\label{thm:io_nsp}
Suppose $\forall\delta>0$, there exists a decoder $\dd$ satisfying \eqref{io_generalized_nonexact}:
\begin{equation*}
\forall\bx\in E,\nx{\bA\bx-\dd(\bM\bx)}\leq C d_E(\bx,\Sigma)+\delta.
\end{equation*}
Then $\bM$ satisfies the NSP \eqref{nsp_generalized}:
\[
\forall\bh\in\N, \nx{\bA\bh}\leq Dd_E(\bh,\sms),
\]
with constant $D=C$.
\end{theorem}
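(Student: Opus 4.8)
The plan is to feed the hypothesized decoder two carefully chosen signals that collide under $\bM$ but whose $\bA$-images differ by exactly $\bA\bh$, and then combine the two instance-optimality estimates with the triangle inequality for $\nx{\cdot}$.

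First I would fix $\bh\in\N$; if $d_E(\bh,\sms)=+\infty$ the claimed inequality is vacuous, so I assume it is finite. Fix $\delta>0$ and let $\dd$ be a decoder satisfying \eqref{io_generalized_nonexact} for this $\delta$. Choosing near-optimizers of the infimum defining $d_E(\bh,\sms)$, pick $\bz_1,\bz_2\in\Sigma$ with $\ny{\bh-(\bz_1-\bz_2)}\le d_E(\bh,\sms)+\delta$, and set $\bx_1:=\bz_1$, $\bx_2:=\bz_1-\bh$. Since $\bh\in\ker\bM$, the two signals satisfy $\bM\bx_1=\bM\bx_2=:\by$, so $\dd(\by)$ is a single vector approximating both $\bA\bx_1$ and $\bA\bx_2$. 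Next I would record the two model-distance bounds: $d_E(\bx_1,\Sigma)=0$ because $\bz_1\in\Sigma$ and $\ny{0}=0$, while $d_E(\bx_2,\Sigma)\le\ny{(\bz_1-\bh)-\bz_2}=\ny{\bh-(\bz_1-\bz_2)}\le d_E(\bh,\sms)+\delta$ by symmetry of $\ny{\cdot}$.

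Then I would apply \eqref{io_generalized_nonexact} to $\bx_1$ and to $\bx_2$ and add the results, using the triangle inequality and symmetry of $\nx{\cdot}$ together with $\bA\bx_1-\bA\bx_2=\bA\bh$:
\[
\nx{\bA\bh}=\nx{\bA\bx_1-\bA\bx_2}\le\nx{\bA\bx_1-\dd(\by)}+\nx{\bA\bx_2-\dd(\by)}\le\delta+\big(C(d_E(\bh,\sms)+\delta)+\delta\big),
\]
so $\nx{\bA\bh}\le C\,d_E(\bh,\sms)+(C+2)\delta$. Letting $\delta\to0$ yields the NSP \eqref{nsp_generalized} with $D=C$.

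There is no real obstacle here; the only thing to watch is that two distinct slack terms both have to vanish — the additive $\delta$ built into the weak IOP \eqref{io_generalized_nonexact}, and the slack incurred in approximating the infimum defining $d_E(\bh,\sms)$. Using a single parameter $\delta$ for both is legitimate because the decoder $\dd$ depends on $\delta$ while the approximants $\bz_1,\bz_2$ can be chosen afterwards; alternatively one may keep a separate parameter $\eta$ for the infimum, send $\eta\to0$ first and then $\delta\to0$. The genuinely substantive choice is the test pair $(\bz_1,\ \bz_1-\bh)$: both are close to $\Sigma$, $\bM$ cannot tell them apart, and their difference is exactly $\bh$ — which is what transfers the reconstruction guarantee from $E$ to the kernel of $\bM$ and explains why $\sms$ (rather than $\Sigma$) is the set appearing in the NSP.
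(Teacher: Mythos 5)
Your proof is correct and is essentially the paper's own argument: the paper also tests the decoder on a colliding pair consisting of one point of $\Sigma$ and a translate of it by $\bh$ (writing $\bh=\bh_1-\bh_2+\bh_3$ and evaluating at $\bh_2$ and $\bh_1+\bh_3=\bh_2+\bh$, which matches your pair $(\bz_1,\bz_1-\bh)$ up to relabeling), and it arrives at the same bound $\nx{\bA\bh}\le C\,d_E(\bh,\sms)+(C+2)\delta$ before letting $\delta\to 0$. Your handling of the two slack terms and of the degenerate case $d_E(\bh,\sms)=+\infty$ is fine.
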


\begin{theorem}
\label{thm:nsp_io}
Suppose that $\bM$ satisfies the NSP \eqref{nsp_generalized}:
\[
\forall\bh\in\N, \nx{\bA\bh}\leq Dd_E(\bh,\sms).
\]
Then $\forall\delta >0$, there exists a decoder $\dd$ satisfying \eqref{io_generalized_nonexact}:
\[
\forall\bx\in E,\nx{\bA\bx-\dd(\bM\bx)}\leq C d_E(\bx,\Sigma)+\delta,
\]
with $C=2D$.

If we further assume that 
\begin{equation}
\label{exact_decoder}
\forall\bx\in E, \exists\bz\in (\bx+\N), d_E(\bz,\Sigma)=d_E(\bx+\N,\Sigma),
\end{equation}
then there exists a decoder $\Delta$ satisfying \eqref{io_generalized}:
\begin{equation}
\forall\bx\in E,\nx{\bA\bx-\Delta(\bM\bx)}\leq C d_E(\bx,\Sigma)
\end{equation}
with $C=2D$.
\end{theorem}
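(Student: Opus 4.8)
The plan is to split the argument into the two halves of the statement, treating first the general ``$+\delta$'' version and then the case where the infimum in the projection is attained. For the first half, I would proceed constructively: assuming $\bM$ is onto, for each $\by\in F$ pick (arbitrarily) a preimage $\bx_\by\in\bM^{-1}(\{\by\})$, so that $\bM^{-1}(\{\by\})=\bx_\by+\N$, and then, since $d_E(\bx_\by+\N,\Sigma)=\inf_{\bz\in\bx_\by+\N}d_E(\bz,\Sigma)$ need not be attained, choose $\bz_\by\in\bx_\by+\N$ with $d_E(\bz_\by,\Sigma)\le d_E(\bx_\by+\N,\Sigma)+\delta$. Define $\dd(\by):=\bA\bz_\by$. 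The key point is that this is a legitimate decoder: its value on $\by=\bM\bx$ depends only on the coset $\bx+\N$, which is exactly $\bM^{-1}(\{\bM\bx\})$, so $\dd(\bM\bx)=\bA\bz_{\bM\bx}$ with $\bz_{\bM\bx}\in\bx+\N$.

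The error estimate then follows from the triangle inequality for $\nx{\cdot}$, the NSP, and the structure of $\sms$. Write $\bh:=\bz_{\bM\bx}-\bx\in\N$. Then $\bA\bx-\dd(\bM\bx)=\bA\bx-\bA\bz_{\bM\bx}=-\bA\bh$, so $\nx{\bA\bx-\dd(\bM\bx)}=\nx{\bA\bh}\le D\,d_E(\bh,\sms)$ by \eqref{nsp_generalized}. Now I bound $d_E(\bh,\sms)$: for any $\bw\in\Sigma$ and any $\bw'\in\Sigma$, $\bw'-\bw\in\sms$, and
\[
d_E(\bh,\sms)\le \ny{\bh-(\bw'-\bw)} = \ny{(\bz_{\bM\bx}-\bw')-(\bx-\bw)}\le \ny{\bz_{\bM\bx}-\bw'}+\ny{\bx-\bw}.
\]
Taking the infimum over $\bw'\in\Sigma$ and over $\bw\in\Sigma$ gives $d_E(\bh,\sms)\le d_E(\bz_{\bM\bx},\Sigma)+d_E(\bx,\Sigma)$. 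Since $\bz_{\bM\bx}\in\bx+\N$, we have $d_E(\bz_{\bM\bx},\Sigma)\le d_E(\bx+\N,\Sigma)+\delta\le d_E(\bx,\Sigma)+\delta$ (the last inequality because $\bx\in\bx+\N$). Hence $d_E(\bh,\sms)\le 2\,d_E(\bx,\Sigma)+\delta$, and
\[
\nx{\bA\bx-\dd(\bM\bx)}\le D\big(2\,d_E(\bx,\Sigma)+\delta\big)=2D\,d_E(\bx,\Sigma)+D\delta.
\]
Since $\delta>0$ was arbitrary, replacing $\delta$ by $\delta/D$ at the outset yields \eqref{io_generalized_nonexact} with $C=2D$.

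For the second half, assumption \eqref{exact_decoder} lets me take $\delta=0$ in the construction: choose $\bz_\by\in\bx_\by+\N$ actually achieving $d_E(\bz_\by,\Sigma)=d_E(\bx_\by+\N,\Sigma)$, set $\Delta(\by):=\bA\bz_\by$, and rerun the same chain of inequalities with the $\delta$ term dropped, giving $\nx{\bA\bx-\Delta(\bM\bx)}\le 2D\,d_E(\bx,\Sigma)$. The only subtlety worth flagging is a measurability/selection issue — choosing the preimages $\bx_\by$ and the near-minimizers $\bz_\by$ uses the axiom of choice and produces a decoder with no regularity, but since the theorem only asserts \emph{existence} of a decoder as a set map $F\to G$, this is harmless. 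I expect the main (minor) obstacle to be purely bookkeeping: making sure the reduction from $\sms$ to two separate distances to $\Sigma$ is done cleanly, and tracking how the $\delta$ from the inexact projection propagates multiplicatively through the NSP constant so that the final rescaling $\delta\mapsto\delta/D$ is justified.
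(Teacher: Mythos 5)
Your proof is correct and follows essentially the same route as the paper's: you build the decoder by selecting a $\delta$-near-minimizer of $d_E(\cdot,\Sigma)$ in the coset $\bx+\N$ (exact minimizer under \eqref{exact_decoder}), apply the NSP to the null-space element $\bz_{\bM\bx}-\bx$, and split $d_E(\bh,\sms)$ into $d_E(\bx,\Sigma)+d_E(\bz_{\bM\bx},\Sigma)$ via the triangle inequality, finishing with the same $\delta\mapsto\delta/D$ rescaling the paper uses. The only difference is the order in which you treat the two halves, which is immaterial.
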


Note that this result is similar to the result proven in \cite{PelegGD13}, which was stated in the case where $\Sigma$ is a finite union of subspaces in finite dimension. In this framework, condition \eqref{exact_decoder} is always satisfied as soon as $\ny{.}$ is a norm, by the same argument as in usual CS (see Appendix \ref{sec:sparse_decoder}).

Let's also note the following property: if $\ny{.}$ is definite, that is $\ny{\bx}=0\Rightarrow \bx=0$, then $d_E$ is a distance. In the following proposition, we prove that if we further suppose that the set $\Sigma+\N$ is a closed set with respect to $d_E$, then the NSP \eqref{nsp_generalized} implies for any $\delta>0$ the existence of a decoder $\dd$ satisfying \eqref{io_generalized} with $C=(2+\delta)D$. This assumption therefore allows us to suppress the additive constant in \eqref{io_generalized_nonexact} and replace it by an arbitrarily small increase in the multiplicative constant of \eqref{io_generalized}.

\begin{proposition}
\label{prop:sigma_n_closed}
Suppose that $\bM$ satisfies the NSP \eqref{nsp_generalized}, that $d_E$ is a distance and that $\Sigma+\N$ is a closed set with respect to $d_E$. Then $\forall\delta >0$, there exists a decoder $\Delta_{\delta}$ satisfying:
\begin{equation}
\forall\bx\in E, \nx{\bA\bx-\Delta_{\delta}(\bM\bx)}\leq (2+\delta)D d_E(\bx,\Sigma).
\end{equation}
\end{proposition}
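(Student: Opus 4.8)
The plan is to reuse the construction behind Theorem~\ref{thm:nsp_io}, but to replace the crude additive ``$+\delta$'' approximate projection by a \emph{multiplicative} ``$(1+\delta)$'' one. The closedness hypothesis is exactly what makes this possible on the part of the space where the relevant distance vanishes, which is where a multiplicative slack would otherwise be useless.

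First I would record the elementary identity $d_E(\bx+\N,\Sigma)=d_E(\bx,\Sigma+\N)$, which holds because $\N$ is a subspace (so $-\N=\N$): as $\bh$ ranges over $\N$ and $\bv$ over $\Sigma$, the vector $\bv-\bh$ ranges exactly over $\Sigma+\N$, and $\ny{\bx+\bh-\bv}=\ny{\bx-(\bv-\bh)}$. Write $\eta(\bx):=d_E(\bx+\N,\Sigma)$; this depends on $\bx$ only through $\bM\bx$. Since $d_E$ is a genuine distance and $\Sigma+\N$ is $d_E$-closed, $d_E(\bx,S)=0$ iff $\bx$ lies in the closure of $S$, so $\eta(\bx)=0$ if and only if $\bx\in\Sigma+\N$.

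Then I would define $\Delta_\delta$. Given an observation $\by$, pick any $\bx$ with $\bM\bx=\by$ (possible since $\bM$ is onto), so that the fibre $\bx+\N$ is determined by $\by$. If $\eta(\bx)>0$, the definition of the infimum lets me choose $\bz\in\bx+\N$ with $d_E(\bz,\Sigma)\le(1+\delta)\,\eta(\bx)$, since $\delta\,\eta(\bx)>0$. If $\eta(\bx)=0$, then $\bx\in\Sigma+\N$, say $\bx=\bv_0+\bh_0$ with $\bv_0\in\Sigma$, $\bh_0\in\N$, and I take $\bz:=\bv_0=\bx-\bh_0\in(\bx+\N)\cap\Sigma$, for which $d_E(\bz,\Sigma)=0=(1+\delta)\,\eta(\bx)$. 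In either case $\bz\in\bx+\N$ and $d_E(\bz,\Sigma)\le(1+\delta)\,\eta(\bx)$; set $\Delta_\delta(\by):=\bA\bz$ (this is a well-defined function of $\by$, choices being made fibrewise as in \eqref{decoder_initial}).

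It then remains to run the standard NSP estimate. For any $\bx'$ in the same fibre ($\bM\bx'=\by$) we have $\bx'-\bz\in\N$, so \eqref{nsp_generalized} gives $\nx{\bA\bx'-\bA\bz}\le D\,d_E(\bx'-\bz,\sms)$; writing $d_E(\bx'-\bz,\sms)=\inf_{\bv_1,\bv_2\in\Sigma}\ny{(\bx'-\bv_1)-(\bz-\bv_2)}$ and applying the triangle inequality separates the two infima to give the bound $d_E(\bx'-\bz,\sms)\le d_E(\bx',\Sigma)+d_E(\bz,\Sigma)$. Since $\bx'\in\bx+\N$ we have $\eta(\bx)\le d_E(\bx',\Sigma)$, hence $d_E(\bz,\Sigma)\le(1+\delta)\,d_E(\bx',\Sigma)$, and combining yields $\nx{\bA\bx'-\Delta_\delta(\bM\bx')}\le(2+\delta)D\,d_E(\bx',\Sigma)$; renaming $\bx'$ as $\bx$ concludes. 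The one genuinely new point --- and the only place the closedness of $\Sigma+\N$ enters --- is the case $\eta(\bx)=0$, where the multiplicative slack collapses and one needs an \emph{exact} representative in $(\bx+\N)\cap\Sigma$; without closedness the infimum defining $\eta$ may fail to be attained even when it equals $0$, which is exactly the obstruction illustrated in Figure~\ref{fig:delta}. I expect this modest case distinction to be the main obstacle; the remainder is the routine NSP-to-IOP bookkeeping already present in the proof of Theorem~\ref{thm:nsp_io}.
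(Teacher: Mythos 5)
Your proof is correct and follows essentially the same route as the paper's: the same case split on whether $d_E(\bx+\N,\Sigma)$ vanishes, the same use of closedness of $\Sigma+\N$ (together with $d_E$ being a genuine distance) to produce an exact representative in $(\bx+\N)\cap\Sigma$ in the degenerate case, a multiplicative $(1+\delta)$ near-minimizer otherwise, and the standard NSP/triangle-inequality bookkeeping to reach the $(2+\delta)D$ constant. Nothing to add.
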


\subsubsection{The noisy case}

In practice, it is not likely that one can measure with infinite precision the quantity $\bM\bx$. This measure is likely to be contaminated with some noise, which will be considered in the following as an additive term $\be\in F$, so that the measure one gets is $\by=\bM\bx+\be$. In this case, a good decoder should be robust to noise, so that moderate values of $\be$ should not have a severe impact on the approximation error. We are interested in the existence of similar results as before in this noisy setting.

We first need to define a noise-robust version of instance optimality. 
The robustness to noise of practical decoders is in fact a problem that has been considered by many authors. 
A first type of result considers  {\em noise-aware decoders}, where given the noise level $\epsilon \geq 0$ a decoder $\d$ fulfills the following property
\begin{equation}
\label{io_gen_robust_known}
\forall\bx\in E, \forall\be\in F,\ \nz{\be} \leq \epsilon \Rightarrow \nx{\bA\bx-\d(\bM\bx+\be)}\leq C_1 d_E(\bx,\Sigma)+ C_{2} \epsilon.
\end{equation}
Here, the upper bound on the approximation error gets a new term measuring the amplitude of the noise. 
For example, this noise-robust instance optimality holds for a noise-aware $\ell^1$ decoder in the sparse case with bounded noise \cite{Candes08} for $\nx{.}=\ntwo{.}$ and $\ny{.}=\|.\|_1/\sqrt{k}$, provided $\bM$ satisfies the RIP on $\Sigma_{2k}$. 

In practical settings, it is hard to assume that one knows precisely the noise level. To exploit the above guarantee with a noise-aware decoder, one typically needs to overestimate the noise level. This loosens the effective performance guarantee and potentially degrades the actual performance of the decoder.
An apparently stronger property for a decoder is to be robust {\em even without knowledge of the noise level}:
\begin{equation}
\label{io_gen_robust_unknown}
\forall\bx\in E, \forall\be\in F,\ \nx{\bA\bx-\d(\bM\bx+\be)}\leq C_1 d_E(\bx,\Sigma)+ C_{2} \nz{\be}.
\end{equation}
Further on, such decoders will be referred to as {\em noise-blind}. Guarantees of this type have been obtained under a RIP assumption for practical decoders such as iterative hard thresholding, CoSAMP, or hard thresholding pursuit, see e.g. \cite[Corollary 3.9]{Fou11}.

Of course, the existence of a {\em noise-blind} noise-robust decoder in the sense of~\eqref{io_gen_robust_unknown} implies the existence of a {\em noise-aware} noise-robust decoder in the sense of~\eqref{io_gen_robust_known} for any noise level $\epsilon$. We will see that, somewhat surprisingly, the converse is true in a sense, for both are equivalent to a noise-robust NSP. 

Just as in the noiseless case, dealing with an arbitrary model $\Sigma$ and possibly infinite dimensional $E$ requires some caution. For $\delta>0$, the noise-robust (and noise-blind) instance optimality of a decoder $\dd$ is defined as:
\begin{equation}
\label{io_gen_robust}
\forall\bx\in E, \forall\be\in F, \nx{\bA\bx-\dd(\bM\bx+\be)}\leq C_1 d_E(\bx,\Sigma)+ C_2\nz{\be}+\delta.
\end{equation}
One can see that $\dd$ necessarily also satisfies the noiseless instance optimality \eqref{io_generalized_nonexact} by setting $\be=0$.

As we show below, if for every $\delta>0$ there exists a noise-robust instance optimal decoder $\Delta_{\delta}$ satisfying~\eqref{io_gen_robust}, then a generalized NSP for $\bM$ relatively to $\sms$, referred to as Robust NSP, must hold:
\begin{equation}
\label{robust_nsp}
\forall\bh\in E, \nx{\bA\bh}\leq D_1d_E(\bh,\sms)+D_2\nz{\bM\bh},
\end{equation}
with $D_1=C_1$ and $D_2=C_2$. This property appears e.g. in \cite{FoucartR13} (Chap. 4) with $\nx{.}=\ny{.}=\none{.}$ and $\nz{.}$ any norm. Note that this Robust NSP concerns every vector of $E$ and not just the vectors of the null space $\N=\ker(\bM)$\footnote{In fact, unlike the NSP~\eqref{nsp_generalized},~\eqref{robust_nsp} is not purely a property of the null space $\N$ even though it implies the NSP. The name Robust NSP is thus somewhat improper, but has become a standard for this type of property.}. In the case where $\bh\in\N$, one retrieves the regular NSP. For other vectors $\bh$, another additive term, measuring the ``size'' of $\bM\bh$, appears in the upper bound. 

Conversely, the Robust NSP implies the existence of noise-robust instance optimal decoders $\dd$ satisfying \eqref{io_gen_robust} with $C_1=2D_1$ and $C_2=2D_2$ for all $\delta>0$. These results are summarized in Theorems \ref{thm:io_nsp_noise} and \ref{thm:nsp_io_noise}.

\begin{theorem}
\label{thm:io_nsp_noise}
Suppose $\forall\delta>0$, there exists a decoder $\d_{\delta}$ satisfying \eqref{io_gen_robust}:
\[
\forall\bx\in E,\forall\be\in F,\ \nx{\bA\bx-\d_{\delta}(\bM\bx+\be)}\leq C_1 d_E(\bx,\Sigma)+ C_2 \nz{\be}+\delta.
\]
Then $\bM$ satisfies the Robust NSP \eqref{robust_nsp}:
\[
\forall\bh\in E, \nx{\bA\bh}\leq D_1d_E(\bh,\sms)+D_2\nz{\bM\bh},
\]
with constants $D_1=C_1$ and $D_2=C_2$.
\end{theorem}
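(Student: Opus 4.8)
The plan is to deduce the Robust NSP from the existence of decoders $\d_\delta$ satisfying \eqref{io_gen_robust} by feeding carefully chosen inputs into the decoder and exploiting the triangle inequality for $\nx{\cdot}$. Fix an arbitrary $\bh\in E$ and an arbitrary $\delta>0$. The key observation is that $\bM\bh$ can be written as $\bM\bu$ for a suitable decomposition that lets the model-distance term vanish: since we want to control $\nx{\bA\bh}$ by $d_E(\bh,\sms)$ and $\nz{\bM\bh}$, we should exhibit two signals whose images under $\bM$ differ by exactly $\bM\bh$, each lying close to $\Sigma$, and apply the decoder to a common measurement.

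Concretely, I would let $\bz_1,\bz_2\in\Sigma$ be near-minimizers of $d_E(\bh,\sms)$, i.e. choose them so that $\ny{\bh-(\bz_1-\bz_2)}\leq d_E(\bh,\sms)+\delta$. Set $\bw=\bh-(\bz_1-\bz_2)$, so $\ny{\bw}\leq d_E(\bh,\sms)+\delta$ and $\bh=\bz_1-\bz_2+\bw$. Now consider feeding the decoder the measurement $\by := \bM\bz_1 - \bM\bw$ (equivalently $\bM\bz_2+\bM\bh - 2\bM\bw$, up to rewriting), viewed in two ways: first as $\bM\bz_1 + \be$ with noise $\be=-\bM\bw$ and clean signal $\bz_1\in\Sigma$ (so $d_E(\bz_1,\Sigma)=0$); second as $\bM\bz_2 + \be'$ with $\be' = \bM\bh - \bM\bw = \bM(\bz_1-\bz_2)$... here I need to be slightly more careful so that both noise terms are controlled by $\nz{\bM\bh}$ and $\ny{\bw}$. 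The cleanest route: apply \eqref{io_gen_robust} to $\bx=\bz_1$ with noise $\be = \bM(\bh-\bw)-\bM\bh = -\bM\bw$ wait — let me instead use $\bx_1=\bz_1$, $\be_1 = -\bM\bw$, giving measurement $\bM\bz_1-\bM\bw$, and $\bx_2 = \bz_2$, $\be_2 = \bM(\bh-\bw)=\bM(\bz_1-\bz_2)$, giving measurement $\bM\bz_2+\bM(\bz_1-\bz_2)-\bM\bw = \bM\bz_1-\bM\bw$ — the same measurement $\by$. Then
\[
\nx{\bA\bz_1 - \dd(\by)} \leq C_2\nz{\bM\bw}+\delta,\qquad
\nx{\bA\bz_2 - \dd(\by)} \leq C_2\nz{\bM(\bz_1-\bz_2)}+\delta,
\]
since $d_E(\bz_i,\Sigma)=0$. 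By the triangle inequality and symmetry of $\nx{\cdot}$,
\[
\nx{\bA(\bz_1-\bz_2)}\leq C_2\bigl(\nz{\bM\bw}+\nz{\bM(\bz_1-\bz_2)}\bigr)+2\delta,
\]
and since $\bM(\bz_1-\bz_2)=\bM\bh-\bM\bw$, both $\nz{\bM\bw}$ and $\nz{\bM(\bz_1-\bz_2)}$ are bounded via the triangle inequality for $\nz{\cdot}$ by $\nz{\bM\bh}+\nz{\bM\bw}$. To close the loop I then bound $\nx{\bA\bw}$ using the noiseless guarantee \eqref{io_generalized_nonexact} (which $\dd$ inherits by setting $\be=0$) applied to $\bx=\bw$: $\nx{\bA\bw-\dd(\bM\bw)}\leq C_1 d_E(\bw,\Sigma)+\delta$ — but this does not directly involve $\nz{\bM\bw}$. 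The main obstacle is therefore exactly this: arranging the decomposition so that the residual piece $\bw$ contributes only through $\nz{\bM\bw}$, which must in turn be dominated by $\nz{\bM\bh}$ up to the $\ny{\bw}\approx d_E(\bh,\sms)$ contribution, without $\bM$ being assumed bounded. The right fix, following the classical argument of \cite{Cohen09compressedsensing} adapted to noise, is to apply the decoder twice to the single measurement $\by=\bM(\bz_1)-\bM\bw$ but choose the noise splittings as $\be_1=-\bM\bw$ and $\be_2 = \bM\bh-\bM\bw$ directly, and then use $\nx{\bA\bh}=\nx{\bA(\bz_1-\bz_2)+\bA\bw}\leq \nx{\bA(\bz_1-\bz_2)}+\nx{\bA\bw}$, handling $\nx{\bA\bw}$ by a separate application of \eqref{io_gen_robust} with $\bx=0\in$ (whatever near-$\Sigma$ point) — in fact the simplest is $\bx = \bz_2$, $\be = \bM\bz_1 - \bM\bz_2 - \bM\bw$... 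I will organize the final write-up so that the algebra collapses to $\nx{\bA\bh}\leq C_1 d_E(\bh,\sms) + C_2\nz{\bM\bh} + c\,\delta$ for an absolute constant $c$, and then, since $\delta>0$ was arbitrary, let $\delta\to 0$ to obtain \eqref{robust_nsp} with $D_1=C_1$, $D_2=C_2$.

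The one genuine subtlety to flag is that the decoder $\dd$ in the hypothesis depends on $\delta$, so I must be sure that the $\delta$ used in choosing the near-minimizers $\bz_1,\bz_2$ and the $\delta$ governing the decoder's slack can be taken to be the same (or comparable) quantity, and that after collecting terms the total additive error is $O(\delta)$; then the limit $\delta\to 0^+$ is legitimate because the left-hand side $\nx{\bA\bh}$ and the right-hand side $D_1 d_E(\bh,\sms)+D_2\nz{\bM\bh}$ do not depend on $\delta$. I would also remark that homogeneity of the pseudo-norms is never used — only symmetry, the triangle inequality, and $\ny{0}=\nz{0}=0$ — so the argument goes through verbatim in the pseudo-norm setting of Table~\ref{tab:pseudo_norms}, and that no properties of $\bM$ beyond linearity (and, implicitly, that it is onto, which is a standing assumption) are needed.
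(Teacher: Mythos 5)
There is a genuine gap: your argument never closes, and the point where it stalls is not a presentational issue but a real obstruction to the route you chose. By passing to near-minimizers $\bz_1,\bz_2\in\Sigma$ of $d_E(\bh,\sms)$ and perturbing the measurement by $\bM\bw$ with $\bw=\bh-(\bz_1-\bz_2)$, you create the terms $\nz{\bM\bw}$ and $\nx{\bA\bw}$, and neither can be controlled in this framework: the hypotheses give no continuity of $\bM$ or $\bA$ with respect to $\ny{\cdot}$, so ``$\ny{\bw}$ small'' says nothing about $\nz{\bM\bw}$ or $\nx{\bA\bw}$ (you acknowledge this yourself), and writing $\bM\bw=\bM\bh-\bM(\bz_1-\bz_2)$ only trades one uncontrolled term for another. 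The ``fix'' you sketch at the end (splittings $\be_1=-\bM\bw$, $\be_2=\bM\bh-\bM\bw$, ``organize the write-up so the algebra collapses'') still leaves a stray $C_2\nz{\bM\bw}$ in the bound, so the proposal as written does not prove the theorem with $D_1=C_1$, $D_2=C_2$, nor even with worse constants.

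The repair is to drop the near-minimizers and the perturbation entirely, i.e.\ take $\bw=0$: for an arbitrary $\bz\in\Sigma$, feed the decoder the \emph{clean} measurement $\bM\bz$ and read it in two ways. First, \eqref{io_gen_robust} with $\bx=\bz$, $\be=0$ gives $\nx{\bA\bz-\dd(\bM\bz)}\leq\delta$ since $d_E(\bz,\Sigma)=0$ and $\nz{0}=0$. Second, \eqref{io_gen_robust} with $\bx=\bz-\bh$ and $\be=\bM\bh$ (same measurement $\bM\bz$) gives $\nx{\bA\bz-\bA\bh-\dd(\bM\bz)}\leq C_1 d_E(\bz-\bh,\Sigma)+C_2\nz{\bM\bh}+\delta$. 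The triangle inequality and symmetry of $\nx{\cdot}$ then yield $\nx{\bA\bh}\leq C_1 d_E(\bz-\bh,\Sigma)+C_2\nz{\bM\bh}+2\delta$; the crucial step you were missing is that the model-distance term is kept as $d_E(\bz-\bh,\Sigma)=\inf_{\bu\in\Sigma}\ny{\bz-\bh-\bu}$ and only \emph{at the end} infimized over $\bz\in\Sigma$, which produces exactly $d_E(\bh,\sms)$ by symmetry of $\ny{\cdot}$ — no intermediate point $\bw$, hence no $\bM\bw$ or $\bA\bw$, ever appears. Letting $\delta\to 0$ (legitimate, as you correctly note, since neither side depends on $\delta$) gives \eqref{robust_nsp} with $D_1=C_1$, $D_2=C_2$. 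This is the paper's argument; your closing remarks about using only symmetry, the triangle inequality, $\ny{0}=\nz{0}=0$ and linearity of $\bM$ are correct for it, but the proof you actually wrote does not reach it.
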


\begin{theorem}
\label{thm:nsp_io_noise}
Suppose that $\bM$ satisfies the Robust NSP \eqref{robust_nsp}:
\[
\forall\bh\in E, \nx{\bA\bh}\leq D_1d_E(\bh,\sms)+D_2\nz{\bM\bh}.
\]
Then $\forall\delta >0$, there exists a decoder $\dd$ satisfying \eqref{io_gen_robust}:
\[
\forall\bx\in E,\forall\be\in F, \nx{\bA\bx-\dd(\bM\bx+\be)}\leq C_1 d_E(\bx,\Sigma)+ C_2\nz{\be}+\delta,
\]
with constants $C_1=2D_1$ and $C_2=2D_2$.
\end{theorem}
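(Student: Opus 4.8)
The plan is to construct the decoder explicitly as a near-minimizer of the two-term objective in \eqref{eq:robust_decoder}, and then bound the reconstruction error by a single application of the Robust NSP, after which the error splits cleanly into a ``model'' term, a ``noise'' term, and an ``optimality'' term. Fix $\delta>0$ (we assume $\Sigma\neq\emptyset$, the empty case being trivial). For each $\by\in F$ set $\Phi_{\by}(\bu):=D_1 d_E(\bu,\Sigma)+D_2 d_F(\bM\bu,\by)$ and choose (invoking the axiom of choice if needed) a point $\bz(\by)\in E$ with $\Phi_{\by}(\bz(\by))\le\Phi_{\by}(\bu)+\delta$ for every $\bu\in E$; such a choice is possible because $\Phi_{\by}\ge 0$ is finite at every point of $\Sigma$ (here we use $\ny{0}=0$, so the model term vanishes on $\Sigma$), hence its infimum over $E$ is finite. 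The $\delta$-slack is what lets us avoid the exact argmin, which need not be attained for a general $\Sigma$ and general pseudo-norms, exactly as in the noiseless Theorem~\ref{thm:nsp_io}. Define the decoder by $\dd(\by):=\bA\,\bz(\by)\in G$.

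Now fix $\bx\in E$ and $\be\in F$, put $\by:=\bM\bx+\be$ and $\bz:=\bz(\by)$. The key step is to apply the Robust NSP \eqref{robust_nsp} to the vector $\bh:=\bx-\bz\in E$ and use linearity of $\bA$:
\[
\nx{\bA\bx-\dd(\by)}=\nx{\bA(\bx-\bz)}\;\le\; D_1\,d_E(\bx-\bz,\sms)+D_2\,\nz{\bM(\bx-\bz)}.
\]
I would then bound the two terms on the right separately. For the first, since $\sms=\{\bv-\bw:\bv,\bw\in\Sigma\}$, the symmetry and triangle inequality of $\ny{\cdot}$ give $\ny{(\bx-\bz)-(\bv-\bw)}\le\ny{\bx-\bv}+\ny{\bz-\bw}$ for all $\bv,\bw\in\Sigma$, and minimizing over $\bv$ and $\bw$ independently yields the subadditivity estimate $d_E(\bx-\bz,\sms)\le d_E(\bx,\Sigma)+d_E(\bz,\Sigma)$. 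For the second, $\bM(\bx-\bz)=(\by-\be)-\bM\bz$, so the symmetry and triangle inequality of $\nz{\cdot}$ give $\nz{\bM(\bx-\bz)}\le d_F(\bM\bz,\by)+\nz{\be}$. Substituting and regrouping,
\[
\nx{\bA\bx-\dd(\by)}\;\le\;\big(D_1 d_E(\bx,\Sigma)+D_2\nz{\be}\big)+\Phi_{\by}(\bz).
\]

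To close the argument I would invoke the defining near-minimality of $\bz$: taking $\bu=\bx$ gives $\Phi_{\by}(\bz)\le\Phi_{\by}(\bx)+\delta=D_1 d_E(\bx,\Sigma)+D_2 d_F(\bM\bx,\by)+\delta$, and $d_F(\bM\bx,\by)=\nz{\bM\bx-\by}=\nz{\be}$ since $\bM\bx-\by=-\be$. Combining with the previous display,
\[
\nx{\bA\bx-\dd(\bM\bx+\be)}\;\le\;2D_1 d_E(\bx,\Sigma)+2D_2\nz{\be}+\delta,
\]
which is \eqref{io_gen_robust} with $C_1=2D_1$ and $C_2=2D_2$; since $\bx,\be,\delta$ were arbitrary, the theorem follows. (Setting $\be=0$ recovers the noiseless Theorem~\ref{thm:nsp_io}, and replacing the measurement term by the hard constraint $\bM\bu=\by$ recovers the exact-constraint decoder \eqref{decoder_initial}.)

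I do not anticipate a real obstacle: the proof is essentially bookkeeping. The single load-bearing step is the subadditivity estimate $d_E(\bx-\bz,\sms)\le d_E(\bx,\Sigma)+d_E(\bz,\Sigma)$, which is precisely what forces $\sms$, rather than $\Sigma$, to appear in the NSP, generalizing the classical passage from $\Sigma_k$ to $\Sigma_{2k}$. The factor $2$ in both $C_1$ and $C_2$ reflects two uses of the triangle inequality (splitting $\bx-\bz$ relative to $\Sigma$, and comparing the objective value of $\bz$ against that of $\bx$); in particular the noise enters once through the $d_F$-term in $\Phi_{\by}$ and once through $\nz{\bM\bh}$ in the Robust NSP, which is why $C_2=2D_2$. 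The only genuine technical precaution is the $\delta$-slack, whose necessity for a general model is the same phenomenon illustrated in Figure~\ref{fig:delta}.
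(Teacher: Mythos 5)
Your proof is correct and follows essentially the same route as the paper's: you define the decoder as a $\delta$-near-minimizer of the objective $D_1 d_E(\bu,\Sigma)+D_2 d_F(\bM\bu,\by)$, apply the Robust NSP to $\bh=\bx-\bz$, split the $\sms$-distance and the measurement term by the triangle inequality, and close with the near-optimality of $\bz$ against the competitor $\bu=\bx$. The only difference is presentational — you make explicit the finiteness of the infimum and the subadditivity estimate $d_E(\bx-\bz,\sms)\leq d_E(\bx,\Sigma)+d_E(\bz,\Sigma)$, which the paper uses implicitly.
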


We conclude this section by discussing the relation between noise-aware and noise-blind decoders. A noise-aware version of noise-robust instance optimality can be defined where for $\epsilon \geq 0,\delta>0$ we require
\begin{equation}
\label{io_gen_robust_noise_aware}
\forall\bx\in E, \forall\be\in F,\ \nz{\be} \leq \epsilon \Rightarrow \nx{\bA\bx-\d_{\delta,\epsilon}(\bM\bx+\be)}\leq C_1 d_E(\bx,\Sigma)+ C_2 \epsilon+\delta.
\end{equation}
Of course, the existence of a noise-blind instance optimal decoder implies that of noise-aware decoders for every $\epsilon \geq 0$. The converse is indeed essentially true, up to the value of the constants $C_i$:
\begin{theorem}
\label{thm:io_nsp_noise_aware}
Suppose $\forall\epsilon,\delta>0$, there exists a {\em noise-aware} decoder $\d_{\delta,\epsilon}$ satisfying~\eqref{io_gen_robust_noise_aware}:
\[
\forall\bx\in E, \forall\be\in F,\ \nz{\be} \leq \epsilon \Rightarrow \nx{\bA\bx-\d_{\delta,\epsilon}(\bM\bx+\be)}\leq C_1 d_E(\bx,\Sigma)+ C_2 \epsilon+\delta.
\]
Then $\bM$ satisfies the Robust NSP \eqref{robust_nsp} with constants $D_1=C_1$ and $D_2=2C_2$. Therefore, by Theorem~\ref{thm:nsp_io_noise}, there exists an instance optimal {\em noise-blind} decoder satisfying:
\[
\forall\bx\in E,\forall\be\in F, \nx{\bA\bx-\d_{\delta}(\bM\bx+\be)}\leq 2C_1 d_E(\bx,\Sigma)+ 4C_2 \nz{\be}+\delta.
\]
\end{theorem}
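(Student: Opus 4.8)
The plan is to derive the Robust NSP~\eqref{robust_nsp} (with $D_1=C_1$, $D_2=2C_2$) from the noise-aware hypothesis by a ``measurement-collision'' argument of the same flavour as the ones behind Theorems~\ref{thm:io_nsp} and~\ref{thm:io_nsp_noise}, and then invoke Theorem~\ref{thm:nsp_io_noise} to obtain the noise-blind decoder. So fix an arbitrary $\bh\in E$; we must show $\nx{\bA\bh}\leq C_1 d_E(\bh,\sms)+2C_2\nz{\bM\bh}$. Given $\eta>0$, pick $\bz_1,\bz_2\in\Sigma$ with $\ny{\bh-\bz_1+\bz_2}\leq d_E(\bh,\sms)+\eta$ and set $\bw:=\bh-\bz_1+\bz_2$, so that $\bz_1-\bh=\bz_2-\bw$. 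Both ``signals'' $\bz_1$ and $\bz_1-\bh$ are close to the model: $d_E(\bz_1,\Sigma)=0$ since $\bz_1\in\Sigma$, and, taking $\bz_2\in\Sigma$ as a competitor and using symmetry of $\ny{\cdot}$, $d_E(\bz_1-\bh,\Sigma)\leq\ny{(\bz_1-\bh)-\bz_2}=\ny{\bw}\leq d_E(\bh,\sms)+\eta$.

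The crucial observation is that $\bz_1$ and $\bz_1-\bh$ yield measurements differing exactly by $\bM\bh$: setting $\by:=\bM\bz_1$, we simultaneously have $\by=\bM\bz_1+0$ and $\by=\bM(\bz_1-\bh)+\bM\bh$. Choose $\eps:=\nz{\bM\bh}+\eta>0$ and $\delta>0$, let $\dde$ be the noise-aware decoder supplied by the hypothesis for this $(\eps,\delta)$, and put $\bu:=\dde(\by)$. Applying~\eqref{io_gen_robust_noise_aware} to the pair $(\bz_1,0)$ --- legitimate since $\nz{0}=0\leq\eps$ --- gives $\nx{\bA\bz_1-\bu}\leq C_1 d_E(\bz_1,\Sigma)+C_2\eps+\delta=C_2\eps+\delta$, and applying it to the pair $(\bz_1-\bh,\bM\bh)$ --- legitimate since $\nz{\bM\bh}\leq\eps$ --- gives $\nx{\bA(\bz_1-\bh)-\bu}\leq C_1\ny{\bw}+C_2\eps+\delta$. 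Since $\bA\bh=\bA\bz_1-\bA(\bz_1-\bh)$, the triangle inequality and symmetry of $\nx{\cdot}$ give $\nx{\bA\bh}\leq C_1\ny{\bw}+2C_2\eps+2\delta\leq C_1\bigl(d_E(\bh,\sms)+\eta\bigr)+2C_2\bigl(\nz{\bM\bh}+\eta\bigr)+2\delta$. Letting $\delta\to0$ and $\eta\to0$ yields the Robust NSP with $D_1=C_1$ and $D_2=2C_2$; Theorem~\ref{thm:nsp_io_noise} then delivers, for every $\delta>0$, a noise-blind decoder $\dd$ with constants $2C_1$ and $4C_2$, which is exactly the claimed statement.

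The one point that deserves care --- and which explains the appearance of the factor $2$ in $D_2=2C_2$ --- is that the noise-aware guarantee~\eqref{io_gen_robust_noise_aware} charges the penalty $C_2\eps$ even when the true noise is $0$, so the noiseless ``probe'' $\bz_1$ still contributes a $C_2\eps$ term. One might hope to halve this by splitting the measurement discrepancy symmetrically, feeding $\bz_1$ the noise $-\tfrac12\bM\bh$ and $\bz_1-\bh$ the noise $\tfrac12\bM\bh$; but $\nz{\cdot}$ is only a pseudo-norm, so $\nz{\tfrac12\bM\bh}$ need not be comparable to $\nz{\bM\bh}$, and the symmetric split buys nothing in general. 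A minor bookkeeping item is the degenerate case $\nz{\bM\bh}=0$: taking $\eps=\nz{\bM\bh}+\eta$ rather than $\eps=\nz{\bM\bh}$ keeps $\eps>0$ (as required by the hypothesis ``$\forall\eps,\delta>0$'') while still letting $\eps$ tend to $0$. Everything else is routine manipulation with the triangle inequality and the defining properties of $\nx{\cdot},\ny{\cdot},\nz{\cdot}$ recorded in Table~\ref{tab:pseudo_norms}.
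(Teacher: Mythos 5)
Your proof is correct and follows essentially the same route as the paper's: feed the single measurement $\bM\bz_1$ to the noise-aware decoder twice, once as a noiseless encoding of $\bz_1\in\Sigma$ and once as an $\epsilon$-noisy encoding of $\bz_1-\bh$ with $\be=\bM\bh$, then combine by the triangle inequality; the only (harmless) differences are that you fix near-optimal $\bz_1,\bz_2$ up front with an explicit $\eta$ where the paper takes the infimum over $\bz\in\Sigma$ at the end, and that you handle the degenerate case $\nz{\bM\bh}=0$ slightly more carefully by taking $\epsilon=\nz{\bM\bh}+\eta>0$.
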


\revision{
\subsection{Task-oriented instance optimality}
\label{sec:task_io}

In this section, we show that the generalized instance optimality as stated in \eqref{io_generalized_nonexact} is essentially equivalent to the same property with $\bA=\bI$ and a different choice for the pseudo-norm $\gnorm{.}$.

Indeed, let's consider that one aims at reconstructing a certain feature $\bA\bx$ from the measurements $\bM\bx$. If for any $\delta>0$ there exists an instance optimal decoder $\dd$ such that \eqref{io_generalized_nonexact} is satisfied, then Theorem \ref{thm:io_nsp} ensures that NSP \eqref{nsp_generalized} is satisfied. Let's define the following pseudo-norm for any signal $\bx\in E$:
\begin{equation}
\norm{\bx}=\gnorm{\bA\bx}.
\end{equation}
The following NSP is satisfied:
\begin{equation}
\forall \bh\in\ker(\bM),\ \norm{\bh}\leq C d_E(\bh,\sms).
\end{equation}
Therefore, Theorem \ref{thm:nsp_io} ensures that there exists decoders $\dd^{\prime}:F\rightarrow G$ instance optimal in the following sense:
\begin{equation}
\norm{\bx-\dd^{\prime}(\bM\bx)}\leq 2C d_E(\bx,\Sigma)+\delta.
\end{equation}

This means that if a family of decoders $\dd$ aimed at decoding a feature $\bA\bx$ is instance optimal for the pseudo-norm $\gnorm{.}$, then there exists a family of decoders $\dd^{\prime}$ aimed at decoding the \emph{signal} $\bx$ which is instance optimal for the pseudo-norm $\norm{.}$ with a similar constant (up to a factor 2). Conversely, if there exists a family of decoders $\dd^{\prime}$ aimed at decoding $\bx$ which is instance optimal for the pseudo-norm $\norm{.}$, then a simple rewriting of the IOP gives that the decoders $\dd = \bA\dd^{\prime}$ are instance optimal for the pseudo-norm $\gnorm{.}$.

Therefore, IOP with a task-oriented decoder is essentially equivalent to IOP with a standard decoder provided a suitable change in the pseudo-norm is performed. The same reasoning can be applied to deduce this equivalence for Robust IOP. As a consequence, we will only consider the case $\bA=\bI$ in the remainder of the paper.
}

\section{$\ell^2/\ell^2$ Instance Optimality}
\label{sec:l2l2}
%

In this section, we suppose that $E$ is a Hilbert space equipped with the norm $\ntwo{.}$ and scalar product $\ps{.}{.}$, that $F=\rm$ and we consider a finite-dimensional subspace $V$ of dimension $n$, on which we define the measure operator $\bM$. We are interested in the following question in the noiseless framework: \emph{Is it possible to have a ``good'' noiseless instance optimal decoder with $\nx{.}=\ny{.}=\ntwo{.}$ in a dimensionality reducing context where $m\ll n$?}

A result of \cite{Cohen09compressedsensing} states that in the usual sparse setting, one cannot expect to get a good instance optimal decoder if $\bM$ performs a substantial dimensionality reduction, the best corresponding constant being $\sqrt{\frac{n}{m}}$. In \cite{PelegGD13}, the authors prove that this lower bound on the constant holds in the case where $\Sigma$ is a finite union of subspaces in finite dimension. Here, we are interested in a version of this result for the general case where $\Sigma$ can be a more general subset of $E$. More precisely, we will give a sufficient condition on $\Sigma$ under which the optimal $\ell^2/\ell^2$ instance optimalityconstant is of the order of $\sqrt{\frac{n}{m}}$, thus preventing the existence of a $\ell^2/\ell^2$ instance optimal decoder with small constant if $m\ll n$.

\subsection{Homogeneity of the NSP}

In the case where $\nx{.}$, $\ny{.}$ and $\nz{.}$ are homogeneous with the same degree, the general NSP can be rewritten as an NSP holding on the cone $\r(\sms)$ generated by $\sms$, i.e., the set $\{\lambda\bz | \lambda\in\r,\bz\in\sms\}$. 
\begin{lemma}
\label{lem:nsp_cone}

If $\nx{.}$ and $\ny{.}$ are homogeneous with the same degree, we have an equivalence between the NSP on $\sms$:
\begin{equation}
\label{nsp_sms}
\forall\bh\in\N, \nx{\bh}\leq Dd_E(\bh,\sms),
\end{equation}
and the NSP on $\r(\sms)$:
\begin{equation}
\label{nsp_cone}
\forall\bh\in\N, \nx{\bh}\leq Dd_E(\bh,\r(\sms)).
\end{equation}

Similarly, if $\nx{.}$, $\ny{.}$ and $\nz{.}$ are homogeneous with the same degree, we have an equivalence between the robust NSP on $\sms$:
\begin{equation}
\forall\bh\in E, \nx{\bh}\leq D_1d_E(\bh,\sms)+D_2\nz{\bM\bh},
\end{equation}
and the robust NSP on $\r(\sms)$:
\begin{equation}
\forall\bh\in E, \nx{\bh}\leq D_1d_E(\bh,\r(\sms))+D_2\nz{\bM\bh}.
\end{equation}
\end{lemma}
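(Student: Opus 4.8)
The plan is to prove the two equivalences by showing that, under homogeneity, the distance to $\sms$ and the distance to the generated cone $\r(\sms)$ are comparable \emph{for vectors lying in the null space} (resp.\ for all vectors, in the robust case), so that the NSP inequality transfers back and forth. One direction of each equivalence is immediate: since $\sms\subseteq\r(\sms)$, we have $d_E(\bh,\r(\sms))\leq d_E(\bh,\sms)$ for every $\bh$, so the NSP on $\r(\sms)$ trivially implies the NSP on $\sms$ (and likewise for the robust versions). The content is the converse.

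For the converse, I would proceed as follows. Fix $\bh\in\N$ and let $\bz=\lambda\bw$ with $\lambda\in\r$ and $\bw\in\sms$ be an arbitrary point of $\r(\sms)$; the goal is to bound $\ny{\bh-\bz}$ from below in terms of something the NSP on $\sms$ can handle, or, more cleanly, to exhibit a point of $\sms$ whose $\ny{\cdot}$-distance to $\bh$ is controlled by $d_E(\bh,\r(\sms))$. The natural device is a rescaling: observe that $\sms$ is symmetric ($\bz_1-\bz_2\in\sms\Rightarrow\bz_2-\bz_1\in\sms$) and, being the difference set of $\Sigma$, it is closed under the maps that matter here. The key point is that for $\lambda\in[-1,1]$ we should be able to realize $\lambda\bw$ (or a point close to it) inside $\sms$ itself, while for $|\lambda|>1$ one instead rescales $\bh$: if the NSP on $\sms$ holds, then applying it to $\bh/|\lambda|$ (legitimate since $\N$ is a subspace), using homogeneity of degree $p$ on both sides, yields $\nx{\bh}\leq D|\lambda|^{p}\,d_E(\bh/\lambda,\sms)\leq D\,d_E(\bh, \lambda\cdot\sms)$, and then one takes the infimum over the representation $\bz=\lambda\bw$. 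I expect the cleanest route is: $d_E(\bh,\r(\sms))=\inf_{\lambda}\inf_{\bw\in\sms}\ny{\bh-\lambda\bw}=\inf_{\lambda}|\lambda|^{p}\,d_E(\bh/\lambda,\sms)$ (using homogeneity of $\ny{\cdot}$ of degree $p$), combined with $\nx{\bh}=|\lambda|^{p}\nx{\bh/\lambda}\leq |\lambda|^{p}D\,d_E(\bh/\lambda,\sms)$ from the NSP applied at $\bh/\lambda\in\N$; taking the infimum over $\lambda$ gives exactly $\nx{\bh}\leq D\,d_E(\bh,\r(\sms))$.

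The robust case is handled by the same rescaling argument, with the extra term $D_2\nz{\bM\bh}$ coming along for free: applying the robust NSP to $\bh/\lambda$ (now an arbitrary vector of $E$, not just of $\N$) gives $\nx{\bh/\lambda}\leq D_1 d_E(\bh/\lambda,\sms)+D_2\nz{\bM(\bh/\lambda)}$, and multiplying through by $|\lambda|^{p}$ and using homogeneity of all three pseudo-norms with the common degree $p$ turns this into $\nx{\bh}\leq D_1\,|\lambda|^{p}d_E(\bh/\lambda,\sms)+D_2\nz{\bM\bh}$; taking the infimum over $\lambda$ again produces $\nx{\bh}\leq D_1 d_E(\bh,\r(\sms))+D_2\nz{\bM\bh}$. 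Here it is essential that $\nz{\cdot}$ shares the same homogeneity degree, which is precisely why the robust equivalence is stated with that extra hypothesis while the noiseless one is not.

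The main obstacle, and the step I would be most careful about, is the manipulation $d_E(\bh,\r(\sms))=\inf_{\lambda\neq 0}|\lambda|^{p}d_E(\bh/\lambda,\Sigma-\Sigma)$: one must treat $\lambda=0$ separately (it contributes $\ny{\bh}$, which is $\geq\inf$ anyway, or handle it as a limit), and one must make sure the scaling is applied on the correct side — $\ny{\bh-\lambda\bw}=\ny{\lambda(\bh/\lambda-\bw)}=|\lambda|^{p}\ny{\bh/\lambda-\bw}$ requires nothing but homogeneity, but bookkeeping the infima (first over $\bw\in\sms$, then over $\lambda$) cleanly is where an error could creep in. A secondary subtlety is that $\nx{\cdot}$ and $\ny{\cdot}$ (and $\nz{\cdot}$) need not have the \emph{same} degree as each other in general — but the hypothesis of the lemma is exactly that they do, so the factor $|\lambda|^{p}$ cancels consistently across the inequality; I would state this degree $p$ explicitly at the outset of the proof.
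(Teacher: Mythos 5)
Your proposal is correct and follows essentially the same route as the paper: the trivial direction comes from $\sms\subseteq\r(\sms)$, and the converse is the same rescaling argument exploiting common homogeneity — the paper applies the NSP to $\lambda\bh\in\N$ (resp.\ $\lambda\bh\in E$ in the robust case) and divides by the factor $|\lambda|^p$, which is exactly your step of applying it at $\bh/\lambda$ and multiplying, so the element of $\sms$ gets rescaled to sweep out the cone $\r(\sms)$. Your care about the $\lambda=0$ point and about $\nz{\cdot}$ sharing the same degree in the robust case is consistent with (indeed slightly more explicit than) the paper's argument.
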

This lemma, which is valid even in the case where $\bA$ is not the identity, shows that the NSP imposes a constraint on the whole linear cone spanned by the elements of $\sms$ and not only on the elements themselves. Note that this equivalence is trivial in the case where $\Sigma$ is a union of subspaces since $\sms$ is already a cone in this case.

\subsection{The optimal $\ell^2/\ell^2$ NSP constant}

\label{sec:optimal_constant}

\begin{remark}
In the subsequent sections of the paper, we will assume that $\bA=\bI$ (this implies $G=E$), so that one aims at reconstructing the actual signal.
\end{remark}

In the $\ell^2/\ell^2$ case, one can give a simple definition of the optimal NSP constant $D_*$, that is the minimal real positive number $D$ such that the $\ell^2/\ell^2$ NSP is satisfied with constant $D$:
\begin{equation}
D_*=\inf\qquad \{D\in\r_+|\forall\bh\in\N, \ntwo{\bh}\leq Dd_2(\bh,\sms)\}.
\end{equation}
This definition assumes that there exists some constant so that the NSP is satisfied. Using the NSP definition and Lemma \ref{lem:nsp_cone}, we get that
\begin{equation}
D_* = \sup{\bh\in\N}{\sup{\bz\in\r(\sms)}{\frac{\ntwo{\bh}}{\ntwo{\bh-\bz}}}} = \sup{\bh\in\N\backslash\{0\}}{\sup{\bz\in\r(\sms)}{\frac{1}{\ntwo{\frac{\bh}{\ntwo{\bh}}-\frac{\bz}{\ntwo{\bh}}}}}}.
\end{equation}

Denoting $\Bt$ the unit ball for the $\ell^2$ norm, we can rewrite this last expression as:
\begin{equation}
D_* = \sup{\bh\in\N\cap\Bt}{\sup{\bz\in\r(\sms)}{\frac{1}{\ntwo{\bh-\bz}}}} = \sup{\bh\in\N\cap\Bt}{\sup{\bz\in\r(\sms)\cap\Bt}{\sup{\lambda\in\r}{\frac{1}{\ntwo{\bh-\lambda\bz}}}}}.
\end{equation}

A simple study gives that if $\ntwo{\bh}=\ntwo{\bz}=1$, then $\sup{\lambda\in\r}{\frac{1}{\ntwo{\bh-\lambda\bz}}}=\frac{1}{\sqrt{1-\ps{\bh}{\bz}^2}}$, so that:
\begin{equation}
\label{def_l2nsp_constant}
D_* = \sup{\bh\in\N\cap\Bt}{\sup{\bz\in\r(\sms)\cap\Bt}{\frac{1}{\sqrt{1-\ps{\bh}{\bz}^2}}}}.
\end{equation}

The contraposition of Theorem \ref{thm:io_nsp} gives the following result : if the NSP \eqref{nsp_generalized} is not satisfied for a certain constant $D$, then no decoder $\dd$ can satisfy instance optimality \eqref{io_generalized_nonexact} with constant $D$. In the $\ell^2/\ell^2$ case, considering $D<D_*$, $\bh\in\N\cap\Bt$ and $\bz\in\r(\sms)\cap\Bt$ such that $\ps{\bh}{\bz}^2\geq 1-\frac{1}{D^2}$, we can construct two vectors such that for any decoder, instance optimality with constant $<\sqrt{D^2-1}$ can only be satisfied for at most one of them. This will shed light on the link between NSP and IOP. We have $\bz=\frac{\bz_1-\bz_2}{\ntwo{\bz_1-\bz_2}}$ for some $\bz_1,\bz_2\in\Sigma$. Let $\Delta$ be a decoder. If $\Delta(\bM\bz_1)\ne\bz_1$, then this vector prevents $\Delta$ from being instance optimal. The same goes for $\bz_2$ if $\Delta(\bM\bz_2)\ne\bz_2$. Now, let's suppose that $\bz_1$ and $\bz_2$ are correctly decoded. In this case, $(\bz_1+\bz_2)/2$ is decoded with a constant worse than $\sqrt{D^2-1}$, as depicted in Figure \ref{fig:N_sigma_corr}. Indeed, noting $\bp=(\bz_1+\bz_2)/2$ and defining the vectors $\bp_1$ and $\bp_2$ respectively as the orthogonal projections of $\bz_1$ and $\bz_2$ on the affine plane $\bp+\N$, we must have $\Delta(\bM\bp_1)=\Delta(\bM\bp_2)$. Denoting as $p_{\No}$ the orthogonal projection on $\No$, we have $d_2(\bp_1,\Sigma)\leq d_2(\bp_1,\bz_1)=\ntwo{p_{\No}(\bz_2-\bz_1)}/2$. Similarly, $d_2(\bp_2,\Sigma)\leq \ntwo{p_{\No}(\bz_2-\bz_1)}/2$. The fact that $\Delta(\bM\bp_1)=\Delta(\bM\bp_2)$ implies that there exists $i\in\{1,2\}$ such that $\ntwo{\bp_i-\Delta(\bM\bp_i)}\geq \ntwo{\bp_1-\bp_2}/2=\ntwo{p_{\N}(\bz_1-\bz_2)}/2$. Therefore,
\begin{equation}
\frac{\ntwo{\bp_i-\Delta(\bM\bp_i)}}{d_2(\bp_i,\Sigma)}\geq \frac{\ntwo{p_{\N}(\bz_2-\bz_1)}}{\ntwo{p_{\No}(\bz_2-\bz_1)}}\geq D\sqrt{1-\frac{1}{D^2}}=\sqrt{D^2-1}.
\end{equation} 

This illustrates the closeness between NSP and IOP: a vector of $\r(\sms)$ which is correlated with $\N$ can be used to define a couple of vectors such that for any decoder, one of the vectors will not be well decoded.

\begin{figure}

\centering

\begin{tikzpicture}
\draw (0,0) node{\includegraphics[scale=.4]{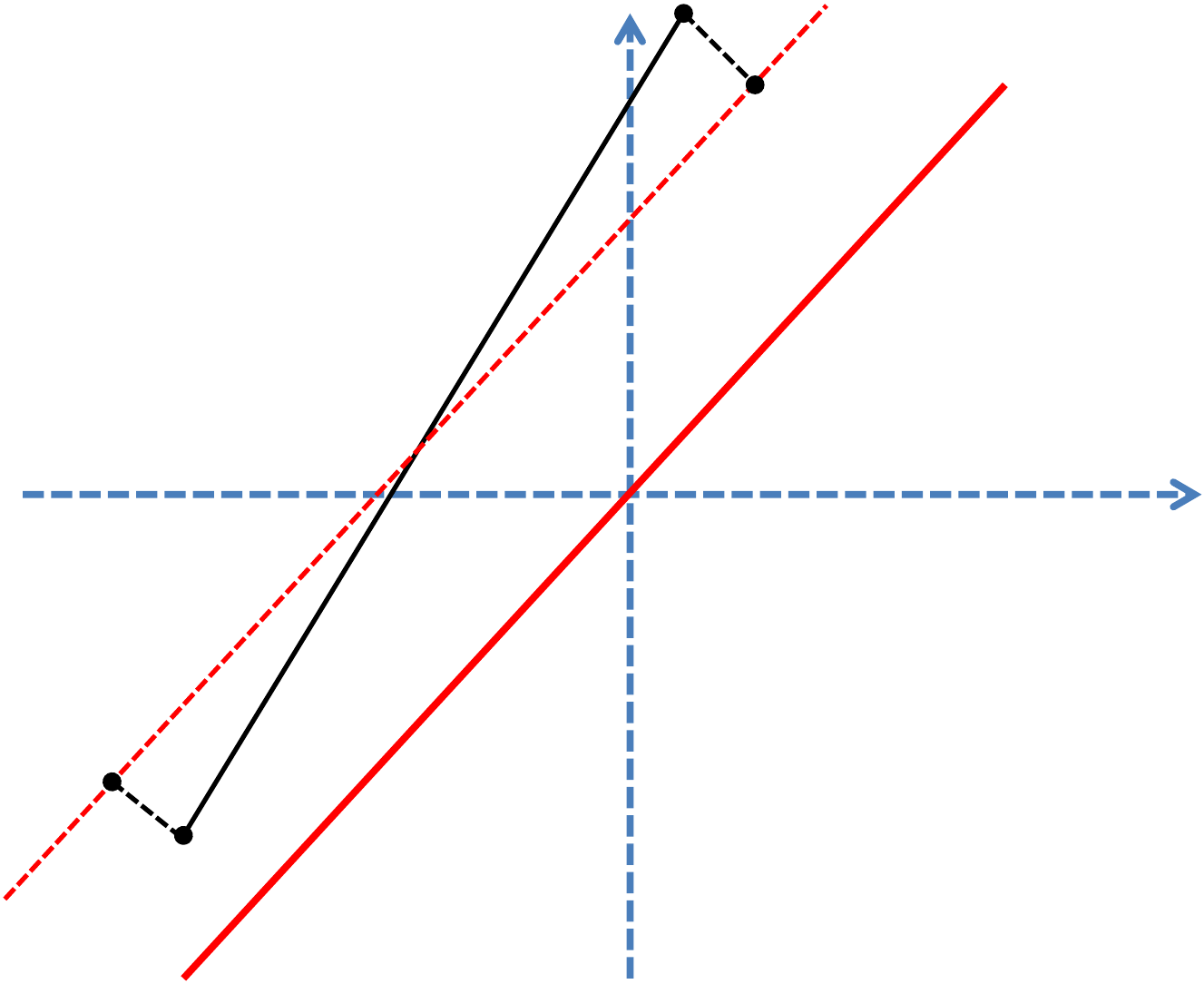}};
\draw (2.1,2) node{{\color{red} $\N$}};
\draw (-2.5,-1.2) node{$\bp_1$};
\draw (-2,-1.8) node{$\bz_1$};
\draw (.6,2.3) node{$\bz_2$};
\draw (1,1.7) node{$\bp_2$};
\draw (-.77,.25) node{{\tiny \textbullet}};
\draw (-.9,.5) node{$\bp$};
\draw (1.5,2.5) node{{\color{red} $\bp+\N$}};
\end{tikzpicture}

\caption{Illustration of the impact of the correlation between $\N$ and $\sms$ on instance optimality. Here, $\bz_1$ and $\bz_2$ are two vectors in $\Sigma$ such that $\bz_1-\bz_2$ is well correlated with $\N$, implying that at least one of the two vectors $\bp_1$ and $\bp_2$, which are close to $\Sigma$ but far from one another, will not be well decoded.}
\label{fig:N_sigma_corr}
\end{figure}

\subsection{$\ell^2/\ell^2$ IO with dimensionality reduction}

\subsubsection{Main theorem}

Let's now exploit the expression of $D_*$ to state the main result of this section: if $\r(\sms)$ contains an orthonormal basis of the finite-dimensional subspace $V\subset E$ (or even a family of vectors that is sufficiently correlated with every vector of $V$), then one cannot expect to get a $\ell^2/\ell^2$ instance optimal decoder with a small constant while $\bM$ substantially reduces the dimension of $V$. The fact that $\r(\sms)$ contains such a tight frame implies that the dimension of $\N$ cannot be too big without $\N$ being strongly correlated with $\sms$, thus yielding the impossibility of a good instance optimal decoder.

Before showing examples where this theorem applies, let's first state it and prove it.

\begin{theorem}
\label{thm:l2l2_io}
Suppose $V$ is of dimension $n$ and $\sms$ contains a family $\bz_1,\ldots,\bz_n$ of unit-norm vectors of $E$ satisfying $\forall \bx\in V$, $\sum_{i=1}^n \ps{\bz_i}{\bx}^2 \geq K\ntwo{\bx}^2$. Then to satisfy the NSP on $V$, $\bM$ must map $V$ into a space of dimension at least $\left(1-\frac{1}{K}\left(1-\frac{1}{D_*^2}\right)\right)n$.

If the number of measurements $m$ is fixed, then an $\ell^2/\ell^2$ IO decoder must have a constant at least $\frac{1}{\sqrt{1-K\left(1-\frac{m}{n}\right)}}$.
\end{theorem}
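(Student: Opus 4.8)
The plan is to pit two constraints against each other: the NSP caps how strongly any unit vector of $\N:=\ker(\bM)$ can correlate with each $\bz_i$, while the frame-type lower bound forces the family $(\bz_i)_{i=1}^n$ to collectively ``see'' all of $V$, hence all of $\N\subset V$. First I would record the consequence of the NSP. Assume $\bM$ satisfies the NSP on $V$; then the optimal constant $D_*$ is finite, and since $\Sigma\neq\emptyset$ gives $0\in\sms$ we have $D_*\geq 1$, so $1-1/D_*^2\geq 0$. Each $\bz_i$ is a unit vector of $\sms\subset\r(\sms)$, so $\bz_i\in\r(\sms)\cap\Bt$, and the expression \eqref{def_l2nsp_constant} for $D_*$ yields $\ps{\bh}{\bz_i}^2\leq 1-1/D_*^2$ for every unit $\bh\in\N$. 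Since $\ntwo{\pn\bz_i}^2$ is the supremum of $\ps{\bh}{\bz_i}^2$ over all unit $\bh\in\N$, this gives $\ntwo{\pn\bz_i}^2\leq 1-1/D_*^2$ for each $i$.

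Next I would do the dimension count. Put $d=\dim\N$ and fix an orthonormal basis $\bu_1,\ldots,\bu_d$ of $\N$. As $\N\subset V$, the hypothesis applies to each $\bu_j$, giving $\sum_{i=1}^n\ps{\bz_i}{\bu_j}^2\geq K\ntwo{\bu_j}^2=K$. Summing over $j$ and exchanging the order of summation,
\[
Kd\;\leq\;\sum_{j=1}^{d}\sum_{i=1}^{n}\ps{\bz_i}{\bu_j}^2\;=\;\sum_{i=1}^{n}\sum_{j=1}^{d}\ps{\bz_i}{\bu_j}^2\;=\;\sum_{i=1}^{n}\ntwo{\pn\bz_i}^2\;\leq\;n\Bigl(1-\tfrac{1}{D_*^2}\Bigr),
\]
so $d\leq\tfrac{n}{K}\bigl(1-\tfrac{1}{D_*^2}\bigr)$ and hence $\dim\bM(V)=n-d\geq\bigl(1-\tfrac1K(1-\tfrac1{D_*^2})\bigr)n$, which is the first assertion.

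For the second assertion, suppose an $\ell^2/\ell^2$ instance optimal decoder with constant $C$ exists. Such a decoder satisfies \eqref{io_generalized_nonexact} for every $\delta>0$ (with $\delta$ superfluous), so Theorem~\ref{thm:io_nsp} gives the NSP with constant $C$, whence $D_*\leq C$. Since $\bM$ takes values in $\rm$ we also have $\dim\bM(V)\leq m$, so the inequality above becomes $m\geq\bigl(1-\tfrac1K(1-\tfrac1{D_*^2})\bigr)n$. Rearranging, $1-\tfrac1{D_*^2}\geq K\bigl(1-\tfrac mn\bigr)$, hence $D_*^2\geq\bigl(1-K(1-\tfrac mn)\bigr)^{-1}$, and finally $C\geq D_*\geq\bigl(1-K(1-\tfrac mn)\bigr)^{-1/2}$; note this chain also shows $K(1-m/n)<1$ whenever such a decoder exists, since otherwise it would force $1/D_*^2\leq 0$.

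I do not expect a serious obstacle here: once \eqref{def_l2nsp_constant} is in hand, the whole argument is a one-step counting estimate. The only points demanding care are (i) that the frame inequality is quantified over $\bx\in V$ while the $\bz_i$ live in the ambient space $E$, so it must be used only against vectors of $\N\subset V$ --- never against the $\bz_i$ themselves --- and correspondingly the quantity being controlled is $\ntwo{\pn\bz_i}^2=\sum_{j}\ps{\bz_i}{\bu_j}^2$, not anything involving $\ntwo{\bz_i}$; and (ii) the bookkeeping of the degenerate regime $K(1-m/n)\geq 1$, in which the NSP (and hence any instance optimal decoder) simply cannot exist.
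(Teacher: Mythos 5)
Your proposal is correct and follows essentially the same route as the paper: the same double-counting over an orthonormal basis of the null space, the same per-vector bound $\ntwo{\pn\bz_i}^2\leq 1-1/D_*^2$ (the paper obtains it by plugging $\bh=\pnt(\bz_i)$ into the correlation bound, you by taking the supremum over unit $\bh\in\N$ --- the same computation), and the same rank--nullity rearrangement for both assertions. The only cosmetic difference is that the paper works with $\Nt=\N\cap V$ and $\bM_{|V}$ rather than assuming $\N\subset V$ outright.
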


In particular, if $\sms$ contains an orthonormal basis of $V$, then $K=1$ and the minimal number of measures to achieve NSP with constant $D_*$ is $n/D_*^2$. Similarly, if $m$ is fixed so that $m\ll n$, then a $\ell^2/\ell^2$ instance optimal decoder has constant at least $\sqrt{\frac{n}{m}}$.

\subsubsection{Examples}

As discussed in the introduction, there is a wide range of standard models where $\sms$ contains an orthonormal basis, and so where $\ell^2/\ell^2$ IOP with dimensionality reduction is impossible. We provide here less trivial examples, where $E=V$ is finite-dimensional.

\paragraph{Symmetric definite positive matrices with sparse inverse.}
\label{sec:sparse_inverse}

\begin{lemma}
\label{lem:sdp_mat}
Consider $E$ is the space of symmetric $n$-dimensional matrices, and $\Sigma \subset E$ the subset of symmetric positive-definite matrices with sparse inverse and with sparsity constant $k\geq n+2$ (note that $k \geq n$ is necessary for the matrix to be invertible). The set $\sms$ contains an orthonormal basis of $E$. 
\end{lemma}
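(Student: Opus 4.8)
The goal is to exhibit, inside $\Sigma-\Sigma$, an orthonormal basis of the space $E$ of symmetric $n\times n$ matrices (with respect to the Frobenius inner product). The natural orthonormal basis of $E$ consists of the matrices $\tfrac{1}{\sqrt 2}(\be_i\be_j^\top+\be_j\be_i^\top)$ for $i<j$ together with the matrices $\be_i\be_i^\top$ for all $i$. The plan is to write each such basis matrix as a difference $\bB_1-\bB_2$ with $\bB_1,\bB_2\in\Sigma$, i.e.\ both symmetric positive-definite with inverse having at most $k$ nonzero entries, where $k\ge n+2$. The basic idea is to ``hide'' each rank-one (or rank-two) basis element as a small perturbation of a highly structured invertible matrix whose inverse is sparse; the canonical choice is a diagonal matrix, whose inverse is diagonal (only $n$ nonzeros), leaving a budget of $2$ extra off-diagonal nonzero entries to absorb the perturbation.

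Concretely, I would proceed as follows. \textbf{Step 1:} Fix a basis element $\bU$. For the diagonal element $\bU=\be_i\be_i^\top$, take $\bB_1=\bD+\tfrac12\be_i\be_i^\top$ and $\bB_2=\bD-\tfrac12\be_i\be_i^\top$ for a suitable positive diagonal $\bD$ (e.g.\ $\bD=c\bI$ with $c$ large), so that $\bB_1,\bB_2\succ0$, their inverses are diagonal (sparsity $n\le k$), and $\bB_1-\bB_2=\bU$. \textbf{Step 2:} For the off-diagonal element $\bU=\tfrac{1}{\sqrt2}(\be_i\be_j^\top+\be_j\be_i^\top)$, set $\bB_1=c\bI+t\bU$ and $\bB_2=c\bI-t\bU$ with $t=1/2$; both are positive-definite for $c$ large enough (the eigenvalues of $\bU$ are $\pm1$ on the $\{i,j\}$ block and $0$ elsewhere). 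The inverse of $c\bI\pm t\bU$ is again supported on the diagonal plus the single symmetric pair of entries $(i,j),(j,i)$: indeed $(c\bI\pm t\bU)^{-1}$ differs from $\tfrac1c\bI$ only in the $2\times 2$ principal submatrix indexed by $\{i,j\}$, hence has at most $n+2\le k$ nonzero entries. Then $\bB_1-\bB_2=2t\bU=\bU$. \textbf{Step 3:} Check that in every case $\bB_1,\bB_2$ indeed lie in $\Sigma$: symmetry is immediate, positive-definiteness holds by choosing $c$ (which may depend on the basis element, which is harmless) larger than the relevant perturbation norm, and the inverse-sparsity bound $\le n+2\le k$ was verified. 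Conclude that the whole orthonormal basis lies in $\Sigma-\Sigma$.

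\textbf{Main obstacle.} The one point requiring genuine care is Step 2: verifying that the inverse of $c\bI\pm t\bU$ has the claimed sparsity pattern, i.e.\ that perturbing $c\bI$ by a matrix supported on the $\{i,j\}$ block only changes the inverse within that same block. This follows from the block structure — writing the matrix in block form with the $\{i,j\}$ indices separated from the rest, the off-block part stays $c$ times identity, so by the block-inverse formula the inverse is block-diagonal with the complementary block equal to $\tfrac1c$ times identity and the $\{i,j\}$ block equal to $(c\bI_2\pm t\bU_{2\times2})^{-1}$. A clean way to present this is simply to compute $(c\bI\pm t\bU)^{-1}$ explicitly using $\bU^2$ acting as the identity on span$\{\be_i,\be_j\}$ and zero elsewhere, giving $(c\bI\pm t\bU)^{-1}=\tfrac1c\bI \mp \tfrac{t}{c^2-t^2}\bU + \left(\tfrac{1}{c}\cdot\tfrac{t^2}{c^2-t^2}\right)(\be_i\be_i^\top+\be_j\be_j^\top)$, whose support is the diagonal plus $(i,j),(j,i)$ — exactly $n+2$ nonzeros. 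Everything else is routine.
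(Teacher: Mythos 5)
Your proposal is correct and follows essentially the same route as the paper: the paper writes each diagonal basis element as $(\bI+\bE_{i,i})-\bI$ and each off-diagonal element (up to the $1/\sqrt{2}$ normalization, absorbed by the cone structure of $\Sigma$) as $(2\bI+\bE_{i,j}+\bE_{j,i})-2\bI$, checking directly that the perturbed identity matrices are positive definite with inverses supported on the diagonal plus the single $(i,j)$ pair, exactly as in your Steps 1--3. The only blemish is a harmless normalization slip: for $\bU=\tfrac{1}{\sqrt2}(\be_i\be_j^\top+\be_j\be_i^\top)$ the nonzero eigenvalues are $\pm 1/\sqrt2$ and $\bU^2$ is one half of the projector onto $\mathrm{span}\{\be_i,\be_j\}$, so the constants in your explicit inverse formula shift slightly, but the support pattern and hence the $n+2$ sparsity count are unaffected.
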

\begin{proof}
This orthonormal basis we consider is made of the $n(n+1)/2$ matrices: $\bE_{i,i}$ and $\frac{1}{\sqrt{2}}(\bE_{i,j}+\bE_{j,i})_{i\ne j}$, where $\bE_{i,j}$ is the matrix where the only nonzero entry is the $(i,j)$ entry which has value 1. 

First, consider $\bB_i=\bI+\bE_{i,i}$, where $\bI$ is the identity matrix. Since $\bB_i^{-1}=\bI-\frac{1}{2}\bE_{i,i}$ is $n$-sparse, we have $\bB_{i} \in \Sigma$. Since, $\bI \in \Sigma$, we have $\bE_{i,i} = \bB_{i}-\bI \in \sms$.

Now, consider the matrix $\bC_{i,j}=2\bI+\bE_{i,j}+\bE_{j,i}$. This matrix is symmetric and for $\bx=(x_1,\ldots,x_n)\in\rn$, we have $\bx^T\bC_{i,j}\bx=2(\ntwo{\bx}^2-x_ix_j)\geq 0$, so that $\bC_{i,j}$ is semi-definite positive. We can remark that $\bC_{i,j}$ is invertible and that its inverse is $\frac12 \bI+\frac16 (\bE_{i,i}+\bE_{j,j})-\frac13 (\bE_{i,j}+\bE_{j,i})$, which is $n+2$-sparse. The fact that $\bC_{i,j}$ is invertible implies that it is definite, so that $\bC_{i,j}\in\Sigma$. Therefore, we can write $\bE_{i,i}+\bE_{j,j}=\bC_{i,j}-2\bI\in\sms$. Since $\Sigma$ is a positive cone, multiplying this equality by $\frac{1}{\sqrt{2}}$ yields the desired result.
\end{proof}

\paragraph{Low-rank and nonsparse matrices.}
\label{sec:low_rank_nonsparse}

In \cite{CandesLMW11}, the authors consider a matrix decomposition of the form $\bL+\bS$, where $\bL$ is low-rank and $\bS$ is sparse. In order to give meaning to this decomposition, one must avoid $\bL$ to be sparse. To this end, a ``nonsparsity model'' for low-rank matrices was introduced.

Let $E$ be the space of complex matrices of size $n_1\times n_2$. Given $\mu\geq 1$ and $r\leq\min(n_1,n_2)$, let $\Sigma_{\mu,r}$ be the set of matrices of $E$ of rank $\leq r$ satisfying the two following conditions (denoting the SVD of such a matrix by $\sum_{k=1}^r \sigma_k\bu_k\bv_k^*$, where $\sigma_k>0$ and the $\bu_k$ and $\bv_k$ are unit-norm vectors) :
\begin{enumerate}
\item $\forall k, \ninf{\bu_k}\leq \sqrt{\frac{\mu r}{n_1}}$ and $\ninf{\bv_k}\leq \sqrt{\frac{\mu r}{n_2}}$.
\item Denoting $\bU$ and $\bV$ the matrices obtained by concatenating the vectors $\bu_k$ and $\bv_k$, $\ninf{\bU\bV^*}\leq \sqrt{\frac{\mu r}{n_1n_2}}$.
\end{enumerate}
These two conditions aim at ``homogenizing'' the entries of $\bU$ and $\bV$. Note that we necessarily have $\mu\geq 1$.

\begin{lemma}
\label{lem:lr_nonsparse_mat}
Let $E=\mathcal{M}_{n_1,n_2}(\mathbb{C})$ and $\Sigma_{\mu,r}$ be the subset of $E$ containing the matrices satisfying the two above conditions (with $\mu\geq 1$ and $r\geq 1$). Then $\Sigma_{\mu,r}-\Sigma_{\mu,r}$ contains an orthonormal basis.
\end{lemma}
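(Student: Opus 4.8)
The plan is to use the two-dimensional discrete Fourier basis of $E=\mathcal{M}_{n_1,n_2}(\mathbb{C})$, whose elements are simultaneously rank one and perfectly ``flat'', hence already members of $\Sigma_{\mu,r}$ as soon as $\mu\ge 1$ and $r\ge 1$; the whole basis then sits inside $\Sigma_{\mu,r}-\Sigma_{\mu,r}$ essentially for free.

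Concretely, I would set $\omega_1=e^{2\pi i/n_1}$, $\omega_2=e^{2\pi i/n_2}$ and, for $0\le k<n_1$ and $0\le\ell<n_2$, define $\mbf{F}_{k,\ell}\in E$ by $(\mbf{F}_{k,\ell})_{a,b}=\tfrac{1}{\sqrt{n_1n_2}}\,\omega_1^{ka}\omega_2^{\ell b}$. Two geometric sums give $\ps{\mbf{F}_{k,\ell}}{\mbf{F}_{k',\ell'}}=\delta_{k,k'}\delta_{\ell,\ell'}$ for the Frobenius inner product $\ps{\bA}{\bB}=\sum_{a,b}A_{a,b}\overline{B_{a,b}}$, so the $n_1n_2$ matrices $\mbf{F}_{k,\ell}$ form an orthonormal basis of $E$. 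Moreover $\mbf{F}_{k,\ell}=\bu\bv^*$ with $u_a=\omega_1^{ka}/\sqrt{n_1}$ and $v_b=\omega_2^{-\ell b}/\sqrt{n_2}$; both $\bu$ and $\bv$ have unit $\ell^2$-norm, so this is the SVD of $\mbf{F}_{k,\ell}$ (rank one, single singular value equal to $1$).

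Next I would verify $\mbf{F}_{k,\ell}\in\Sigma_{\mu,r}$. Its rank is $1\le r$. For condition~1, $\ninf{\bu}=1/\sqrt{n_1}\le\sqrt{\mu r/n_1}$ and $\ninf{\bv}=1/\sqrt{n_2}\le\sqrt{\mu r/n_2}$, both because $\mu r\ge 1$. For condition~2, the concatenations $\bU$, $\bV$ of the (single) singular vectors reduce to the columns $\bu$, $\bv$, so $\bU\bV^*=\bu\bv^*=\mbf{F}_{k,\ell}$ has all entries of modulus $1/\sqrt{n_1n_2}\le\sqrt{\mu r/(n_1n_2)}$. Hence $\mbf{F}_{k,\ell}\in\Sigma_{\mu,r}$. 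Since rescaling a matrix changes neither its singular vectors nor the quantities appearing in conditions~1--2, $2\mbf{F}_{k,\ell}$ is in $\Sigma_{\mu,r}$ as well (equivalently, the zero matrix lies in $\Sigma_{\mu,r}$, having rank $0\le r$ with vacuous conditions). Therefore $\mbf{F}_{k,\ell}=2\mbf{F}_{k,\ell}-\mbf{F}_{k,\ell}\in\Sigma_{\mu,r}-\Sigma_{\mu,r}$ for every admissible $(k,\ell)$, so $\Sigma_{\mu,r}-\Sigma_{\mu,r}$ contains the orthonormal basis $\{\mbf{F}_{k,\ell}\}$, which is exactly the claim.

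I do not expect a genuine obstacle: the DFT basis is the textbook example of a basis that is at once rank one and maximally incoherent, and all the inequalities above hold with the slack factor $\sqrt{\mu r}\ge 1$ to spare, which is precisely why the hypotheses $\mu\ge 1$, $r\ge 1$ suffice. The only point deserving a line of care is the reading of conditions~1--2 for a matrix of rank strictly less than $r$ (here rank one): one must use that ``rank $\le r$'' genuinely allows the SVD to have fewer than $r$ terms, so that $\bU$ and $\bV$ collect only the positive-singular-value directions; with that convention every step above is immediate.
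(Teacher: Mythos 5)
Your proof is correct and follows essentially the same route as the paper: both identify the two-dimensional discrete Fourier basis as a family of rank-one, maximally flat matrices lying in $\Sigma_{\mu,r}$, and then obtain the basis inside $\Sigma_{\mu,r}-\Sigma_{\mu,r}$ by subtracting an element of $\Sigma_{\mu,r}$ (the paper uses the zero matrix directly, you use $2\mbf{F}_{k,\ell}-\mbf{F}_{k,\ell}$ while noting the zero-matrix alternative). Your explicit verification of conditions 1--2, which the paper dismisses as obvious, is a welcome addition but does not change the argument.
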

\begin{proof}
Since $\Sigma_{\mu,r}$ contains the null matrix, it is sufficient to prove that $\Sigma_{\mu,r}$ contains an orthonormal basis. Let $\{\be_k\}_{k=1}^{n_1}$ and $\{\bf_{\ell}\}_{\ell=1}^{n_2}$ be the discrete Fourier bases of $\mathbb{C}^{n_1}$ and $\mathbb{C}^{n_2}$, that is
\begin{align*}
& \be_k=\frac{1}{\sqrt{n_1}}\left[1,\exp(2i\pi k/n_1),\ldots,\exp(2i\pi (n_1-1)k/n_1)\right]^T\\
\mathrm{and} & \;\bf_{\ell}=\frac{1}{\sqrt{n_2}}\left[ 1,\exp(2i\pi \ell/n_2),\ldots,\exp(2i\pi (n_2-1)\ell/n_2)\right]^T.
\end{align*}

Then the $n_1n_2$ rank-1 matrices of the form $\be_k\bf_{\ell}^*$ are elements of $\Sigma_{\mu,r}$ since they obviously satisfy the two above conditions. But they also form an orthonormal basis of $E$, since each entry of $\be_k\bf_{\ell}^*$ is of module $\frac{1}{\sqrt{n_1n_2}}$ and that, denoting $\ps{.}{.}$ the Hermitian scalar product on $E$,
\begin{equation}
\ps{\be_k\bf_{\ell}^*}{\be_{k^{\prime}}\bf_{\ell^{\prime}}^*}=\sum_{u=0}^{n_1-1}\exp\left(2i\pi u\frac{k-k^{\prime}}{n_1}\right)\sum_{v=0}^{n_2-1}\exp\left(2i\pi v\frac{\ell-\ell^{\prime}}{n_2}\right)=\delta_{k}^{k^{\prime}}\delta_{\ell}^{\ell^{\prime}},
\end{equation}
proving that these matrices form an orthonormal basis of $E$.
\end{proof}

%
%
%

\section{The NSP and its relationship with the RIP}
\label{sec:nsp_rip}

\revision{As we have seen in the previous section,} one cannot expect to get $\ell^2/\ell^2$ instance optimality in a dimensionality reduction context. This raises the following question: given pseudo-norms $\nx{.}$ and $\nz{.}$ defined respectively on $E$ and $F$, is there a pseudo-norm $\ny{.}$ such that IOP holds? We will see that this property is closely related to the RIP on $\bM$.

\subsection{Generalized RIP and its necessity for robustness}

The Restricted Isometry Property is a widely-used property on the operator $\bM$ which yields nice stability and robustness results on the recovery of vectors from their compressive measurements. In the usual CS framework, the RIP provides a relation of the form $(1-\delta)\nx{\bx}\leq \nz{\bM\bx} \leq (1+\delta)\nx{\bx}$ for any vector $\bx$ in $\Sigma_{2k}$. The norms $\nx{.}$ and $\nz{.}$ are usually both taken as the $\ell^2$-norm. A form of RIP can easily be stated in a generalized framework: we will say that $\bM$ satisfies the RIP on $\sms$ if there exists positive constants $\alpha,\beta$ such that
\begin{equation}
\label{generalized_rip}
\forall\bz\in\sms, \alpha\nx{\bz}\leq \nz{\bM\bz} \leq \beta\nx{\bz}.
\end{equation}
Similarly to the sparse case, it is possible to make a distinction between \emph{lower-RIP} (left inequality) and \emph{upper-RIP} (right inequality). Let's remark that this definition has been stated for vectors of $\sms$: this choice is justified by the links between this formulation and the NSP, which will be discussed later in this section. Let's also note that this form of RIP encompasses several generalized RIP previously proposed, as mentioned in Section \ref{sec:gen_rip}.

Let's now suppose the existence of decoders robust to noise, that is for all $\delta>0$, \eqref{io_gen_robust} is satisfied for a certain $\dd$. This property implies the Robust NSP \eqref{robust_nsp} with the same constants according to Theorem \ref{thm:io_nsp_noise}. By considering $\bh\in\sms$, the Robust NSP reads:
\begin{equation}
\forall\bh\in\sms, \nx{\bh}\leq D_2\nz{\bM\bh}.
\end{equation}
This is the lower-RIP on $\sms$, with constant $1/D_2$. The stability to noise therefore implies the lower-RIP on the set of differences of vectors of $\Sigma$, which is therefore necessary if one seeks the existence of a decoder robust to noise.

\subsection{$M$-norm instance optimality with the RIP}

The lower-RIP is necessary for the existence of a Robust instance optimal decoder, but what can we say this time if we suppose that $\bM$ satisfies the lower-RIP on $\sms$ with constant $\alpha$, that is $\forall\bz\in\sms, \alpha\nx{\bz}\leq \nz{\bM\bz}$? We will prove that in both the noiseless and the noisy cases, this implies the IOP with norms $\nx{.}$ and $\nM{.}$, the latter being called ``$M$-norm''\footnote{to highlight its dependency on the Measurement operator} and involving $\nx{.}$ and $\nz{.}$.

Let's define the $M$-norm on $E$ as the following quantity, extending its definition for $\ell^2$ norms in \cite{PelegGD13} and its implicit appearance in the proof of early results of the field \cite{Candes08}:
\begin{equation}
\label{m_norm}
\forall\bx\in E, \nM{\bx}=\nx{\bx}+\frac{1}{\alpha}\nz{\bM\bx}.
\end{equation}

Note that the term $M$-norm should be understood as $M$-pseudo-norm in the general case: if $\nz{.}$ and $\nx{.}$ satisfy the properties listed in Table \ref{tab:pseudo_norms}, then $\nM{.}$ satisfies the same properties as $\nx{.}$. However, when $\nx{.}$ and $\nz{.}$ are norms, $\nM{.}$ is also a norm. We will note $\dM{.,.}$ its associated (pseudo-)distance. The following theorem states that this $\nM{.}$ allows one to derive an NSP from the lower-RIP on $\sms$.

\begin{theorem}
\label{thm:nsp_mnorm}
Let's suppose that $\bM$ satisfies the lower-RIP on $\sms$ with constant $\alpha$ (left inequality of \eqref{generalized_rip}). Then the following Robust NSP is satisfied:
\begin{equation}
\forall\bh\in E, \nx{\bh}\leq \dM{\bh,\sms}+\frac{1}{\alpha}\nz{\bM\bh}.
\end{equation}
In particular, the following regular NSP is satisfied:
\begin{equation}
\forall\bh\in\N, \nx{\bh}\leq \dM{\bh,\sms}.
\end{equation}
\end{theorem}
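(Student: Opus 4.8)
The plan is to unfold the definition of the $M$-(pseudo-)norm and the associated pseudo-distance, then use the lower-RIP on $\sms$ together with the triangle inequality of $\nx{\cdot}$ and $\nz{\cdot}$. First I would recall that $\dM{\bh,\sms}=\inf_{\bz\in\sms}\nM{\bh-\bz}=\inf_{\bz\in\sms}\bigl(\nx{\bh-\bz}+\tfrac1\alpha\nz{\bM\bh-\bM\bz}\bigr)$. Fix an arbitrary $\bz\in\sms$. The goal is to bound $\nx{\bh}$ by $\nx{\bh-\bz}+\tfrac1\alpha\nz{\bM(\bh-\bz)}+\tfrac1\alpha\nz{\bM\bh}$, i.e. by $\nM{\bh-\bz}+\tfrac1\alpha\nz{\bM\bh}$; taking the infimum over $\bz\in\sms$ then yields the Robust NSP, and restricting to $\bh\in\N$ (so $\bM\bh=0$ and $\nz{\bM\bh}=0$) gives the regular NSP.

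The key chain of inequalities is as follows. By the triangle inequality for $\nx{\cdot}$ (and its symmetry), $\nx{\bh}\le \nx{\bh-\bz}+\nx{\bz}$. Since $\bz\in\sms$, the lower-RIP gives $\nx{\bz}\le \tfrac1\alpha\nz{\bM\bz}$. Now write $\bM\bz=\bM\bh-\bM(\bh-\bz)$ and apply the triangle inequality and symmetry of $\nz{\cdot}$: $\nz{\bM\bz}\le \nz{\bM\bh}+\nz{\bM(\bh-\bz)}$. Combining,
\[
\nx{\bh}\le \nx{\bh-\bz}+\frac1\alpha\nz{\bM(\bh-\bz)}+\frac1\alpha\nz{\bM\bh}=\nM{\bh-\bz}+\frac1\alpha\nz{\bM\bh}.
\]
Taking the infimum over all $\bz\in\sms$ on the right-hand side gives $\nx{\bh}\le \dM{\bh,\sms}+\tfrac1\alpha\nz{\bM\bh}$, which is the claimed Robust NSP; for $\bh\in\N$ the last term vanishes, giving $\nx{\bh}\le \dM{\bh,\sms}$.

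I do not expect a serious obstacle here: the only subtleties are bookkeeping ones — making sure the triangle inequality and symmetry of the pseudo-norms (which is all that Table \ref{tab:pseudo_norms} guarantees) are the only properties invoked, and being careful that one takes the infimum over $\bz\in\sms$ rather than prematurely choosing a near-minimizer (the argument above needs no minimizer to exist, which is important since $\Sigma$ is an arbitrary subset and the infimum in $\dM{\cdot,\sms}$ need not be attained). One should also note that $0\in\sms$ is not needed. The mild point worth a sentence is that $\nM{\cdot}$ indeed satisfies symmetry and the triangle inequality (so that $\dM{\cdot,\cdot}$ is a genuine pseudo-distance and the infimum defining $\dM{\bh,\sms}$ makes sense), which follows immediately from the corresponding properties of $\nx{\cdot}$ and $\nz{\cdot}$ and linearity of $\bM$, as already remarked just before the theorem statement.
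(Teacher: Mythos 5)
Your argument is correct and is essentially identical to the paper's own proof: the same chain $\nx{\bh}\le\nx{\bh-\bz}+\nx{\bz}$, the lower-RIP bound $\nx{\bz}\le\frac1\alpha\nz{\bM\bz}$, the triangle inequality $\nz{\bM\bz}\le\nz{\bM(\bh-\bz)}+\nz{\bM\bh}$, and an infimum over $\bz\in\sms$ at the end. No gaps; the remarks about not needing the infimum to be attained and about only using symmetry and the triangle inequality are accurate.
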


Therefore, if $\bM$ satisfies the lower-RIP on $\sms$ with constant $\alpha$, then for all $\delta>0$, there exists a noise-robust instance optimal decoder $\dd$ satisfying the following property (Theorem \ref{thm:nsp_io_noise}):
\begin{equation}
\label{eq:io_mnorm}
\forall\bx\in E, \forall\be\in F, \nx{\bx-\dd(\bM\bx+\be)}\leq 2\dM{\bx,\Sigma}+\frac{2}{\alpha}\nz{\be}+\delta.
\end{equation}

Note that in \cite{Blumensath11}, the author explored the implication of a lower-RIP on
$\sms$ for the case where $\Sigma$ is an arbitrary UoS and
$\|.\|_G$/$\|.\|_F$ are the $\ell^2$ norm. He proved that this
generalized lower-RIP implies the following IOP: for all $\delta>0$, there exists a decoder $\dd$ such that
\begin{equation}
\label{io_blumensath}
\forall\bx\in E, \forall\be\in F, \forall\bz\in\Sigma, \ntwo{\bx-\dd(\bM\bx+\be)}\leq \ntwo{\bx-\bz} + \frac{2}{\alpha}\ntwo{\bM(\bx-\bz)+\be} + \delta.
\end{equation}

In this set-up, the instance optimality in equation \eqref{eq:io_mnorm} can be reformulated as
\begin{equation}
\forall\bx\in E, \forall\be\in F, \forall\bz\in\Sigma, \ntwo{\bx-\dd(\bM\bx+\be)}\leq 2\ntwo{\bx-\bz} + \frac{2}{\alpha}\ntwo{\bM(\bx-\bz)}+\frac{2}{\alpha}\ntwo{\be} + \delta.
\end{equation}

Comparing these two instance optimality results, we can remark that the one in \cite{Blumensath11} is slightly tighter. This is merely a consequence of the difference in our method of proof, as we add the NSP as an intermediate result to prove instance optimality. The upper bound in \cite{Blumensath11} can also be derived in our case with the same proof layout if we suppose the lower-RIP. Compared to \cite{Blumensath11}, our theory deals with general (pseudo-)norms and sets $\Sigma$ beyond Union of Subspaces.

\revision{
\subsection{Infinite-dimensional examples}

As mentioned in the introduction, we do not constrain the signal space to be finite-dimensional, so that we can apply our results in an infinite-dimensional framework.

\subsubsection{Negative example: Sparse model in a separable Hilbert space}
\label{sec:sparse_hilbert}

If $E=L^2([0,1])$ and $\{\varphi_n\}_{n\in\n}$ is an orthonormal basis of $E$, a typical measurement process of a signal $\bx$ is to subsample along another orthonormal basis $\{\Psi_n\}_{n\in\n}$. Typically, $\{\varphi_n\}$ is a wavelet basis and $\{\Psi_n\}$ a Fourier basis. In \cite{BenAdcock:2013va}, the authors argue that standard sparsity does not represent well natural signals, which are rather \emph{asymptotically sparse}, that is more sparse at fine levels than at coarse levels. They propose an asymptotic sparsity model with different levels of sparsity on different scales.

Indeed, as the authors mention in their paper, a standard sparsity model $\Sigma$ with respect to basis $\{\varphi_n\}$ cannot yield uniform recovery for the $L^2$ norm: this is an obvious consequence of Theorem \ref{thm:l2l2_io} if $d_E$ is the $L^2$ distance and it is actually true for any other distance. Indeed, the NSP can never be satisfied since one has $\ntwo{\varphi_n+\varphi_{n+1}}=\sqrt{2}$ for all $n$ (since the family $\{\varphi_n\}$ is orthonormal) while the right hand side term of the NSP for $\bh_n=\varphi_n+\varphi_{n+1}$ is equal to
\begin{equation}
d(\bh_n,\sms)+\ntwo{\bM\bh_n}=\ntwo{\bM\bh_n},
\end{equation}
which goes to 0 when $n\rightarrow\infty$ (since $\bM$ is continuous).

\subsubsection{Positive example: Topological RIP result for $\Sigma$ of finite box-counting dimension}
\label{sec:topo_rip}

Even though the IOP cannot be satisfied for the standard sparse model in a Hilbert space, it does not mean IOP is impossible for all models in an infinite-dimensional space. Let's mention the following topological result, which is Theorem 8.1 in \cite{Robinson10} and ensures that a RIP is satisfied in some settings:


\begin{theorem}
Let $\Sigma$ be a compact subset of a Banach space $\B$ supplied with norm $\norm{.}_{\B}$. Suppose $\Sigma$ has finite (upper) box-counting dimension $d$. Then for any $m>2d$, any norm $\norm{.}$ on $\rm$, and any $\theta$ satisfying
\[
0<\theta < \frac{m-2d}{m(1+d)},
\]
there exists a prevalent set\footnote{A prevalent set being a set which complementary is negligible in a certain sense. Definition is given in \cite{Robinson10}.} of continuous linear operators $\bM:\B\rightarrow \rm$ such that for any $\bx,\by\in\Sigma$,
\begin{equation}
C_{\bM}\norm{\bx-\by}_{\B}\leq \norm{\bM\bx-\bM\by}^{\theta}.
\end{equation}
\end{theorem}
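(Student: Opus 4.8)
The plan is to derive this as an instance of Hunt--Kaloshin-type prevalent embedding theory for compact sets of finite box-counting dimension in a Banach space; the heart is that a ``large'' family of continuous linear maps $\bM:\B\to\rm$ is injective on the difference set $\Sigma-\Sigma$ with a H\"older-continuous inverse. First I would reduce to that difference set: writing $\bu=\bx-\by\in\Sigma-\Sigma$, the asserted inequality is precisely $\norm{\bM\bu}\geq C_{\bM}^{1/\theta}\norm{\bu}_{\B}^{1/\theta}$, a uniform lower H\"older bound for $\bM$ on $\Sigma-\Sigma$; and $\Sigma-\Sigma$, being a Lipschitz image of $\Sigma\times\Sigma$, has upper box-counting dimension at most $2d$, which is the source of the threshold $2d$.

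Next I would build the probe measure witnessing prevalence. Fix a bounded reference operator $\bM_0$ and a countable family of rank-one maps $\bL_i=\varphi_i(\cdot)\,\bv_i$, with $\varphi_i$ in the dual unit ball and the $\bv_i$ cycling through a fixed basis of $\rm$, chosen so that for every difference direction $\bu$ there are $m$ indices $i$ for which $|\varphi_i(\bu)|$ is comparable to $\norm{\bu}_{\B}$ and the matching $\bv_i$ span $\rm$. Then, restricted to those $m$ parameters, $\lambda\mapsto(\bM_0+\sum_i\lambda_i\bL_i)\bu$ is an affine bijection onto $\rm$ with Jacobian bounded below by a fixed power of $\norm{\bu}_{\B}$. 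The probe is Lebesgue measure on a cube in the first $p$ parameters; since $p$ has to grow as finer scales of $\Sigma-\Sigma$ are resolved, this is where an efficient approximation of $\Sigma-\Sigma$ by finite-dimensional subspaces (a thickness-type ingredient) enters, and it is this growth rate that produces the factor $1+d$ in the admissible range for $\theta$. The goal is to show the ``bad'' set --- operators admitting no constant $C_{\bM}$ --- is \emph{shy}, i.e.\ null under every translate of this probe.

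The core is then a covering argument over dyadic scales. At scale $2^{-k}$ cover $\Sigma-\Sigma$ by $N_k\leq C2^{k(2d+\eta)}$ balls of radius $2^{-k}$; for each of the at most $N_k^2$ pairs of centres at mutual distance of order $2^{-k}$, the difference direction $\bu$ has $\norm{\bu}_{\B}$ of order $2^{-k}$, and by the affine change of variables onto $\rm$ above the Lebesgue measure of parameters with $\norm{(\bM_0+\sum_i\lambda_i\bL_i)\bu}\leq\rho$ is at most $C(\rho/\norm{\bu}_{\B})^{m}$. Choosing the threshold $\rho_k$ as the power of $\norm{\bu}_{\B}$ dictated by $\theta$, the measure of parameters for which some scale-$2^{-k}$ pair is collapsed below $\rho_k$ is at most $C N_k^2\rho_k^{m}=C2^{k\gamma}$ with $\gamma<0$ exactly when $\theta<\tfrac{m-2d}{m(1+d)}$ --- the extra $1+d$ absorbing the loss incurred when telescoping the H\"older estimate from net points to genuine points across intermediate scales. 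The series over $k$ then converges, Borel--Cantelli gives that for almost every parameter only finitely many scales are bad, and together with continuity of $\bM$ and injectivity on the compact set $\Sigma-\Sigma$ (the same bound with $\rho=0$) this yields a single uniform constant $C_{\bM}$ for all $\bx,\by\in\Sigma$. Finally, the estimates being uniform under translation of the parameter cube upgrades ``full measure under one probe'' to genuine prevalence.

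The main obstacle is not the union bound but the simultaneous bookkeeping linking the probe construction and the covering estimate: one must choose the functionals $\varphi_i$ and the number $p=p(k)$ of active parameters so that (i) for every pair of cover elements there really are $m$ parameters giving the affine bijection onto $\rm$ with Jacobian at least a fixed power of $\norm{\bu}_{\B}$, uniformly in the pair, and (ii) $p(k)$ --- governed by how economically $\Sigma-\Sigma$ is approximated by $p$-dimensional subspaces --- grows slowly enough that $\gamma$ remains negative throughout the stated range of $\theta$. Quantifying this ``thickness'' cost of replacing the ambient Banach space by a finite-dimensional probe is the delicate part; the rest is a routine prevalence/Borel--Cantelli argument.
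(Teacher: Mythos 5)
This theorem is not proved in the paper at all: it is quoted verbatim as Theorem 8.1 of \cite{Robinson10} (Robinson's prevalent H\"older-embedding theorem), so there is no internal proof to compare your argument against. What you have written is a faithful outline of the standard Hunt--Kaloshin/Robinson strategy that does prove it: pass to $\Sigma-\Sigma$, whose upper box-counting dimension is at most $2d$; build a probe measure from rank-one perturbations of a reference operator; bound the measure of parameters collapsing a given difference vector $\bu$ below a threshold; and run a union bound over dyadic nets plus Borel--Cantelli. The architecture, the reduction explaining the $2d$, and the identification of the bad set as shy are all correct.

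The outline does, however, stop short of the one step that actually produces the stated exponent, and your attribution of the factor $1+d$ is not quite right. The measure estimate $C\left(\rho/\norm{\bu}_{\B}\right)^{m}$ you invoke only holds when the map $\lambda\mapsto\sum_i\lambda_i\varphi_i(\bu)\bv_i$ is a surjection onto $\rm$ with singular values bounded below by $c\norm{\bu}_{\B}$; in an infinite-dimensional $\B$ one can only choose finitely many functionals $\varphi_i$, adapted to a finite-dimensional subspace approximating $\Sigma$ to within some $\delta$, and for a general $\bu\in\Sigma-\Sigma$ the guaranteed lower bound on $|\varphi_i(\bu)|$ degrades to a higher power of $\norm{\bu}_{\B}$ controlled by the thickness exponent $\tau(\Sigma)$. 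It is this degraded Jacobian bound --- not the telescoping from net points to genuine points, which is a routine interpolation --- that replaces $m$ by $m(1+\tau)$ in the denominator; the statement in the paper then follows from the general inequality $\tau(\Sigma)\leq d$. Making that quantitative (how large $\delta^{-\tau}$-dimensional approximating subspaces must be at each scale, and verifying the resulting geometric series still converges for all $\theta<\frac{m-2d}{m(1+d)}$) is the substance of Robinson's proof, and it is precisely the part your sketch defers. As a roadmap the proposal is sound; as a proof it is incomplete at exactly the point where the theorem's constant is decided.
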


This theorem essentially says that if $\Sigma$ has finite upper box-counting dimension\footnote{This is a notion of dimension defined by asymptotic behavior of $\epsilon$-covers.}, then a lower-RIP is satisfied for a prevalent set of operators $\bM$ with the pseudo-norms $\norm{.}_{\B}$ and $\norm{.}^{\theta}$ -- this last one being a pseudo-norm since $\theta\leq 1$. According to the previous section, this implies an IOP with the corresponding $M$-norm, that is the existence of a family of decoders $\dd$ such that for all $\bx\in E$, $\be\in F$ and $\bz\in\Sigma$:

\begin{equation}
\label{eq:iop_topological}
\norm{\bx-\Delta(\bM\bx)}_{\B}\leq 2d_M(\bx,\Sigma)+\frac{2}{C_{\bM}}\norm{\be}^{\theta}+\delta.
\end{equation}

We necessarily have $\theta\leq \frac{1}{d}$, meaning the exponent drops to 0 as $d$ grows, and therefore the corresponding IOP becomes much less powerful with a high-dimensional set $\Sigma$ (in the sense of the upper box-counting dimension). However, this essentially proves that uniform instance optimality is possible even in infinite dimensions with appropriate $\Sigma$ and pseudo-norms. Furthermore, weakening the existence of a \emph{prevalent} set of operators $\bM$ to the existence of \emph{an} operator $\bM$ or a \emph{certain class} of such operators satisfying a Robust IOP has the potential to yield IOP with better pseudo-norms.

As an example of an infinite-dimensional model with finite upper box-counting dimension, let's consider the problem experimented in \cite{Bourrier:2013wl}: we consider $E=L^1(\rn)\cap L^2(\rn)$ and aim at decoding a probability density $p\in E$ from a linear measurement $\bM p$. The \emph{a priori} on $p$ is that it can be expressed as a linear combination of a few densities taken in a set $\P$. In \cite{Bourrier:2013wl}, the authors considered $\P$ as a set of isotropic Gaussians, that is
\begin{equation}
\P=\left\{p_{\bm}: \bx\rightarrow \exp\left(-\ntwo{\bx-\bm}^2\right) |\bm\in\rn\right\}.
\end{equation}
Denoting $\Sigma_k(\P)$ the compact set of convex linear combinations of $k$ elements in $\P$ with $\ntwo{\bm}\leq C$, the upper-box counting dimension of $\Sigma_k(\P)$ is upper bounded by $k(n+1)$, so that a prevalent set of linear operators satisfies the IOP \eqref{eq:iop_topological} as soon as the number of measurements satisfies $m> 2k(n+1)$.

This example shows that one can obtain uniform IOP for an infinite-dimensional model which ``spans in an infinite number of directions'', such as the aforementioned model $\Sigma_k(\P)$. We hope that more precise characterizations on this kind of IOP can be obtained in this general framework.
}

\subsection{Upper-bound on the $M$-norm by an atomic norm}

As we have seen, provided a \emph{lower-RIP} on $\sms$, an NSP can be derived with the $M$-norm as $\ny{.}$. However, this may look like a tautology since the $M$-norm explicitly depends on $\bM$. Hence, one may wonder if this NSP is of any use. We will prove in the following that provided an \emph{upper-RIP} on a certain cone $\Sigmap$ (which can be taken as $\r\Sigma$), a more natural upper bound can be derived by bounding the $M$-norm with an atomic norm \cite{ChandrasekaranRPW12}. In particular, this type of inequality applied to the usual $k$-sparse vectors and low-rank matrices models give, under standard RIP conditions, instance optimality upper bounds with typical norms. 

We will suppose in this section that $\nx{.}$ is a norm.

\subsubsection{The atomic norm $\spnorm{.}$}

Let $\Sigmap$ be a subset of $E$ and let $E^{\prime}$ be the closure of $\mathrm{span}(\Sigmap)$ with respect to the norm $\nx{.}$. For $\bx\in E^{\prime}$, one can define the ``norm'' $\spnorm{\bx}$ by:
\begin{equation}
\label{cone_norm}
\spnorm{\bx} := \inf{} {\left\{\sum_{k=0}^{+\infty}\nx{\bx_k}: \forall k, \bx_k\in\r\Sigmap\;\mathrm{and}\;\nx{\bx-\sum_{k=0}^K\bx_k}\rightarrow_{K\rightarrow +\infty} 0\right\}}.
\end{equation}

Remark that there may be some vectors $\bx$ for which $\spnorm{\bx}=+\infty$, if $\sum_{k=0}^{+\infty}\nx{\bx_k}=+\infty$ for any decomposition of $\bx$ as an infinite sum of elements of $\r\Sigmap$. However, the set $V=\{\bx\in E|\spnorm{\bx}<+\infty\}$ is a normed subspace of $E$ which contains $\Sigmap$ \cite{DeVT96}. In the following, we assume that $V=E$. Note that this norm can be linked to atomic norms defined in \cite{ChandrasekaranRPW12} by considering $\mathcal{A}$ as the set of normalized elements of $\Sigmap$ with respect to $\nx{.}$.

Now suppose $\bM$ satisfies an upper-RIP on $\Sigmap$, so that
\begin{equation}
\forall\bxp\in\Sigmap, \nz{\bM\bxp}\leq \beta\nx{\bxp}.
\end{equation}

For $\bx\in E$ admitting a decomposition $\sum_{k=0}^{+\infty}\bx_k$ on $\r\Sigmap$, we can therefore upper bound $\nz{\bM\bx}$ by $\sum_{k=0}^{+\infty}\nz{\bM\bx_k}\leq \beta \sum_{k=0}^{+\infty}\nx{\bx_k}$. This inequality is valid for any decomposition of $\bx$ as a sum of elements of $\r\Sigmap$, so that $\nz{\bM\bx}\leq \beta\spnorm{\bx}$. Therefore, under these hypotheses,
\begin{equation}
\forall\bx\in E, \nM{\bx}\leq \nx{\bx}+\frac{\beta}{\alpha}\spnorm{\bx}\leq\left(1+\frac{\beta}{\alpha}\right)\spnorm{\bx}.
\end{equation}

In particular, we have the following result:

\begin{theorem}
\label{thm:rip_nsp_full}
Suppose $\bM$ satisfies the lower-RIP on $\sms$ with constant $\alpha$ and the upper-RIP on $\Sigma$ with constant $\beta$, that is
\begin{equation}
\forall\bx\in \sms, \alpha\nx{\bx}\leq \nz{\bM\bx}
\end{equation}
and
\begin{equation}
\forall\bx\in \Sigma, \nz{\bM\bx}\leq \beta\nx{\bx}.
\end{equation}
Then for all $\delta>0$, there exists a decoder $\dd$ satisfying
\begin{equation}
\forall\bx\in E,\forall\be\in F, \nx{\bx-\dd(\bM\bx+\be)}\leq 2\left(1+\frac{\beta}{\alpha}\right) d_{\Sigma}(\bx,\Sigma)+\frac{2}{\alpha}\ny{\be}+\delta,
\end{equation}
where $d_{\Sigma}$ is the distance associated to the norm $\snorm{.}$.
\end{theorem}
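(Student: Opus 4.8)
The plan is to obtain this statement essentially for free by chaining the three results built up in this section: Theorem~\ref{thm:nsp_mnorm}, which turns the lower-RIP on $\sms$ into an NSP for the $M$-norm; Theorem~\ref{thm:nsp_io_noise}, which builds a noise-robust instance optimal decoder from a Robust NSP; and the atomic-norm domination of the $M$-norm derived just before the statement. Recall that $\bA=\bI$ and $\nx{\cdot}$ is a norm throughout this section.

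First I would apply Theorem~\ref{thm:nsp_mnorm}: since $\bM$ satisfies the lower-RIP on $\sms$ with constant $\alpha$, we obtain $\forall\bh\in E,\ \nx{\bh}\le\dM{\bh,\sms}+\tfrac1\alpha\nz{\bM\bh}$, which is precisely the Robust NSP~\eqref{robust_nsp} with $d_E=\dM{\cdot,\cdot}$ (that is, with the distance-to-model pseudo-norm $\ny{\cdot}$ taken to be $\nM{\cdot}$, an admissible choice since it is symmetric, subadditive, and vanishes at $0$), $D_1=1$ and $D_2=1/\alpha$. Feeding this into Theorem~\ref{thm:nsp_io_noise} then produces, for every $\delta>0$, a decoder $\dd$ satisfying $\nx{\bx-\dd(\bM\bx+\be)}\le 2\dM{\bx,\Sigma}+\tfrac2\alpha\nz{\be}+\delta$ for all $\bx\in E$ and $\be\in F$ -- this is exactly~\eqref{eq:io_mnorm}.

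It remains to dominate the $M$-distance $\dM{\bx,\Sigma}$ by the $\snorm{\cdot}$-distance. Here I would reuse the pointwise estimate established just above: for any admissible decomposition $\bx=\sum_{k\ge0}\bx_k$ with $\bx_k\in\r\Sigma$, linearity of $\bM$, subadditivity of $\nz{\cdot}$ and the upper-RIP give $\nz{\bM\bx}\le\sum_k\nz{\bM\bx_k}\le\beta\sum_k\nx{\bx_k}$; taking the infimum over decompositions yields $\nz{\bM\bx}\le\beta\snorm{\bx}$, and since $\nx{\cdot}\le\snorm{\cdot}$ as well, $\nM{\bx}=\nx{\bx}+\tfrac1\alpha\nz{\bM\bx}\le\bigl(1+\tfrac\beta\alpha\bigr)\snorm{\bx}$ for every $\bx\in E$. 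Applying this to $\bx-\bz$ and taking the infimum over $\bz\in\Sigma$ turns it into $\dM{\bx,\Sigma}\le\bigl(1+\tfrac\beta\alpha\bigr)\ds(\bx,\Sigma)$; substituting into the bound of the previous paragraph gives the asserted inequality (with the noise measured in $F$ by $\nz{\cdot}$).

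The only genuinely non-routine point is the atomic-norm estimate, where the subtlety is that the upper-RIP is stated on $\Sigma$ while the atoms $\bx_k$ range over the cone $\r\Sigma$: one has to check that $\nz{\bM\bx_k}\le\beta\nx{\bx_k}$ persists for every $\bx_k\in\r\Sigma$, which uses homogeneity of $\nx{\cdot}$ (a norm here) together with control of $\nz{\cdot}$ under scaling along rays -- equivalently, one may simply impose the upper-RIP directly on the cone $\r\Sigma=\Sigmap$, as in the preceding discussion. Beyond that, one should confirm that the series defining $\snorm{\cdot}$ in~\eqref{cone_norm} converges and commutes with $\bM$ (continuity and linearity of $\bM$, together with $\sigma$-subadditivity of $\nz{\cdot}$), and recall the standing assumption that $\snorm{\cdot}$ is finite on all of $E$ so that $\ds(\cdot,\Sigma)$ is well defined; everything else is just propagating the constants $2$, $1/\alpha$ and $\beta/\alpha$ through Theorems~\ref{thm:nsp_mnorm} and~\ref{thm:nsp_io_noise}.
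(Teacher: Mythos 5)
Your proof is correct and follows exactly the route the paper itself takes: Theorem~\ref{thm:nsp_mnorm} turns the lower-RIP on $\sms$ into the Robust NSP for the $M$-norm, Theorem~\ref{thm:nsp_io_noise} converts that into the decoder of~\eqref{eq:io_mnorm}, and the bound $\nM{\cdot}\leq\left(1+\frac{\beta}{\alpha}\right)\snorm{\cdot}$ derived just before the theorem statement finishes the argument. You also rightly flag the two points the paper leaves implicit, namely that the upper-RIP must be propagated from $\Sigma$ to the cone $\r\Sigma$ (via homogeneity of the norms, or by assuming it on $\Sigmap=\r\Sigma$ directly) and that the noise term in the stated inequality should read $\nz{\be}$ rather than $\ny{\be}$.
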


\begin{remark}
Note that these results can be extended with relative ease to the case where $\nx{.}$ is not necessarily homogeneous but $p$-homogeneous, that is $\nx{\lambda\bx}=|\lambda|^p\nx{\bx}$.
\end{remark}

\subsubsection{Study of $\snorm{.}$ in two usual cases}

We now provide a more thorough analysis of the norm $\snorm{.}$ for usual models which are sparse vectors and low-rank matrices. In particular, we give a simple equivalent of this norm involving usual norms in the case where $\nx{.}=\ntwo{.}$ (for matrices, this is the Frobenius norm).

The norm $\snorm{.}$ relies on the decomposition of a vector as a sum of elements of $\r\Sigma$. When $\Sigma$ is the set of $k$-sparse vectors or the set or matrices of rank $k$, there are particular decompositions of this type:
\begin{itemize}
\item In the case where $\Sigma$ is the set of $k$-sparse vectors, a vector $\bx$ can be decomposed as $\sum_{j=1}^{\infty}\bx_j$, where all $\bx_j$ are $k$-sparse vectors with disjoint supports, which are eventually zero, and such that any entry of $\bx_j$ does not exceed any entry of $\bx_{j-1}$ (in magnitude). This is a decomposition of $\bx$ into disjoint supports of size $k$ with a nonincreasing constraint on the coefficients.
\item Similarly, in the case where $\Sigma$ is the set of matrices of rank $k$ and $\bN$ is a matrix, the SVD of $\bN$ gives a decomposition of the form $\bN=\sum_{j=1}^{\infty}\bN_j$, where the $\bN_j$ are rank $k$, eventually zero matrices such that any singular value of $\bN_j$ does not exceed any singular value of $\bN_{j-1}$.
\end{itemize}

For $j\geq 2$, we can upper bound the quantity $\ntwo{\bx_j}$ using the assumption on the particular decomposition: $\ntwo{\bx_j}\leq \sqrt{k}\ninf{\bx_j}\leq \sqrt{k}\frac{\none{\bx_{j-1}}}{k}=\frac{\none{\bx_{j-1}}}{\sqrt{k}}$. Similarly, $\ntwo{\bN_j}\leq \frac{\tracenorm{\bN_{j-1}}}{\sqrt{k}}$, where $\tracenorm{.}$ is the trace norm, defined as the sum of singular values. We can therefore, in both cases, upper bound the norm $\snorm{.}$. In the case of $k$-sparse vectors, this gives:
\begin{equation}
\snorm{\bx}\leq \ntwo{\bx_1} + \sum_{j\geq 1}\frac{\none{\bx_j}}{\sqrt{k}}\leq \ntwo{\bx} + \frac{\none{\bx}}{\sqrt{k}}.
\end{equation}

In the case of matrices of rank $k$, this gives:
\begin{equation}
\snorm{\bN}\leq \ntwo{\bN_1} + \sum_{j\geq 1}\frac{\none{\bN_j}}{\sqrt{k}}\leq \nfrob{\bN} + \frac{\tracenorm{\bN}}{\sqrt{k}}.
\end{equation}

We can also upper bound the right hand side of these equations by $\mathcal{O}(\snorm{.})$ with a small constant, which will prove that the norms defined in these equations are of the same order. Indeed, a simple application of the triangle inequality gives us first that $\ntwo{\bx}\leq \snorm{\bx}$ and $\nfrob{\bN}\leq \snorm{\bN}$. Then, considering a decomposition of $\bx$ as a sum of $k$-sparse vectors $\sum_{j\geq 1}\bx_j$, we get 
\begin{equation}
\frac{\none{\bx}}{\sqrt{k}}\leq \sum_{j\geq 1} \frac{\none{\bx_j}}{\sqrt{k}}\leq \sum_{j\geq 1}\ntwo{\bx_j}
\end{equation}
(indeed, each $\bx_j$ can be viewed as a $k$-dimensional vector and we have for such a vector $\none{\bx_j}\leq \sqrt{k}\ntwo{\bx_j}$). Similarly,
\begin{equation}
\frac{\tracenorm{\bN}}{\sqrt{k}}\leq \sum_{j\geq 1}\nfrob{\bN_j}.
\end{equation}

Since these upper bounds are satisfied for any decomposition, they can be replaced respectively by $\snorm{\bx}$ and $\snorm{\bN}$. Finally, we have
\begin{equation}
\begin{array}{ccc}
\ntwo{\bx} + \frac{\none{\bx}}{\sqrt{k}}\leq 2\snorm{\bx} & \mathrm{and} & \nfrob{\bN} + \frac{\tracenorm{\bN}}{\sqrt{k}}\leq 2\snorm{\bN}.
\end{array}
\end{equation}
We have thus shown:
\begin{lemma}
When $\Sigma$ is the set of $k$-sparse vectors, the norm $\snorm{.}$ satisfies 
\begin{align}
 \|\cdot\|_{\Sigma} & \leq \ntwo{\cdot}+\frac{\none{\cdot}}{\sqrt{k}}  \leq 2 \|\cdot\|_{\Sigma}.\\
\intertext{When $\Sigma$ is the set of rank-$k$ matrices, it satisfies}
 \|\cdot\|_{\Sigma} & \leq \nfrob{\cdot}+\frac{\tracenorm{\cdot}}{\sqrt{k}} \leq 2 \|\cdot\|_{\Sigma}.
\end{align}
\end{lemma}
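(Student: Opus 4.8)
The plan is to prove each two-sided estimate as a pair of one-sided bounds, and to do so for the $k$-sparse model in full; the rank-$k$ matrix statement then follows by the same argument with the canonical support decomposition replaced by the singular value decomposition, $\ninf{\cdot}$ replaced by the largest singular value, $\none{\cdot}$ by $\tracenorm{\cdot}$, and $\ntwo{\cdot}$ by $\nfrob{\cdot}$. The upper bound on $\snorm{\cdot}$ will come from exhibiting one explicit admissible decomposition (a competitor in the infimum \eqref{cone_norm}); the lower bound will come from the triangle inequality for $\nx{\cdot}=\ntwo{\cdot}$ together with the elementary inequality $\none{\bu}\le\sqrt{k}\,\ntwo{\bu}$ valid on any $k$-dimensional coordinate block.

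For the upper bound $\snorm{\bx}\le \ntwo{\bx}+\none{\bx}/\sqrt{k}$, I would sort the entries of $\bx$ by nonincreasing magnitude and split them into consecutive blocks $\bx_1,\bx_2,\dots$ of size $k$ (eventually zero), each $\bx_j$ being $k$-sparse, hence in $\Sigma\subset\r\Sigma$, with the partial sums converging to $\bx$ in $\ntwo{\cdot}$. By definition $\snorm{\bx}\le\sum_{j\ge 1}\nx{\bx_j}=\sum_{j\ge1}\ntwo{\bx_j}$. I bound $\ntwo{\bx_1}\le\ntwo{\bx}$. For $j\ge 2$, the largest magnitude in block $j$ is at most the smallest in block $j-1$, hence at most the average $\none{\bx_{j-1}}/k$, so $\ntwo{\bx_j}\le\sqrt{k}\,\ninf{\bx_j}\le\none{\bx_{j-1}}/\sqrt{k}$. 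Since the supports of the $\bx_{j-1}$ are disjoint, $\sum_{j\ge2}\ntwo{\bx_j}\le\sum_{j\ge1}\none{\bx_j}/\sqrt{k}=\none{\bx}/\sqrt{k}$, which gives the claim.

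For the lower bound $\ntwo{\bx}+\none{\bx}/\sqrt{k}\le 2\snorm{\bx}$, I use first that $\nx{\cdot}$ is a norm: for any admissible decomposition $\bx=\sum_j\bx_j$ with $\bx_j\in\r\Sigma$ one has $\ntwo{\bx}\le\sum_j\ntwo{\bx_j}$, and taking the infimum yields $\ntwo{\bx}\le\snorm{\bx}$. Next, for any decomposition into $k$-sparse pieces, the triangle inequality for $\none{\cdot}$ and $\none{\bx_j}\le\sqrt{k}\,\ntwo{\bx_j}$ on each $k$-dimensional block give $\none{\bx}/\sqrt{k}\le\sum_j\none{\bx_j}/\sqrt{k}\le\sum_j\ntwo{\bx_j}$, whence $\none{\bx}/\sqrt{k}\le\snorm{\bx}$ after taking the infimum. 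Adding the two estimates gives the factor-$2$ bound. In the matrix case the only facts used are, for a rank-$k$ matrix $\bN_j$ arising from a block of the SVD, $\nfrob{\bN_j}\le\sqrt{k}\,\sigma_{\max}(\bN_j)$, $\sigma_{\max}(\bN_j)\le\tracenorm{\bN_{j-1}}/k$, and $\tracenorm{\bN_j}\le\sqrt{k}\,\nfrob{\bN_j}$, all standard.

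The only point requiring care — and hence the main (minor) obstacle — is legitimizing the infinite decompositions in \eqref{cone_norm}: one must check that the block/SVD decomposition actually converges to $\bx$ in the norm $\nx{\cdot}$ (true because the tail $\ell^2$-mass, resp. tail Frobenius mass, vanishes), so that it is an admissible competitor in the infimum, and for the matrix case that distinct SVD blocks are Frobenius-orthogonal so that no cross terms appear in the estimates $\nfrob{\bN}\le\snorm{\bN}$ and $\tracenorm{\bN}/\sqrt{k}\le\snorm{\bN}$. Everything else is the elementary chain of inequalities above.
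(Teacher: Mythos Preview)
Your proposal is correct and follows essentially the same route as the paper: the same sorted block decomposition (respectively SVD block decomposition) for the upper bound via the chain $\ntwo{\bx_j}\le\sqrt{k}\,\ninf{\bx_j}\le\none{\bx_{j-1}}/\sqrt{k}$, and the same pair of triangle-inequality arguments for the lower bound using $\none{\bx_j}\le\sqrt{k}\,\ntwo{\bx_j}$ on $k$-sparse pieces. The extra care you flag about convergence of the decomposition is harmless here (the block decompositions are finite in $\rn$ and in $\mathcal{M}_{n_1,n_2}$), and your remark about Frobenius orthogonality of SVD blocks is only needed for the identity $\sum_j\tracenorm{\bN_j}=\tracenorm{\bN}$ in the \emph{upper} bound, not for the lower-bound estimates you attach it to.
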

\paragraph{} We can thus remark that for these two standard models, the norm $\snorm{.}$ can easily be upper bounded by usual norms under RIP conditions, yielding an IOP with a usual upper bound. We can also note that stronger RIP conditions can yield a stronger result: in \cite{Candes08}, the author proves that under upper and lower-RIP on $\sms$ with $\Sigma$ being the set of $k$-sparse vectors, an instance optimal decoder can be defined as the minimization of a convex objective: the $\ell^{1}$ norm, which appears as strongly connected to the norm $\|\cdot\|_\Sigma$. One may then wonder if a generalization of such a result is possible: when can an instance optimal decoder be obtained by solving a convex minimization problem with a norm related to $\|\cdot\|_{\Sigma}$?

\section{Discussion and outlooks on Instance Optimality}
\label{sec:discussion}

Let's now summarize the results and give some insights on interesting future work. As has been detailed throughout the paper, Instance Optimality is a property presenting several benefits:
\begin{itemize}
\item It can be defined in a very general framework, for any signal space, signal model and pseudo-norms, as well as for both noiseless and noisy settings.
\item It is a nice uniform formulation of the ``good behavior'' of a decoder and thus of the well-posedness of an inverse problem.
\item It can be linked to Null Space Property and Restricted Isometry Property, which provide necessary and/or sufficient conditions for the existence of an Instance Optimal decoder.
\end{itemize}

We now present some immediate outlooks and interesting open questions related to instance optimality and to the results presented in this paper.

\paragraph{Condition for the well-posedness of the ``optimal'' decoder.}

We have seen that for general models $\Sigma$, an additionnal term $\delta$ appears in the right hand side term of the instance optimality inequality (\eqref{io_generalized_nonexact},\eqref{io_gen_robust}), reflecting the fact that the minimal distance of the ``optimal'' decoder \eqref{decoder_exact} may not be reached at a specific point. However, as mentioned in Property \ref{prop:sigma_n_closed}, this additive constant can be dropped in the noiseless case provided $\Sigma+\N$ is a closed set. One can then wonder if there exists a similar condition (e.g., a sort of local compactness property) in the noisy case for which one can drop the constant $\delta$ and get a more usual instance optimality result.

\paragraph{Compressed graphical models.}

As has been mentioned in Section \ref{subsubsec:limits_dimred}, the case where $\Sigma$ is the set of symmetric definite positive square matrices with sparse inverse is related to high-dimensional Gaussian graphical models. In Lemma \ref{lem:sdp_mat}, we showed this type of models fits in our theory since we could apply Theorem \ref{thm:l2l2_io} in this case, proving the impossibility of $\ell^2/\ell^2$ IOP in a dimension-reduction case. Yet, as for other signal models, can Gaussian graphical models satisfy some IOP/NSP with different norms in a compressive framework?

\paragraph{Guarantees for signal-space reconstructions and more.} 

When $\bD$ is a redundant dictionary of size $d\times n$ and the signals of interest are vectors of the form $\bz=\bD\bx$, where $\bx$ is a sparse vector, traditional reconstruction guarantees from $\by = \bM \bz$ assume the RIP on the matrix $\bM\bD$. This is often too restrictive: for example when $\bD$ has strongly correlated columns, failure to identify $\bx$ from $\by$ does not necessarily prevent one from correctly estimating $\bz$. Recent work on {\em signal-space} algorithms  \cite{DavenportNW13} has shown that the $\bD$-RIP assumption on $\bM$ is in fact sufficient.  

The framework presented in this paper offers two ways to approach this setting:
\begin{itemize}
\item Considering $\Sigma=\Sigma_k$ as the set of $k$-sparse vectors of dimension $n$ and $\bA=\bD$, the upper bound on the reconstruction error is of the form $d_{E}(\bx,\Sigma_k)$. Signal-space guarantees can be envisioned by choosing a metric $\|\cdot\|_{E} =  \|\bD \cdot\|$.
\item Considering $\Sigma=\bD\Sigma_k$ as the set of $d$-dimensional vectors that have a $k$-sparse representation in the dictionary $\bD$ and $\bA=\bI$, the upper bound is of the form $d^{\prime}(\bz,\bD\Sigma_k)$.
\end{itemize}
In \cite{NeedellW13}, the authors propose a result similar to instance optimality by upper bounding, for a Total Variation decoder, the reconstruction error of an image $\bX$ from compressive measurement by a quantity involving $d_1(\nabla\bX,\Sigma_k)$, where $\nabla$ is the gradient operator, $\Sigma_k$ the $k$-sparse union of subspaces (in the gradient space) and $d_1$ is the $\ell^1$ distance. This quantity is therefore the distance between the gradient of the image and the $k$-sparse vectors model. Can such a bound be interpreted in our framework, and possibly be generalized to other types of signals?

\paragraph{Task-oriented decoders versus general purpose decoders.}

We already mentioned two very different application set-ups, in medical imaging and audio source separation, where only parts of the original signals need to be recovered. One can think of other, more dramatic, cases where only task-oriented linear features should be reconstructed. One such situation is met
in current image classification work-flows. Indeed, most recent state-of-art image classification methods rely on very high-dimensional image representation (e.g., so called Fisher vectors, of dimension ranging from 10,000 to 200,000) and conduct supervised learning on such labeled signals by means of linear SVMs \cite{Sanchez13}. Not only this approach yields top-ranking performance in terms of classification accuracy on challenging image classification benchmarks, but it also permits very large scale learning thanks to the low complexity of linear SVM training and its efficient implementations, e.g., with stochastic gradient descent. For each visual category to recognize, a linear classifier $\mbf{w}$ is learned, which associates to an input image with representation $\bx$ the score $\mbf{w}^T\bx$. The single or multiple labels that are finally assigned to $\bx$ by the system depend on the scores provided by all trained classifiers (typically from 10 to 100), hence on a vector of the form $\bA\bx$, where each row of $\bA$ is one one-vs-all linear SVM. In this set-up, the operator $\bA$ implies a dramatic dimension reduction. For very large scale problems of this type, storing and manipulating original image signatures in the database can become intractable. The theoretical framework proposed in this paper might help designing new solutions to this problem in the future. In particular, it will provide tools to answer the following questions:
\begin{itemize}
\item $\bA$ being given (learned on a labeled subset of the database): can one design a compressive measurement operator $\bM$ such that the ``classifiers'' scores can be recovered directly from the compressed image signature $\bM\bx$, hence avoiding the prior reconstruction of the high-dimensional signal $\bx$? 
\item $\bM$ being given (``legacy'' compressed storing of image signatures): what are the linear classifier collections that can be precisely emulated in the compressed domain thanks to a good decoder $\Delta$?   
\end{itemize}       
Note that this classification-oriented set-up might call for a specific norm $\gnorm{.}$ on the output of a linear score bank. 

Another important domain of application that might benefit from both aspects (general purpose and task-oriented) of our work is data analysis under privacy constraints. Two scenarii can be envisioned, where our framework could help decide whether or not such constraints are compatible with the analysis of interest:
\begin{itemize}
\item \emph{General purpose scenario}: given a linear measurement operator $\bM$ of interest for further analysis, can one guarantee that there is no decoder permitting good enough recovery of original signals? 
\item \emph{Task-oriented scenario}: the operator $\bM$ serving as a means to obfuscate original signals such that critical information can't be recovered, let's consider a specific analysis task on original signals requiring the application of the feature extractor $\bA$. Can this task be implemented on obfuscated signals instead, via a good decoder $\Delta$, hence in a privacy-preserving fashion?        
\end{itemize}

\paragraph{Worst case versus average case instance optimality.}

The raw concept of Instance Optimality has a major drawback: the uniformity of the bound may impose, in some settings, a large global instance optimality constant whereas the inverse problem is well posed for the vast majority of signals. Let's consider the example depicted in Figure \ref{fig:io_limit}, where the signal space $E$ is of dimension 2, the signal model $\Sigma$ is a point cloud mostly concentrated along the line $\mathcal{D}$ and the measurement operator $\bM$ is the orthogonal projection on $D$. The figure depicts the ratio (approximation error)/(distance to model) for each $\bx\in\r^2$. The optimal constant, which is the supremum of these ratios, is infinite: the ratio actually goes to infinity in the vicinity of the point $p$. However, for the vast majority of vectors, the ratio is rather low (the blue section covers most of the space).

\begin{figure}[t]
\centering
\begin{minipage}{.45\linewidth}
\begin{tikzpicture}
\draw[->,opacity=.4] (-2.5,0)--(2.5,0);
\draw (0,0) node{\textbullet};
\draw (0,0.8) node{\textbullet};
\draw (.8,0) node{\textbullet};
\draw (1.6,0) node{\textbullet};
\draw (-.8,0) node{\textbullet};
\draw (-1.6,0) node{\textbullet};
\draw (2,-.5) node{$\mathcal{D}$};
\draw (0.25,1) node{$p$};
\end{tikzpicture}
\end{minipage}
\begin{minipage}{.45\linewidth}
\includegraphics[scale=.5]{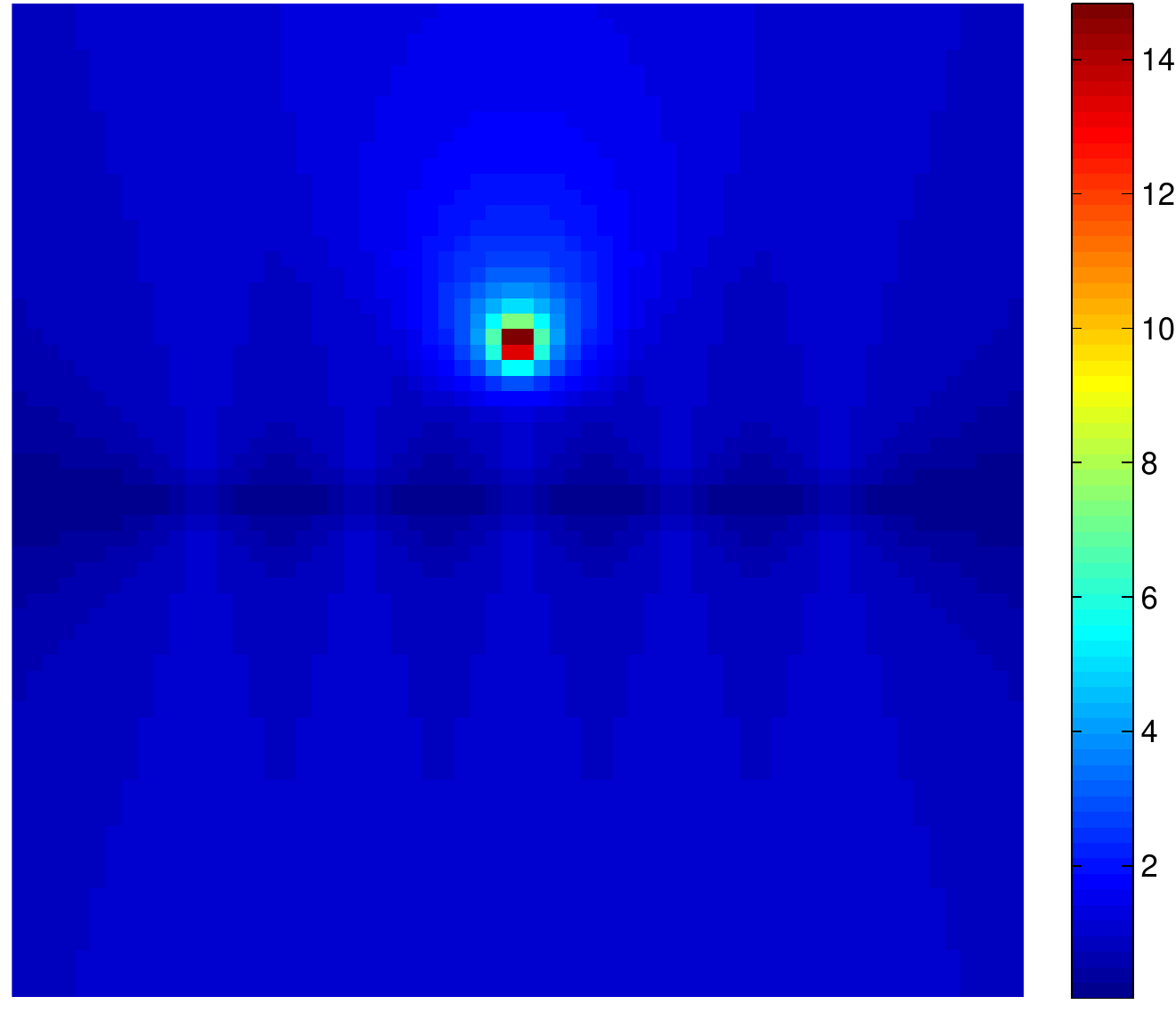}
\end{minipage}

\caption{Drawback of uniform instance optimality in a simple case: the model $\Sigma$ (\emph{Left}) is the set of black points including those on $D$ and the point $p$ and the operator $\bM$ is the 1-dimensional orthogonal projection on the horizontal axis. If we choose $\Delta$ as the pseudo-inverse of $\bM$, the depicted IO ratio (\emph{Right}) is low on most of the space, but the uniform constant is infinite.}
\label{fig:io_limit}
\end{figure}

An interesting outlook to circumvent this pessimistic ``worst-case'' phenomenon is to consider a probabilistic formulation of instance optimality, as in \cite{Cohen09compressedsensing}: given $\Omega$ a probability space with probability measure $P$, and considering $\bM$ as a random variable on $\Omega$, is there a decoder  $\Delta(.|\bM)$ (which computes an estimate given the observation {\em and} the particular draw of the measurement operator $\bM$) such that  for any $\bx\in E$, the instance optimality inequality
\begin{equation}
\nx{\bx-\Delta(\bM\bx|\bM)}\leq C d_E(\bx,\Sigma)
\end{equation}
holds with high probability on the drawing of $\bM$?  A particular challenge would be to understand in which dimension reduction scenarii there exists both a probability measure and a decoder with the above property.  Another possible formulation of probabilistic instance optimality is to define a probability distribution on the signal space and to upper bound the average reconstruction error of the vectors, as in \cite{YuSapiro2011}.

\section*{Acknoledgements}

This work was supported in part by the European Research Council,
PLEASE project (ERC-StG-2011-277906). The authors also wants to thank the anonymous reviewers for their remarks about instance optimality in infinite dimensions and for providing the example given in Section \ref{sec:sparse_hilbert}.

\bibliographystyle{plain}
\bibliography{egbib}

\begin{thebibliography}{10}

\bibitem{Ach01}
D.~Achlioptas.
\newblock Database-friendly random projections.
\newblock In {\em Proceedings of the twentieth ACM SIGMOD-SIGACT-SIGART
  symposium on Principles of database systems}, pages 274--281, 2001.

\bibitem{Adcock12b}
B.~Adcock and A.~C. Hansen.
\newblock Generalized sampling and infinite-dimensional compressed sensing.
\newblock {\em DAMTP Tech. Rep.}, 2011.

\bibitem{Adcock12a}
B.~Adcock and A.~C. Hansen.
\newblock A generalized sampling theorem for stable reconstructions in
  arbitrary bases.
\newblock {\em J. Fourier Anal. Appl.}, 18(4):685--716, {N}ovember 2012.

\bibitem{BenAdcock:2013va}
B.~Adcock, A.~C. Hansen, C.~Poon, and B.~Roman.
\newblock {Breaking the coherence barrier: A new theory for compressed
  sensing}.
\newblock {\em arXiv}, pages 1--44, February 2014.

\bibitem{BaraniukCW10}
R.~G. Baraniuk, V.~Cevher, and M.~B. Wakin.
\newblock Low-dimensional models for dimensionality reduction and signal
  recovery: A geometric perspective.
\newblock {\em Proceedings of the IEEE}, 98(6):959--971, 2010.

\bibitem{Baraniuk06randomprojections}
R.~G. Baraniuk and M.~B. Wakin.
\newblock Random projections of smooth manifolds.
\newblock In {\em Foundations of Computational Mathematics}, pages 941--944,
  2006.

\bibitem{Blumensath11}
T.~Blumensath.
\newblock Sampling and reconstructing signals from a union of linear subspaces.
\newblock {\em IEEE Transactions on Information Theory}, 57(7):4660--4671,
  2011.

\bibitem{BlumensathD09}
T.~Blumensath and M.~E. Davies.
\newblock Sampling theorems for signals from the union of finite-dimensional
  linear subspaces.
\newblock {\em IEEE Transactions on Information Theory}, 55(4):1872--1882,
  2009.

\bibitem{Bourrier:2013wl}
A.~Bourrier, R.~Gribonval, and P.~P{\'e}rez.
\newblock {Compressive Gaussian Mixture Estimation}.
\newblock In {\em 2013 IEEE International Conference on Acoustics, Speech and
  Signal Processing}, March 2013.

\bibitem{Candes08}
E.~J. {C}and{\`e}s.
\newblock {T}he restricted isometry property and its implications for
  compressed sensing.
\newblock {\em {C}. {R}. {A}cad. {S}ci. {P}aris {S}'er. {I} {M}ath.},
  346:589--592, 2008.

\bibitem{candes2011csdico}
E.~J. Cand\`es, Y.~C. Eldar, D.~Needell, and P.~Randall.
\newblock Compressed sensing with coherent and redundant dictionaries.
\newblock {\em Applied and Computational Harmonic Analysis}, 31(1):59--73,
  2011.

\bibitem{CandesLMW11}
E.~J. Cand{\`e}s, X.~Li, Y.~Ma, and J.~Wright.
\newblock Robust principal component analysis?
\newblock {\em J. ACM}, 58(3):11, 2011.

\bibitem{Candes2011}
E.~J. Cand\`es and Y.~Plan.
\newblock Tight oracle inequalities for low-rank matrix recovery from a minimal
  number of noisy random measurements.
\newblock {\em IEEE Trans. Inf. Theor.}, 57(4):2342--2359, April 2011.

\bibitem{CandesRT06}
E.~J. Cand{\`e}s, J.~K. Romberg, and T.~Tao.
\newblock Robust uncertainty principles: exact signal reconstruction from
  highly incomplete frequency information.
\newblock {\em IEEE Transactions on Information Theory}, 52(2):489--509, 2006.

\bibitem{CandesSV11}
E.~J. Cand{\`e}s, T.~Strohmer, and V.~Voroninski.
\newblock Phaselift: Exact and stable signal recovery from magnitude
  measurements via convex programming.
\newblock {\em CoRR}, abs/1109.4499, 2011.

\bibitem{CandesT06}
E.~J. Cand{\`e}s and T.~Tao.
\newblock Near-optimal signal recovery from random projections: Universal
  encoding strategies?
\newblock {\em IEEE Transactions on Information Theory}, 52(12):5406--5425,
  2006.

\bibitem{ChandrasekaranRPW12}
V.~Chandrasekaran, B.~Recht, P.A. Parrilo, and A.S. Willsky.
\newblock The convex geometry of linear inverse problems.
\newblock {\em Foundations of Computational Mathematics}, 12(6):805--849, 2012.

\bibitem{Cohen09compressedsensing}
A.~Cohen, W.~Dahmen, and R.~Devore.
\newblock Compressed sensing and best k-term approximation.
\newblock {\em J. Amer. Math. Soc}, pages 211--231, 2009.

\bibitem{DavenportNW13}
M.~A. Davenport, D.~Needell, and M.~B. Wakin.
\newblock Signal space cosamp for sparse recovery with redundant dictionaries.
\newblock {\em IEEE Transactions on Information Theory}, 59(10):6820--6829,
  2013.

\bibitem{DeVT96}
R.~A. DeVore and V.~N. Temlyakov.
\newblock {Some remarks on greedy algorithms}.
\newblock {\em Adv. Comp. Math.}, 5(2-3):173--187, 1996.

\bibitem{Donoho06}
D.~L. Donoho.
\newblock Compressed sensing.
\newblock {\em IEEE Trans. Inform. Theory}, 52:1289--1306, 2006.

\bibitem{EftekhariW13}
A.~Eftekhari and M.~B. Wakin.
\newblock New analysis of manifold embeddings and signal recovery from
  compressive measurements.
\newblock {\em CoRR}, abs/1306.4748, 2013.

\bibitem{EldarKB10}
Y.~C. Eldar, P.~Kuppinger, and H.~B{\"o}lcskei.
\newblock Block-sparse signals: uncertainty relations and efficient recovery.
\newblock {\em IEEE Transactions on Signal Processing}, 58(6):3042--3054, 2010.

\bibitem{EnglHN96}
H.~W. Engl, M.~Hanke, and A.~Neubbauer.
\newblock {\em Regularization of Inverse Problems}.
\newblock Springer, 1996.

\bibitem{Fou11}
S.~Foucart.
\newblock Hard thresholding pursuit: An algorithm for compressive sensing.
\newblock {\em SIAM J. Numerical Analysis}, 49:2543--2563, 2011.

\bibitem{FoucartR13}
{S}. {F}oucart and H.~{R}auhut.
\newblock {\em {A} mathematical introduction to compressive sensing}.
\newblock {A}pplied and {N}umerical {H}armonic {A}nalysis. {S}pringer, 2013.

\bibitem{Giryes2012}
R.~Giryes, S.~Nam, M.~Elad, R.~Gribonval, and M.~E. Davies.
\newblock {Greedy-Like Algorithms for the Cosparse Analysis Model}.
\newblock partially funded by the ERC, PLEASE project, ERC-2011-StG-277906,
  January 2013.

\bibitem{gribonval07:_highl}
R.~Gribonval and M.~Nielsen.
\newblock {Highly sparse representations from dictionaries are unique and
  independent of the sparseness measure}.
\newblock {\em Appl. Comp. Harm. Anal.}, 22(3):335--355, 2007.

\bibitem{johnson84extensionslipschitz}
W.~Johnson and J.~Lindenstrauss.
\newblock Extensions of {L}ipschitz mappings into a {H}ilbert space.
\newblock In {\em Conference in modern analysis and probability (New Haven,
  Conn., 1982)}, volume~26 of {\em Contemporary Mathematics}, pages 189--206.
  American Mathematical Society, 1984.

\bibitem{Nam2013}
S.~Nam, M.~E. Davies, M.~Elad, and R.~Gribonval.
\newblock {The Cosparse Analysis Model and Algorithms}.
\newblock {\em {Applied and Computational Harmonic Analysis}}, 34(1):30--56,
  2013.

\bibitem{NeedellW13}
D.~Needell and R.~Ward.
\newblock Stable image reconstruction using total variation minimization.
\newblock {\em SIAM J. Imaging Sciences}, 6(2):1035--1058, 2013.

\bibitem{OhlssonYDS11}
H.~Ohlsson, A.~Y. Yang, and S.~S. Sastry.
\newblock Compressive phase retrieval from squared output measurements via
  semidefinite programming.
\newblock {\em CoRR}, abs/1111.6323, 2011.

\bibitem{OymakJFEH12}
S.~Oymak, A.~Jalali, M.~Fazel, Y.~C. Eldar, and B.~Hassibi.
\newblock Simultaneously structured models with application to sparse and
  low-rank matrices.
\newblock {\em CoRR}, abs/1212.3753, 2012.

\bibitem{PelegGD13}
T.~Peleg, R.~Gribonval, and M.E. Davies.
\newblock Compressed sensing and best approximation from union of subspaces:
  Beyond dictionaries.
\newblock In {\em EUSIPCO}, 2013.

\bibitem{RauhutSV08}
H.~Rauhut, K.~Schnass, and P.~Vandergheynst.
\newblock Compressed sensing and redundant dictionaries.
\newblock {\em IEEE Transactions on Information Theory}, 54(5):2210--2219,
  2008.

\bibitem{RechtFP10}
B.~Recht, M.~Fazel, and P.~A. Parrilo.
\newblock Guaranteed minimum-rank solutions of linear matrix equations via
  nuclear norm minimization.
\newblock {\em SIAM Review}, 52(3):471--501, 2010.

\bibitem{Robinson10}
James~C Robinson.
\newblock {\em {Dimensions, Embeddings, and Attractors}}.
\newblock Cambridge Tracts in Mathematics. Cambridge University Press, Leiden,
  2010.

\bibitem{Sanchez13}
J.~Sanchez, F.~Perronnin, Th. Mensink, and J.~Verbeek.
\newblock Image classification with the fisher vector: {T}heory and practice.
\newblock {\em Int. J. Computer Vision}, 105(3):22--245, 2013.

\bibitem{Wojtaszczyk10}
P.~Wojtaszczyk.
\newblock Stability and instance optimality for gaussian measurements in
  compressed sensing.
\newblock {\em Foundations of Computational Mathematics}, 10(1):1--13, 2010.

\bibitem{YuSapiro2011}
G.~Yu and G.~Sapiro.
\newblock Statistical compressed sensing of gaussian mixture models.
\newblock {\em IEEE Transactions on Signal Processing}, 59(12):5842--5858,
  2011.

\bibitem{Yuan2010}
M.~Yuan.
\newblock High dimensional inverse covariance matrix estimation via linear
  programming.
\newblock {\em Journal of Machine Learning Research}, 11:2261--2286, 2010.

\bibitem{Yuan2007}
M.~Yuan and Y.~Lin.
\newblock {Model selection and estimation in the Gaussian graphical model}.
\newblock {\em Biometrika}, 2007.

\bibitem{ZhouLWCM10}
Z.~Zhou, X.~Li, J.~Wright, E.~J. Cand{\`e}s, and Y.~Ma.
\newblock Stable principal component pursuit.
\newblock In {\em ISIT}, pages 1518--1522, 2010.

\end{thebibliography}

\appendix

\section{Well-posedness of the finite UoS decoder}
\label{sec:sparse_decoder}

In this section, we will prove that if $\Sigma$ is a finite union of subspaces in $\rn$ and $\|.\|$ a norm on $\rn$, then the quantity $\arg\min_{\bz\in(\bx+\N)}{d(\bz,\Sigma)}$, where $d$ is the distance relative to $\|.\|$, is defined for all $\bx\in\rn$.

Let's first prove the following lemma:

\begin{lemma}
\label{lem:min_dist_spaces}
Let $V$ and $W$ be two subspaces of $\rn$ and $\|.\|$ a norm on $\rn$. Then $\forall\bx\in\rn, \exists\by\in (\bx+V)$ such that $d(\by,W)=d(\bx+V,W)$, where $d$ is the distance derived from $\|.\|$.
\end{lemma}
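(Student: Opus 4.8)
The plan is to reduce the minimization of $\bz\mapsto d(\bz,W)$ over the non-compact affine set $\bx+V$ to a distance-to-a-subspace problem, which is classically attained in finite dimensions. The first step is the change-of-variables identity
\[
d(\bx+V,W)=\inf_{\bv\in V}\inf_{\bw\in W}\norm{\bx+\bv-\bw}=\inf_{\bu\in V+W}\norm{\bx-\bu}=d(\bx,V+W),
\]
where the middle equality holds because, as $(\bv,\bw)$ ranges over $V\times W$, the vector $\bu=\bw-\bv$ ranges exactly over $W-V=V+W$ (using $-V=V$), and $\norm{\bx+\bv-\bw}=\norm{\bx-\bu}$.

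Next I would argue that $d(\bx,V+W)$ is attained. Since $V$ and $W$ are finite-dimensional, $V+W$ is a finite-dimensional, hence closed, subspace of $\rn$. Taking a minimizing sequence $(\bu_k)\subset V+W$, the bound $\norm{\bu_k}\le\norm{\bx}+\norm{\bx-\bu_k}$ shows it is bounded; a convergent subsequence has a limit $\bu^{*}\in V+W$ by closedness, and continuity of the norm gives $\norm{\bx-\bu^{*}}=d(\bx,V+W)$.

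It then remains to unpack the minimizer. Write $\bu^{*}=\bw^{*}-\bv^{*}$ with $\bv^{*}\in V$ and $\bw^{*}\in W$ (any such decomposition works, even though it need not be unique when $V\cap W\neq\{0\}$), and set $\by:=\bx+\bv^{*}\in\bx+V$. Then
\[
d(\by,W)\le\norm{\by-\bw^{*}}=\norm{\bx-(\bw^{*}-\bv^{*})}=\norm{\bx-\bu^{*}}=d(\bx,V+W)=d(\bx+V,W),
\]
whereas $d(\by,W)\ge d(\bx+V,W)$ holds trivially since $\by\in\bx+V$; combining the two inequalities gives the claim. This single argument also transparently covers the case $(\bx+V)\cap W\neq\emptyset$, where all quantities are zero.

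The only real subtlety — and it is a standard one — is that $\bx+V$ is not compact, so one cannot directly extract a minimizer of $\bz\mapsto d(\bz,W)$ on it; the reformulation as $d(\bx,V+W)$ is exactly what circumvents this, reducing attainment to the elementary boundedness-plus-compactness argument used above for the distance to a closed subspace. The whole argument stays within finite dimensions, in line with the setting of this appendix.
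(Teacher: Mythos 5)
Your proof is correct and follows essentially the same route as the paper's: both reduce the problem to the attainment of $d(\bx,V+W)$ on the finite-dimensional (hence closed) subspace $V+W$ and then decompose the minimizer as a sum of an element of $V$ and an element of $W$. The only cosmetic difference is that the paper gets existence of the minimizer via a coercivity argument restricting $\Phi(\bu)=\norm{\bu-\bx}$ to a compact ball, whereas you use a bounded minimizing sequence plus closedness; these are interchangeable, and your explicit two-sided inequality at the end is if anything slightly more careful than the paper's final line.
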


\begin{proof}
Let $\Phi$ be defined on $V+W$ by $\Phi(\bu)=\|\bu-\bx\|$. Since $\Phi(\bu)\geq \|\bu\|-\|\bx\|$, we have $\lim_{\|\bu\|\rightarrow +\infty}\Phi(\bu)=+\infty$, so that $\exists M>0$ such that $\|\bu\|>M\Rightarrow \Phi(\bu)\geq \|\bx\|$. The set $B=\{\bu\in V+W, \|\bu\|\leq M\}$ is a closed ball of $V+W$ and is thus a compact. Since $\Phi$ is continuous, $\Phi$ has a minimizer $\bv$ on $B$. $0\in B$, so that $\Phi(0)=\|\bx\|\geq \Phi(\bv)$. For all $\bu$ such that $\|\bu\|>M$, we have $\Phi(\bu)\geq \|\bx\|\geq \Phi(\bv)$, so that $\bv$ is a global minimizer of $\Phi$.

We therefore have $\forall (\bu,\bw)\in V\times W, \|\bx-\bv\|\leq \|\bx-(\bu+\bw)\|$. The vector $\bv$ can be written $\bf+\bg$ with $\bf\in V$ and $\bg\in W$, so that the vector $\by = \bx-\bf$, which belongs to $\bx +V$, satisfies $d(\bx-\bf,W)=\|(\bx-\bf)-\bg\|=d(\bx,V+W)=d(\bx+V,W)$, which proves the result.

\end{proof}

Let $\Sigma=\underset{i\in\llbracket 1,p\rrbracket }{\cup}V_i$, where $V_i$ are subspaces of $\rn$. Lemma \ref{lem:min_dist_spaces} applied to $V=\N$ and $W=V_i$ ensures the existence of $\bx_i\in(\bx+\N)$ such that $d_E(\bx_i,V_i)=d_E(\bx+\N,V_i)$. Therefore, $\Delta(\bM\bx)$ can be defined as $\argmin{\{\bx_i,i\in\llbracket 1,p\rrbracket\}}{d_E(\bx_i,V_i)}$ and satisfies $d_E(\Delta(\bM\bx),\Sigma)=d_E(\bx+\N,\Sigma)$, so that the decoder $\Delta(\bM\bx)=\argmin{\bz\in(\bx+\N)}{d(\bz,\Sigma)}$ is properly defined. In particular, this applies to the decoder \eqref{decoder_initial}.

\section{Proof of Theorem \ref{thm:io_nsp}}

Let $\delta >0$ and $\dd$ and $C$ be such that \eqref{io_generalized_nonexact} holds $\forall\bx\in E$. Let $\bh\in\N$. Then $\exists \bh_0\in\sms$ such that $d_E(\bh,\bh_0)\leq d_E(\bh,\sms)+\delta$. Let $\bh_0=\bh_1 - \bh_2$ with $\bh_1,\bh_2\in\Sigma$, and $\bh_3=\bh-\bh_0$. Since $\bh\in\N$, we have:
\begin{equation}
\label{ker_relation}
\bM(\bh_1+\bh_3)=\bM\bh_2.
\end{equation}
Applying \eqref{io_generalized_nonexact} to $\bx=\bh_2\in\Sigma$ and using the fact that $\ny{0}=0$, we get:

\begin{equation}
\label{h2_relation}
\nx{\bA\bh_2-\dd(\bM\bh_2)}\leq\delta.
\end{equation}
Let's now find an upper bound for $\nx{\bA\bh}$:
\begin{align}
\label{upper_bound_first}
\nx{\bA\bh}&=\nx{\bA(\bh_1-\bh_2+\bh_3)}\nonumber\\
&=\nx{\bA(\bh_1+\bh_3)-\dd(\bM(\bh_1+\bh_3))-\bA\bh_2+\dd(\bM(\bh_1+\bh_3))}\nonumber\\
&\leq \nx{\bA(\bh_1+\bh_3)-\dd(\bM(\bh_1+\bh_3))}+\nx{\bA\bh_2-\dd(\bM(\bh_1+\bh_3))},
\end{align}
where we have used \eqref{nx_symmetry} and \eqref{nx_triangle} for the last inequality.
Combining (\ref{ker_relation}) and (\ref{h2_relation}), we get that:
\begin{equation}
\label{second_term}
\nx{\bA\bh_2-\dd(\bM(\bh_1+\bh_3))}\leq\delta.
\end{equation}
Applying \eqref{io_generalized_nonexact} to $\bx=\bh_1+\bh_3$, we get:
\begin{align}
\label{first_term}
\nx{\bA(\bh_1+\bh_3)-\dd(\bM(\bh_1+\bh_3))}&\leq C\dy{\bh_1+\bh_3,\Sigma}+\delta
\leq C\ny{\bh_3}+\delta\nonumber\\
&= C\dy{\bh,\bh_0}+\delta\leq C\dy{\bh,\sms}+(C+1)\delta.
\end{align}
Combining \eqref{upper_bound_first}, \eqref{second_term} and \eqref{first_term} gives:
\begin{equation}
\label{nsp_eps}
\nx{\bA\bh}\leq C\dy{\bh,\sms}+(C+2)\delta.
\end{equation}
\eqref{nsp_eps} is valid for all $\delta >0$, so it is valid for $\delta=0$. This gives us the property \eqref{nsp_generalized} with $D=C$.

\section{Proof of Theorem \ref{thm:nsp_io}}

Let's first assume that \eqref{exact_decoder} holds and define the following decoder on $F$:
\begin{equation}
\label{decoder_exact}
\deltap(\bM\bx)=\argmin{\bz\in(\bx+\N)}{d_E(\bz,\Sigma)}.
\end{equation}
Note that the decoder is well defined, since $\bM\bx_1=\bM\bx_2\Rightarrow \bx_1+\N=\bx_2+\N$.

For $\bx\in E$, we have $\bx-\deltap(\bM\bx)\in\N$, so that \eqref{nsp_generalized} yields:
\begin{align}
\nx{\bA\bx-\bA\deltap(\bM\bx)}&\leq Dd_E(\bx-\deltap(\bM\bx),\sms)\nonumber\\
&\leq Dd_E(\bx,\Sigma) + Dd_E(\deltap(\bM\bx),\Sigma)\nonumber\\
&\leq 2Dd_E(\bx,\Sigma),
\end{align}
where we have used \eqref{nx_triangle} for the second inequality.
The last inequality comes from \eqref{decoder_exact}, which yields $d_E(\deltap(\bM\bx),\Sigma)\leq d_E(\bx,\Sigma)$. Therefore, by posing $\d=\bA\deltap$, we get \eqref{io_generalized}.

\bigskip

Let's return to the general case, and consider $\nu>0$. We define the following decoder on $F$:
\begin{equation}
\label{decoder_approx}
\deltap_{\nu}(\bM\bx)\in\{\bu\in (\bx+\N)| d_E(\bu,\Sigma)\leq d_E(\bx+\N,\Sigma)+\nu\}.
\end{equation}
Note that this set may not contain a unique element and thus this definition relies on the axiom of choice.

For $\bx\in E$, we have again $\bx-\deltap_{\nu}(\bM\bx)\in\N$, so that by \eqref{nsp_generalized}:
\begin{align}
\nx{\bA\bx-\bA\deltap_{\nu}(\bM\bx)}&\leq Dd_E(\bx-\deltap_{\nu}(\bM\bx),\sms)\nonumber\\
&\leq Dd_E(\bx,\Sigma) + Dd_E(\deltap_{\nu}(\bM\bx),\Sigma)\nonumber\\
&\leq 2Dd_E(\bx,\Sigma)+D\nu,
\end{align}
where we have used \eqref{nx_triangle} again for the second inequality.
The last inequality comes from \eqref{decoder_approx}, which yields $d_E(\deltap_{\nu}(\bM\bx),\Sigma)\leq d_E(\bx,\Sigma)+\nu$. Therefore, by posing $\dd=\bA\deltap_{\delta/D}$, we get \eqref{io_generalized_nonexact}.
\section{Proof of Proposition \ref{prop:sigma_n_closed}}
Let $\bx\in E$ and $\nu >0$. If $0=d_E(\bx+\N,\Sigma)=d_E(\bx,\Sigma+\N)$, then since $\Sigma+\N$ is a closed set, $\bx\in\Sigma+\N$, and therefore $(\bx+\N)\cap \Sigma\ne\emptyset$. In this case, we define $\deltap_{\nu}(\bM\bx)$ as any element of $(\bx+\N)\cap \Sigma$.

If $d_E(\bx+\N,\Sigma)>0$, then we define $\deltap_{\nu}(\bM\bx)\in\{\bu\in (\bx+\N)| d_E(\bu,\Sigma)\leq (1+\nu)d_E(\bx+\N,\Sigma)\}$. This provides a consistent definition of $\deltap_{\nu}$.

Let's remark that for all $\bx\in E$, $d_E(\deltap_{\nu}(\bM\bx),\Sigma)\leq (1+\nu)d_E(\bx+\N,\Sigma)\leq (1+\nu)d_E(\bx,\Sigma)$.

For $\bx\in E$, $\bx-\deltap_{\nu}(\bM\bx)\in\N$, so that \eqref{nsp_generalized} gives:
\begin{align}
\nx{\bA\bx-\bA\deltap_{\nu}(\bM\bx)}&\leq Dd_E(\bx-\deltap_{\nu}(\bM\bx),\sms)\nonumber\\
&\leq Dd_E(\bx,\Sigma) + Dd_E(\deltap_{\nu}(\bM\bx),\Sigma)\nonumber\\
&\leq (2+\nu)Dd_E(\bx,\Sigma).
\end{align}
Defining $\dd=\bA\deltap_{\nu}$, we get the desired result.

\section{Proof of Theorem~\ref{thm:io_nsp_noise} and Theorem~\ref{thm:io_nsp_noise_aware}}



Let's first remark that applying~\eqref{io_gen_robust} (resp.~\eqref{io_gen_robust_noise_aware}) with $\bx=\bz\in\Sigma$ and $\be=0$ yields $\nx{\bA\bz-\dd(\bM\bz)}\leq\delta$ and $\nx{\bA \bz-\d_{\delta,\epsilon}(\bM\bz)} \leq C_{2}\epsilon+\delta$ for any $\bz\in\Sigma$, $\epsilon\geq0$, where we have used the fact that $\nz{0}=0$.

Let $\bh\in E$ and $\bz\in\Sigma$. We apply~\eqref{io_gen_robust} (resp.~\eqref{io_gen_robust_noise_aware}) with $\bx=\bz-\bh$, $\be=\bM\bh$, and $\epsilon = \nz{\bM\bh}$, which yields:
\begin{eqnarray*}
\nx{\bA\bz-\bA\bh-\dd(\bM\bz)}&\leq& C_1\dy{\bz-\bh,\Sigma} + C_2\nz{\bM\bh}+\delta.\\
\nx{\bA\bz-\bA\bh-\d_{\delta,\epsilon}(\bM\bz)}&\leq& C_1\dy{\bz-\bh,\Sigma} + C_2\nz{\bM\bh}+\delta.\\
\end{eqnarray*}

Let's remark that \eqref{nx_symmetry} and \eqref{nx_triangle} imply $\nx{\by}\leq \nx{\bx-\by}+\nx{\bx}$ for all $\bx,\by\in G$. Therefore,
since $\nx{\bA\bz-\dd(\bM\bz)}\leq\delta$ (resp.  $\nx{\bA\bz-\d_{\delta,\epsilon}(\bM\bz)}\leq C_2\nz{\bM\bh}+\delta$), we have:
\begin{eqnarray*}
\nx{\bA\bh}&\leq& C_1\dy{\bz-\bh,\Sigma} + C_2\nz{\bM\bh}+2\delta.\\
(\textrm{respectively}\ \nx{\bA\bh}&\leq& C_1\dy{\bz-\bh,\Sigma} + 2C_2\nz{\bM\bh}+2\delta.)
\end{eqnarray*}

This last inequality is valid for all $\bz\in\Sigma$, therefore~\eqref{io_gen_robust} implies:
\begin{align}
\nx{\bA\bh}&\leq C_1\inf{\bz\in\Sigma}{\dy{\bz-\bh,\Sigma}}+C_2\nz{\bM\bh}+2\delta \nonumber \\
&= C_1\inf{\bz\in\Sigma}{\inf{\bu\in\Sigma}{\ny{\bz-\bh-\bu}}}+C_2\nz{\bM\bh}+2\delta \nonumber \\
&= C_1\dy{\bh,\sms}+C_2\nz{\bM\bh}+2\delta,\label{eq1}
\end{align}
where we have used \eqref{nx_symmetry} for the last inequality. Similarly,~\eqref{io_gen_robust_noise_aware} implies
\begin{align}
\nx{\bA\bh}&\leq C_1\dy{\bh,\sms}+2C_2\nz{\bM\bh}+2\delta.\label{eq2}
\end{align}
We conclude by using the fact that~\eqref{eq1} and~\eqref{eq2} hold for all $\delta>0$.

\section{Proof of Theorem \ref{thm:nsp_io_noise}}

Let's suppose \eqref{robust_nsp} and define for $\delta>0$ the decoder $\Delta_{\delta}^{\prime}:F\rightarrow E$ such that $\forall \by\in F:$
\begin{equation}
\label{decoder_robust}
D_1\dy{\dd^{\prime}(\by),\Sigma}+D_2\dz{\bM\dd^{\prime}(\by),\by} \leq \inf{\bu\in E}{\left[D_1\dy{\bu,\Sigma}+D_2\dz{\bM\bu,\by}\right]}+\delta.
\end{equation}
Let's prove that this decoder meets property \eqref{io_gen_robust}.

Let $\bx\in E$ and $\be\in F$. Applying \eqref{robust_nsp} with $\bh=\bx-\dd^{\prime}(\bM\bx+\be)$, we get:
\begin{align}
\nx{\bA(\bx-\dd^{\prime}(\bM\bx+\be))}&\leq D_1\dy{\bx-\dd^{\prime}(\bM\bx+\be),\sms}+D_2\nz{\bM(\bx-\dd^{\prime}(\bM\bx+\be))}\nonumber\\
&\leq D_1\dy{\bx,\Sigma}+D_1\dy{\dd^{\prime}(\bM\bx+\be),\Sigma}\nonumber\\
&\quad +D_2\dz{\bM\dd^{\prime}(\bM\bx+\be),\bM\bx+\be}+D_2\nz{\be}\nonumber\\
&\leq 2D_1\dy{\bx,\Sigma}+2D_2\nz{\be}+\delta,
\end{align}
where we have used \eqref{nx_symmetry} and \eqref{nx_triangle} for the second inequality and the last inequality is a consequence of \eqref{decoder_robust}.

Posing $\dd=\bA\dd^{\prime}$ proves \eqref{io_gen_robust} with $C_1=2D_1$ and $C_2=2D_2$.

\section{Proof of Lemma \ref{lem:nsp_cone}}

The two equivalences are very similar to prove, so that we will only prove the first. \eqref{nsp_cone} $\Rightarrow$ \eqref{nsp_sms} is obvious. Let's now suppose \eqref{nsp_sms}, so that:
\begin{equation}
\forall \bh\in\N,\forall\bz\in\sms, \nx{\bh}\leq D\ny{\bh-\bz}.
\end{equation}
By homogeneity, we also have:
\begin{equation}
\forall \lambda\in\r^*, \forall \bh\in\N, \forall\bz\in\sms, \nx{\lambda\bh}\leq D\ny{\lambda\bh-\bz},
\end{equation}
so that:
\begin{equation}
\forall \lambda\in\r^*, \forall \bh\in\N, \forall\bz\in\sms, \nx{\bh}\leq D\ny{\bh-\bz/\lambda}.
\end{equation}
This last inequality yields \eqref{nsp_cone}.

\section{Proof of Theorem \ref{thm:l2l2_io}}

Let's note $\bMt= \bM_{|V}$ and $\Nt=\N\cap V$. Let $m$ be the dimension of the range of $\bMt$, so that $\Nt$ is of dimension $n-m$. Let $\bh_1,\ldots,\bh_{n-m}$ be an orthonormal basis of $\Nt$. We have:
\begin{equation}
\label{dim_upper_bound}
n-m=\sum_{j=1}^{n-m}\ntwo{\bh_j}^2\leq \frac{1}{K}\sum_{j=1}^{n-m}\sum_{i=1}^n\ps{\bh_j}{\bz_i}^2.
\end{equation}
Using \eqref{def_l2nsp_constant}, we get that, for all $\bh\in\N$ and unit-norm vector $\bz\in\sms$, $\ps{\bh}{\bz}^2\leq \left(1-\frac{1}{D_*^2}\right)\ntwo{\bh}^2$. If we denote by $\pnt$ the orthogonal projection on $\Nt$ and apply this inequality with $\bh=\pnt(\bz_i)=\sum_{j=1}^{n-m} \ps{\bh_j}{\bz_i}\bh_j$ and $\bz=\bz_i$, we get that $\ntwo{\pnt(\bz_i)}^4\leq \left(1-\frac{1}{D_*^2}\right)\ntwo{\pnt(\bz_i)}^2$, which can be simplified to $\ntwo{\pnt(\bz_i)}^2=\sum_{j=1}^{n-m}\ps{\bh_j}{\bz_i}^2\leq \left(1-\frac{1}{D_*^2}\right)$ even if $\ntwo{\pnt(\bz_i)}=0$.

Using this relation in \eqref{dim_upper_bound}, we get:
\begin{equation}
n-m\leq \frac{n}{K}\left(1-\frac{1}{D_*^2}\right),
\end{equation}
so that:
\begin{equation}
m\geq n\left(1-\frac{1}{K}\left(1-\frac{1}{D_*^2}\right)\right).
\end{equation}

We get the lower bound on $D_*^2$ by isolating it in the inequality.

\section{Proof of Theorem \ref{thm:nsp_mnorm}}

Let $\bh\in E$ and $\bz\in\sms$. We have the following inequalities:

\begin{equation}
\label{eq:upper_bound_h}
\nx{\bh} \leq \nx{\bh-\bz} + \nx{\bz} \leq \nx{\bh-\bz} + \frac{1}{\alpha}\nz{\bM\bz},
\end{equation}
where we have used the lower-RIP for the second inequality.

A similar consideration on $\bM\bz$ yields:
\begin{equation}
\label{eq:upper_bound_mz}
\nz{\bM\bz}\leq \nz{\bM(\bz-\bh)}+\nz{\bM\bh}.
\end{equation}

Substituting \eqref{eq:upper_bound_mz} into \eqref{eq:upper_bound_h}, we get:
\begin{align}
\nx{\bh} &\leq \nx{\bh-\bz}+\frac{1}{\alpha}\nz{\bM(\bh-\bz)}+\frac{1}{\alpha}\nz{\bM\bh}\nonumber\\
&= \nM{\bh-\bz} + \frac{1}{\alpha}\nz{\bM\bh}.
\end{align}

Taking the infimum of the right hand-side quantity over all $\bz\in\sms$, one gets the desired Robust NSP:
\begin{equation}
\nx{\bh}\leq \dM{\bh,\sms} + \frac{1}{\alpha}\nz{\bM\bh}.
\end{equation}

\end{document}